\documentclass[11pt,runningheads,envcountsame,orivec]{llncs}
\usepackage[a4paper,centering,margin=1in]{geometry}

\input{preamble}
\usepackage{tikz}
\usetikzlibrary{
    arrows.meta,backgrounds,calc,
    decorations,intersections,
    positioning,shapes.geometric,
}

\tikzset{
    spacing/.style={inner sep=0.8pt},
    every label/.style={thin,inner sep=0.6pt,font=\footnotesize},
    every pin/.style={thin,inner sep=1pt,font=\footnotesize},
    point/.style={draw,semithick,fill,diamond,inner sep=0pt,minimum size=5.7pt},
    round point/.style={draw,semithick,fill,circle,inner sep=0pt,minimum size=4.5pt},
    vector/.style={draw,very thick,-{Triangle[fill=white,line width=1pt,width=4.5pt,length=6.5pt]}},
    main line/.style={draw,very thick},
    curve/.style={main line,miter limit=10,vertex/.style={round point}},
    valley/.style={draw=UmiOrange,ultra thick},
    matching/.style={draw=UmiSkyblue,very thick},
    matching path/.style={draw,semithick,double=UmiSkyblue,double distance=1.6pt},
    matching point/.style={point,fill=UmiSkyblue},
    faint fill/.style={fill={#1!10!white}},
    split fill/.style 2 args={
        fill={#1},path picture={
            \path[fill={#2}] (path picture bounding box.south west) -| (path picture bounding box.north east) -- cycle;
        },
    },
    axis/.style={draw,semithick,black!60!white,-{Straight Barb}},
    axes/.style={axis,{Straight Barb}-{Straight Barb}},
    tick/.style={axis,-,every node/.style={font=\footnotesize}},
    interval/.style={
        draw,semithick,black!60!white,|-|,shorten >=-0.3pt,shorten <=-0.3pt,
        every node/.style={fill=white,font=\footnotesize},
    },
    interval e/.style={interval,xshift=10pt,every node/.append style={inner xsep=0pt,inner ysep=2pt}},
    interval h/.style={interval,yshift={#1},every node/.append style={inner xsep=2pt,inner ysep=0pt}},
    interval n/.style={interval h=10pt},interval s/.style={interval h=-10pt},
    util grey/.style={draw=black!35!white,semithick},
    util dash/.style={dash pattern=on 3pt off 2pt,dash expand off},
}

\definecolor{UmiGreen}{rgb}{0.00,0.72,0.62}
\definecolor{UmiOrange}{rgb}{0.95,0.55,0.00}
\definecolor{UmiBlue}{rgb}{0.00,0.45,0.70}
\definecolor{UmiSkyblue}{rgb}{0.35,0.70,0.90}
\definecolor{UmiYellow}{rgb}{0.95,0.90,0.20}
\definecolor{UmiVermillion}{rgb}{0.80,0.40,0.00}

\newcommand{\apxhatch}[2][1]{
    \begin{scope}
        \clip (0,0) rectangle (#1,1);

        \pgfmathparse{1/(#2+1)}\edef\k{\pgfmathresult}
        \pgfmathparse{(#1+1)*(#2+1)-1}\edef\n{\pgfmathresult}

        \foreach \i in {2,...,\n} {
            \pgfmathparse{\i-0.5}\edef\j{\pgfmathresult}
            \path[draw=black!10!white] (\j*\k-1,0) -- (\j*\k,1);
            \path[draw=black!10!white] (0,\j*\k) -- (\j*\k,0);
        }
    \end{scope}
}

\newcommand{\examplesegments}{
    \coordinate (px) at (0,3);
    \coordinate (py) at (5,5);
    \path[curve]
        (px) node[vertex,label={above left:$p_x$}] {} -- (py) node[vertex,label={above right:$p_y$}] {};

    \coordinate (qx) at (0.75,0);
    \coordinate (qy) at (6,0.75);
    \path[curve,vertex/.append style={rectangle}]
        (qx) node[vertex,label={below left:$q_x$}] {} -- (qy) node[vertex,label={below right:$q_y$}] {};
}

\hypersetup{
    pdftitle={A Constant-Factor Approximation for Continuous Dynamic Time Warping in 2D},
    pdfauthor={Kevin Buchin, Maike Buchin, Jan Erik Swiadek, Sampson Wong},
}

\begin{document}
    \title{A Constant-Factor Approximation for\texorpdfstring{\\}{} Continuous Dynamic Time Warping in 2D}
    \author{
        Kevin Buchin\inst{1}
        \and
        Maike Buchin\inst{2}
        \and
        Jan Erik Swiadek\inst{2}
        \and
        Sampson Wong\inst{3}
    }
    \institute{
        Technical University Dortmund, Germany\\ \email{kevin.buchin@tu-dortmund.de}
        \and
        Ruhr University Bochum, Germany\\ \email{\{maike.buchin,jan.swiadek\}@rub.de}
        \and
        University of Copenhagen, Denmark\\ \email{sampson.wong123@gmail.com}
    }
    \titlerunning{A Constant-Factor Approximation for CDTW in 2D}
    \authorrunning{K. Buchin, M. Buchin, J. E. Swiadek, S. Wong}
    \maketitle

    \begin{abstract}
        Continuous Dynamic Time Warping (CDTW) is a robust similarity measure for polygonal curves that has recently found a variety of applications.
        Despite its practical use, not much is known about the algorithmic complexity of computing it in 2D, especially when one requires either an exact solution or strong approximation guarantees.
        We fill this gap by introducing a $5$-approximation algorithm with running time $O(n^5)$ under the $1$-norm.

        This is the first constant-factor approximation for 2D CDTW with polynomial running time.
        We extend our algorithm to all polygonal norms on $\mathbb{R}^2$, which we subsequently use in order to achieve a $(5+\varepsilon)$-approximation with time complexity $O(n^5 / \varepsilon^{1/2})$ for CDTW in 2D under any fixed norm.
        The latter result in particular includes the usual Euclidean $2$-norm.

        \keywords{Continuous Dynamic Time Warping \and curve similarity \and geometric approximation}
    \end{abstract}

    \section{Introduction}

There exist several desirable properties for curve similarity measures \cite{SuLZZZ2020,TaoBSBSPLPTD2021}, two of which are tolerance to outliers and robustness to different sampling rates.
As the Fréchet distance fails to provide the former, whereas Dynamic Time Warping struggles with the latter, these popular measures have drawbacks for applications.
Continuous Dynamic Time Warping (CDTW) is an alternative that combines both abovementioned properties by minimising a path integral of distances between continuously and monotonically matched points on two polygonal curves.

Our considered CDTW formulation \cite{MahesSS2018,Klare2020,BrankBKNPW2020,BuchiNW2025,Brank2022,BuchiBSW2026} originates from a summed version of the Fréchet distance introduced in \cite{Buchi2007}.
There exist other related definitions \cite{SerraB1994,SerraB1995,MunicP1999,BrakaPSW2005,EfratFV2007,Har-PRR2025}, and we refer to \cite[Section~1]{BuchiNW2025} as well as \cite[Section~1.1]{BuchiBSW2026} for brief overviews of these.
On the practical~side, the robustness of CDTW and its ability to yield high-quality solutions across different application domains have been observed in experiments~\cite{BrakaPSW2005,EfratFV2007,BrankBKNPW2020}.
Also, reasonable running times on realistic data have been reported \cite{BrankBKNPW2020,Brank2022}.
On the theoretical~side, however, CDTW still lacks algorithms that guarantee both a good approximation factor and an efficient running time in 2D and beyond.

Many existing approaches are approximations or heuristics that proceed via discretising the input curves \cite{BrakaPSW2005,EfratFV2007,MahesSS2018,Klare2020,BrankBKNPW2020,Har-PRR2025}.
These include a generic additive approximation algorithm~\cite{Klare2020,BrankBKNPW2020}, and a pseudo-polynomial-time $(1 + \varepsilon)$-approximation for 2D CDTW under the Euclidean \mbox{$2$-norm}~\cite{MahesSS2018}.
The latter has running time $O(\zeta^4 n^4 / \varepsilon^2 \cdot \log(\zeta n / \varepsilon))$, where $n$ is the complexity and $\zeta$ is the spread of the given polygonal curves.
That means the running time depends on the maximum ratio $\zeta$ of a pair of curve segment lengths.
Such dependencies are typical for discretisation approaches.

A different type of approach, which enables exact computations, propagates functions arising from integration within a dynamic program \cite{BuchiBW2009,Klare2020,BuchiNW2025,Brank2022,BuchiBSW2026}.
This results in a polynomial running~time of $O(n^5)$ for CDTW in 1D \cite{BuchiNW2025}.
However, in 2D the algorithmic complexity of computing CDTW is as yet unknown.
On the one hand, it has been shown that 2D CDTW cannot even be computed exactly under the $2$-norm when restricted to algebraic operations \cite{BuchiBSW2026}.
On the other hand, function propagation approaches have only been made to work with polygonal norms like the $1$-norm \cite{Brank2022,BuchiBSW2026}.
These give piecewise linear integrands and piecewise quadratic cost functions similar to 1D.

By converging to the unit disk, polygonal norms provide a $(1 + \varepsilon)$-approximation~for~CDTW under the $2$-norm without a dependency on the curves' spread.
However, the 2D dynamic program may involve more complicated propagation patterns \cite[Section~4.2]{BuchiBSW2026}.
Those have so far prevented any 2D generalisation of the quite technical polynomial-time result that was achieved in 1D.

\paragraph*{Our Contributions.}

We show that we can circumvent the 2D propagation patterns that potentially have an exponential complexity, in exchange for an approximation factor $\beta \leq 5$.
This relies on a new bound for certain CDTW integrals, which we establish in \autoref{subsec:building-block}.
It is open whether the bound is tight, and any improvement to $\beta$ would transfer to the following results.
Applying this building block, we develop a CDTW algorithm that is outlined in \autoref{subsec:algorithm-outline} and then analysed in \autoref{sec:algorithm-analysis}.
After proving its approximation factor, we bound its time complexity by combining ideas from~\cite{BuchiNW2025} with novel insights.
For CDTW in $(\mathbb{R}^2,\tnorm[1])$ we achieve a $5$-approximation with running time $O(n^5)$.
This is the first polynomial-time constant-factor result for 2D CDTW.

We subsequently extend our algorithm from the $1$-norm $\tnorm[1]$ to the class of polygonal norms.
This is based on the framework of~\cite{BuchiBSW2026}, which we generalise to facilitate approximations for not only the $2$-norm $\tnorm[2]$ but also any other norm on $\mathbb{R}^2$.
To that end, we exploit that any fixed~norm~$\tnorm$ on $\mathbb{R}^d$ is $(1 + \varepsilon)$-approximated by a polyhedral norm of complexity $O(\varepsilon^{(1-d)/2})$.
Here, fixed means that we treat the description complexity of $\tnorm$, including the dimension $d$, as a constant.

Putting things together, we obtain an approximation algorithm with factor $5 + \varepsilon$ and running time $O(n^5 / \varepsilon^{1/2})$ for 2D CDTW under any fixed norm and for any $\varepsilon > 0$.
In the imbalanced case, where one of the given polygonal curves has some smaller complexity $m \in o(n)$, the term $n^5$ in our time bounds may be replaced by $n^4 m$.
Note that the guaranteed bounds might be pessimistic.
In particular, even if $\beta = 5$ is a tight bound for the approximated CDTW integrals, it is unclear whether there are any curves for which our algorithm approximates within factor close to $\beta$.

On a technical level, we expand the toolbox for CDTW algorithms by demonstrating that an integration-based building block and a tailored function propagation scheme can enable CDTW approximations without discretisations of any form.
Combining our techniques with the approach of approximating the Euclidean $2$-norm via polygonal norms yields polynomial-time computations for 2D CDTW under $\tnorm[2]$ without any significant loss of accuracy for the first time.
Even though polygonal norms have already been used in \cite{BuchiBSW2026}, no attempt at a fine-grained~running time analysis was made there, after highlighting the obstacle that we deal with approximatively here.

Furthermore, our running time analysis is more streamlined and insightful than that from~\cite{BuchiNW2025}.
We avoid a large case distinction by identifying core principles behind the complexity of function propagations, which generalise beyond some properties unique to 1D.
In addition to advancing the understanding of propagation patterns, this enables an improvement upon previous algorithms:
We compute lower envelopes of quadratic pieces in a single pass instead of multiple passes.

\section{Preliminaries}

A polygonal curve $P$ in a normed real vector space $(\mathbb{R}^d, \tnorm)$ consists of $n \in \mathbb{N}$ consecutive line segments induced by a sequence $\langle p_0,\dotsc,p_n \rangle$ of vertices, where $p_{i} \neq p_{i-1}$ for all $i \in \{1,\dotsc,n\}$.
We write $P_{\leq i} := \langle p_0,\dotsc,p_i \rangle$, where $i \in \{0,\dotsc,n\}$, for the prefix subcurves of $P$.
The following definitions are based on \cite{BuchiBSW2026}, which provides a robust CDTW formulation under any norm.

\begin{definition}[{\cite[Definition~1]{BuchiBSW2026}}]
    \label{def:cdtw}
    Let $P,Q$ be two polygonal curves in $(\mathbb{R}^d, \tnorm)$.
    The measure \emph{Continuous Dynamic Time Warping (CDTW)} of $P,Q$ under $\tnorm$ is defined by
    \[
        \mathrm{cdtw}_{\| \cdot \|}(P,Q)
        :=
        \inf_{(f,g) \in \Pi_{[0,1]}(P) \times \Pi_{[0,1]}(Q)}
        \int_0^1
        \| f(t) - g(t) \| \cdot
        \left\| \begin{pmatrix}
            \| f'(t) \| \\
            \| g'(t) \|
        \end{pmatrix} \right\|_1
        \dt
        \text{,}
    \]
    where the sets $\Pi_{[a,b]}(P)$ and $\Pi_{[a,b]}(Q)$ contain all piecewise continuously differentiable functions defined on the interval $[a,b]$ that monotonically parametrise $P$ or $Q$ respectively.
\end{definition}

The arc length of $P$ under $\tnorm$ is $\| P \| := \sum_{i=1}^n \| p_i - p_{i-1} \|$.
Because the arc length is invariant to reparametrisation, we have $\int_{a}^{b} \| f'(t) \| \dt = \| P \|$ for all $f \in \Pi_{[a,b]}(P)$.
Moreover, we denote the arc length parametrisation of $P$ with constant speed~$1$ under $\tnorm$ by $P_{\| \cdot \|} \colon [0, \| P \|] \to \mathbb{R}^d$.

\begin{definition}[{\cite[Definitions~2 and~12]{BuchiBSW2026}}]
    \label{def:parameter-space-and-borders}
    The \emph{parameter space} of two polygonal curves $P,Q$ under $\tnorm$ is $[0,\| P \|] \times [0, \| Q \|]$.
    Each segment pair of $P = \langle p_0,\dotsc, p_n \rangle$ and $Q = \langle q_0,\dotsc,q_m \rangle$ is associated with a \emph{cell} $C_{i,j} := [\| P_{\leq i-1} \|, \| P_{\leq i} \|] \times [\| Q_{\leq j-1} \|, \| Q_{\leq j} \|]$ in their parameter space, where $(i,j) \in \{1,\dotsc,n\} \times \{1,\dotsc,m\}$.
    The \emph{north/east/south/west border} of $C_{i,j}$ parametrises along that side of $C_{i,j}$, e.g.\ its north border is $[\| P_{\leq i-1} \|, \| P_{\leq i} \|] \to \mathbb{R}^2, t \mapsto (t,\| Q_{\leq j} \|)^{\mathsf{T}}$.
\end{definition}

The monotone matchings of $P$ and $Q$ correspond to monotone paths in their parameter space.
Let $\sigma := \| P \| + \| Q \|$, and define $\Gamma_{\| \cdot \|}(P,Q)$ as the set of all functions $\gamma \colon [0,\sigma] \to [0, \| P \|] \times [0, \| Q \|]$ such that there are parametrisations $(f,g) \in \Pi_{[0,\sigma]}(P) \times \Pi_{[0,\sigma]}(Q)$ satisfying $\gamma_1(s) = \int_{0}^{s} \| f'(t) \| \dt$ and $\gamma_2(s) = \int_{0}^{s} \| g'(t) \| \dt$ as well as $\| \gamma'(s) \|_1 = \| f'(s) \| + \| g'(s) \| = 1$ for all $s \in [0, \sigma]$.

\begin{definition}[{\cite[Definition~3]{BuchiBSW2026}}]
    \label{def:path-and-cost}
    Let $x,y \in [0, \| P \|] \times [0, \| Q \|]$ be two points with $x \preceq y$, i.e.\ $x_1 \leq y_1$ and $x_2 \leq y_2$, for polygonal curves $P,Q$.
    An \emph{$(x,y)$-path} is a restriction $\widehat{\gamma} := \gamma|_{[\| x \|_1, \| y \|_1]}$ of any function $\gamma \in \smash{\Gamma_{\| \cdot \|}}(P,Q)$ with $\gamma(\| x \|_1) = x$ and $\gamma(\| y \|_1) = y$.
    It is called \emph{optimal} for $P,Q$ under $\tnorm$ if it minimises the definite integral $\mathrm{cost}_{\| \cdot \|}(\widehat{\gamma}) := \int_{\| x \|_1}^{\| y \|_1} \| P_{\| \cdot \|}(\widehat{\gamma}_1(t)) - Q_{\| \cdot \|}(\widehat{\gamma}_2(t)) \| \dt$ among all $(x,y)$-paths.
    We write $\mathrm{opt}_{\| \cdot \|}(x,y)$ for the cost of such an optimal $(x,y)$-path.
\end{definition}

We have $\mathrm{cdtw}_{\| \cdot \|}(P,Q) = \mathrm{opt}_{\| \cdot \|}(\mathbf{0},(\| P \|, \| Q \|)^{\mathsf{T}})$ by construction, where $\mathbf{0}$ denotes the origin.
Given some cell border $\mathcal{B} \colon \mathrm{dom}(\mathcal{B}) \to \mathbb{R}^2$, we define its \emph{optimum function} $\mathrm{opt}_{0,\mathcal{B}} \colon \mathrm{dom}(\mathcal{B}) \to \mathbb{R}_{\geq 0}$ by $\mathrm{opt}_{0,\mathcal{B}}(t) := \mathrm{opt}_{\| \cdot \|}(\mathbf{0},\mathcal{B}(t))$ for $t \in \mathrm{dom}(\mathcal{B})$.
In a cell under $\tnorm$ we further define for each pair of south/west \emph{input border} $\mathcal{A}$ and north/east \emph{output border} $\mathcal{B}$ a function $\mathrm{opt}_{\mathcal{A},\mathcal{B}}$ by
\[
    \mathrm{opt}_{\mathcal{A},\mathcal{B}}(s,t) := \mathrm{opt}_{\| \cdot \|}(\mathcal{A}(s),\mathcal{B}(t))
    \quad
    \text{for }
    (s,t) \in \mathrm{dom}(\mathcal{A}) \times \mathrm{dom}(\mathcal{B})
    \text{ with }
    \mathcal{A}(s) \preceq \mathcal{B}(t)
    \text{.}
\]

The optimal paths realising $\mathrm{opt}_{\mathcal{A},\mathcal{B}}$ are characterised by \autoref{thm:optimal-paths}, and possible resulting shapes of these optimal paths within different cells are illustrated in \autoref{fig:optimal-path-types}.

\begin{theorem}[{\cite[Section~2]{BuchiBSW2026}}]
    \label{thm:optimal-paths}
    Let $P,Q$ be two polygonal curves in $(\mathbb{R}^2,\tnorm)$, and let $C$ be a parameter space cell.
    There exists a line $\ell \subseteq \mathbb{R}^2$ of positive slope such that for any choice of two points $x,y \in C$ with $x \preceq y$ the following $(x,y)$-path $\gamma$ is optimal for $P,Q$ under $\tnorm$.
    \begin{alphaenumerate*}
        \item
        If $\ell$ intersects the bounding box $\Xi := [x_1,y_1] \times [x_2,y_2]$ of $x$ and $y$, then $\gamma$ traces line segments from $x$ to $\xi_x$ to $\xi_y$ to $y$, where $\xi_x,\xi_y \in \ell \cap \Xi$ share a coordinate with $x,y$ respectively.

        \item
        If $\ell$ does not intersect the bounding box $\Xi$, then $\gamma$ traces line segments from $x$ to $\xi$ to $y$, where the single bending point $\xi \in \{ (x_1,y_2)^{\mathsf{T}}, (y_1,x_2)^{\mathsf{T}} \}$ is the point closest to $\ell$ within $\Xi$.
    \end{alphaenumerate*}
\end{theorem}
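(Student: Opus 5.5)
Within a single cell $C$ the two curves are the segments $p_{i-1}p_i$ and $q_{j-1}q_j$. Parametrised by arc length they are affine: $P_{\|\cdot\|}(a)=p_{i-1}+a\,u$ and $Q_{\|\cdot\|}(b)=q_{j-1}+b\,v$, with $\|u\|=\|v\|=1$, where we shift coordinates so that $C$ starts at the origin. The integrand of $\mathrm{cost}_{\|\cdot\|}$ at a parameter point $z=(a,b)$ is therefore $h(z):=\|w+a u-b v\|$ with $w:=p_{i-1}-q_{j-1}$. This is a convex function of $z$, namely a norm composed with an affine map. My plan is to reduce the statement to two properties of $h$ and then argue by exchange.

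\textbf{Step 1 (the line $\ell$).} The direction $(1,1)$ is special for $h$ along diagonal moves, so I consider for each offset $c$ the restriction $b\mapsto h(b+c,b)$ on the anti-diagonal-parametrised family, or equivalently the minimiser of $h$ along each line of slope $-1$, i.e.\ along $a+b=s$. Since $h$ is convex and the cost of a monotone path is $\int h\,\mathrm{d}s$ with $\|\gamma'\|_1=1$, only the value at each "time" $s=a+b$ matters. So define $\ell$ as a (measurably chosen) selection of the minimisers of $h$ on the lines $a+b=s$. Using that $h(a,b)$ depends on $a u-b v$ affinely, I will show this set of minimisers contains a line of positive slope. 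For fixed $s$, writing $a=(s+\tau)/2$ and $b=(s-\tau)/2$ gives $h=\|w+\tfrac{s}{2}(u-v)+\tfrac{\tau}{2}(u+v)\|$, a convex function of $\tau$. The point is to pick a minimiser $\tau^*(s)$ that is affine in $s$ and whose induced line has positive slope, meaning $|\mathrm{d}\tau^*/\mathrm{d}s|<1$. Shifting by $s$ moves the base point along $u-v$, while $\tau$ moves along $u+v$. Decomposing $u-v$ with respect to the direction $u+v$ and using $\|u\|=\|v\|=1$ should yield the slope bound. This step is the main obstacle: for non-strictly convex (e.g.\ polygonal) norms the minimiser sets are intervals, and choosing an affine selection with slope in $(-1,1)$ needs care. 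I would first handle the degenerate cases $u=\pm v$ separately, and in general take $\ell$ through the minimiser of $h$ over $\mathbb{R}^2$, or through the midpoint-selected minimisers, verifying the slope via the triangle inequality $\|u+v\|\le 2$.

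\textbf{Step 2 (optimality by exchange).} Fix $x\preceq y$ in $C$ and any $(x,y)$-path $\widehat\gamma$. At each time $s$, the path must lie on the segment $\{a+b=s\}\cap\Xi$. Along that segment, $h$ is convex with minimiser at its intersection with $\ell$ if this exists, and otherwise at the endpoint nearest $\ell$. Hence the pointwise-best choice is the clamped projection of $\ell$ into $\Xi$. This choice traces exactly the described path: in case (a) it goes from $x$ axis-parallel to $\xi_x$, along $\ell$ to $\xi_y$, then axis-parallel to $y$; in case (b) it runs along the two sides of $\Xi$ through the corner closest to $\ell$. Because $\ell$ has positive slope, this clamped curve is monotone, so it is itself a valid $(x,y)$-path. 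Its integrand is pointwise $\le$ that of $\widehat\gamma$, which gives optimality. Independence of $\ell$ from $x,y$ holds by construction.
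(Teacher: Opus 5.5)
\textbf{Gap: Step 1 (existence of the line $\ell$) is not proved.} Your Step 2 is sound, and the reduction is the right one. A line $\ell$ of positive slope that meets every line of slope $-1$ in a minimiser of the convex integrand $h$ is exactly the \emph{valley} of the cited source, and your clamping argument then yields the two path shapes. But the existence of such an $\ell$ carries the whole content of the theorem; it is precisely what breaks down in 3D, cf.\ the appendix. You leave it as an acknowledged obstacle, and the two heuristics you offer do not work as stated.

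Passing $\ell$ through a global minimiser of $h$ fixes a point but not a direction. Taking midpoints of the minimiser intervals can even produce a line of \emph{negative} slope under polygonal norms. Take $\|\cdot\|_\infty$ with $u=(1,0.8)^{\mathsf{T}}$ and $v=(-1,-0.9)^{\mathsf{T}}$, and let $z_0$ be the zero of $h$. The minimisers of $h$ on the line $\{z_0+z' : z'_1+z'_2=\sigma\}$ are the points $z_0+\sigma(\alpha,1-\alpha)$ with $\alpha\in[-1,19]$, so the midpoints lie on $z_0+\mathbb{R}(9,-8)$. Hence no slope bound can hold for an arbitrary or canonical selection. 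What is true is only that \emph{some} minimiser has the right slope, and $\|u+v\|\le 2$ is not the inequality that shows this.

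\textbf{How to close it.} Treat non-parallel and parallel segments separately.
\begin{itemize}
\item If $u,v$ are linearly independent, let $z_0$ be the unique solution of $w+au-bv=\mathbf{0}$; it exists because we are in $\mathbb{R}^2$. Then $h(z_0+z')=\|z'_1u-z'_2v\|$ is positively homogeneous.
\item Define $f(\alpha):=\|\alpha u-(1-\alpha)v\|$ and $M:=\arg\min_{\alpha\in\mathbb{R}}f(\alpha)$. For $\sigma\neq0$, the minimisers of $h$ on $z'_1+z'_2=\sigma$ are exactly the points $z_0+\sigma(\alpha,1-\alpha)$ with $\alpha\in M$. For $\sigma=0$ the unique minimiser is $z_0$, since $u+v\neq\mathbf{0}$.
\item Because speeds are measured in the same norm, $\|u\|=\|v\|=1$. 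Hence $f(0)=f(1)=1$, and the reverse triangle inequality gives $f\geq1$ outside $(0,1)$.
\item By convexity, either $\min f<1$, in which case $M\subseteq(0,1)$, or $f\equiv1$ on $[0,1]$, in which case $\tfrac12\in M$.
\item Pick $\alpha\in M\cap(0,1)$. The line $\ell:=z_0+\mathbb{R}(\alpha,1-\alpha)$ has slope $(1-\alpha)/\alpha>0$ and meets every line of slope $-1$ in a minimiser of $h$, which is all Step 2 needs.
\item Otherwise $u=\pm v$. If $u=v$, then $h$ depends only on $a-b$, and the slope-$1$ line $a-b=\tau_0$ works for any minimiser $\tau_0$ of $\tau\mapsto\|w+\tau u\|$. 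If $u=-v$, then $h$ is constant on lines of slope $-1$, and any slope-$1$ line works.
\end{itemize}
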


\begin{figure}[H]
    \centering
    \begin{tikzpicture}[x=2.75cm,y=2.75cm,trim left=0cm,trim right={5.6 * 2.75cm}]
        \begin{scope}
            \path[valley] (2/3,0) -- (1,1/3);
            \path[main line] (0,0) rectangle (1,1);

            \path[matching path] (1/6,0) -- (1/3,0) -- (1/3,1);
            \node[matching point] at (1/6,0) {};
            \node[matching point] at (1/3,1) {};

            \path[matching path] (0.5,0) -- (2/3,0) -- (5/6,1/6) -- (5/6,1);
            \node[matching point] at (0.5,0) {};
            \node[matching point] at (5/6,1) {};
        \end{scope}

        \begin{scope}[shift={(1.15,0)}]
            \path[valley] (0,0) -- (1,1);
            \path[main line] (0,0) rectangle (1,1);

            \path[matching path] (0.25,0) -- (0.25,1);
            \node[matching point] at (0.25,0) {};
            \node[matching point] at (0.25,1) {};

            \path[matching path] (0.5,0) -- (0.5,0.5) -- (0.75,0.75) -- (0.75,1);
            \node[matching point] at (0.5,0) {};
            \node[matching point] at (0.75,1) {};
        \end{scope}

        \begin{scope}[shift={(2.3,0)}]
            \path[valley] (0,2/3) -- (1/3,1);
            \path[main line] (0,0) rectangle (1,1);

            \path[matching path] (1/6,0) -- (1/6,5/6) -- (1/3,1) -- (0.5,1);
            \node[matching point] at (1/6,0) {};
            \node[matching point] at (0.5,1) {};

            \path[matching path] (2/3,0) -- (2/3,1) -- (5/6,1);
            \node[matching point] at (2/3,0) {};
            \node[matching point] at (5/6,1) {};
        \end{scope}

        \begin{scope}[shift={(3.45,0)}]
            \path[valley] (0,0.5) -- node[above left,spacing] {\color{UmiOrange}$\ell$} (0.5,1);
            \path[main line] (0,0) rectangle (1,1);

            \path[matching path] (0.5,0) -- (0.5,2/3) -- (1,2/3);
            \node[matching point] at (0.5,0) {};
            \node[matching point] at (1,2/3) {};

            \path[matching path] (1,0) -- (1,1/3);
            \node[matching point] at (1,0) {};
            \node[matching point] at (1,1/3) {};
        \end{scope}

        \begin{scope}[shift={(4.6,0)}]
            \path[valley] (1/3,0) -- (1,2/3);
            \path[main line] (0,0) rectangle (1,1);

            \path[matching path] (1/6,0) -- (1/3,0) -- (0.5,1/6) -- (1,1/6);
            \node[matching point] at (1/6,0) {};
            \node[matching point] at (1,1/6) {};

            \path[matching path] (2/3,0) -- (2/3,1/3) -- (5/6,0.5) -- (1,0.5);
            \node[matching point] at (2/3,0) {};
            \node[matching point] at (1,0.5) {};
        \end{scope}
    \end{tikzpicture}
    \caption{Optimal paths from different cells' south borders to their output borders}
    \label{fig:optimal-path-types}
\end{figure}

Note that \autoref{thm:optimal-paths} relies on the usage of norms in \autoref{def:cdtw}:
The equipped norm~$\tnorm$ measures both distances between curve points and speeds of curve parametrisations \cite{BuchiBSW2026}, while the $1$-norm $\tnorm[1]$ combines these speeds within the CDTW path integral \cite{Buchi2007,MahesSS2018,BuchiNW2025,Brank2022,BuchiBSW2026}.
Moreover, \autoref{thm:optimal-paths} only holds in 1D and 2D.
See \autoappref{app:3d-example} for a 3D counterexample.

\section{Towards a Polynomial Bound}

The idea behind function propagation approaches for exact CDTW algorithms is propagating costs of optimal paths through the parameter space cells and successively computing optimum functions in a dynamic program.
After $\mathrm{opt}_{0,\mathcal{A}}$ has been computed for each input border $\mathcal{A}$ of some cell $C$, one can use \autoref{thm:optimal-paths} to determine $\mathrm{opt}_{0,\mathcal{B}}$ for each output border $\mathcal{B}$ of $C$.

If the CDTW integrands are piecewise linear, as it is the case in 1D \cite{BuchiNW2025} or under polygonal norms in 2D such as the $1$-norm or $\infty$-norm \cite{Brank2022,BuchiBSW2026}, then all optimum functions of borders are piecewise quadratic (cf.~\cite[Lemma~14]{BuchiBSW2026}).
The central challenge for the running time analysis lies in bounding the total number of quadratic pieces over all optimum functions \cite{SerraB1994,MunicP1999,BuchiNW2025,BuchiBSW2026}.

\subsection{Building Block}
\label{subsec:building-block}

In general, the possible propagation patterns and their complexity in terms of how the number of quadratic pieces is growing are still not well understood.
The polynomial bound for the exact~1D algorithm relies heavily on the fact that the quadratic pieces on the output borders of a cell can be assigned monotonically to \emph{parent} pieces on the given cell's input borders (cf.~\cite[Lemma~13 and Definition~18]{BuchiNW2025}).
However, this is insufficient on its own already in 2D under the $1$-norm:

Certain optimal paths originating from \autoref{thm:optimal-paths}b may hypothetically cause an exponential growth of propagated pieces in 2D, even though there is a monotone assignment to parent pieces.
In short, the obstacle for bounding the number of pieces occurs when the paths' bending points move freely within the cell.
This yet unresolved issue was highlighted in \cite[Section~4.2]{BuchiBSW2026}.

To tackle the absence of means for finding general tight bounds, our approach instead aims to ensure that the complexity created by such paths during the propagation of path costs in the dynamic program remains low.
We henceforth call them \emph{unhappy} paths within this paper.

\begin{definition}
    \label{def:happy-and-unhappy}
    Let $\gamma$ be a polygonal $(x,y)$-path as in \autoref{thm:optimal-paths}.
    If each bending point of~$\gamma$ lies on $\ell$ or on the boundary of the cell $C$, then $\gamma$ is \emph{happy} with respect to $\ell$.
    Else, $\gamma$ is \emph{unhappy}.
\end{definition}

See the fourth cell of \autoref{fig:optimal-path-types} for an unhappy path.
As our first contribution, we show that the cost of any (unhappy) optimal $(x,y)$-path $\gamma$ from \autoref{thm:optimal-paths}b is approximated within factor at most~$5$ by choosing the worse of the two possible options for the single bending point $\xi$ of~$\gamma$.
This changes whether a path first travels vertically and then horizontally or vice versa, as shown in \autoref{fig:approximable-paths}.
Note that the paths from \autoref{thm:optimal-paths}a are always happy by definition.

\begin{figure}[H]%
    \centering%
    \begin{subfigure}{0.3\linewidth}
        \centering
        \begin{tikzpicture}[x=2.5cm,y=2.5cm]
            \apxhatch[7/6]{8}

            \path[matching path] (0,0) -- node[pos=0.75,left] {\color{UmiSkyblue}$\gamma_{xy}$} (0,1) -- (7/6,1);
            \path[matching path,double=UmiGreen] (0,0) -- (7/6,0) -- node[pos=0.25,right] {\color{UmiGreen}$\gamma_{yx}$} (7/6,1);

            \node[point,split fill={UmiSkyblue}{UmiGreen},label={[text height=1.15ex]below left:$x$\vphantom{$q_x$}}] at (0,0) {};
            \node[point,split fill={UmiSkyblue}{UmiGreen},label={above right:$y$}] at (7/6,1) {};
        \end{tikzpicture}
        \subcaption{Considered $(x,y)$-paths}
    \end{subfigure}%
    \hspace*{0.1\linewidth/3}%
    \begin{subfigure}{0.3\linewidth}
        \centering
        \begin{tikzpicture}[x={2.5cm / 5},y={2.5cm / 5}]
            \examplesegments

            \begin{scope}[on background layer]
                \def\n{10}
                \foreach \i in {0,...,\n} {
                    \path[matching] (px) -- ($(qx)!{\i/\n}!(qy)$);
                    \ifnum\i>0 \path[matching] (qy) -- ($(px)!{\i/\n}!(py)$); \fi
                }
            \end{scope}
        \end{tikzpicture}
        \subcaption{Curve matching of $\gamma_{xy}$}
    \end{subfigure}%
    \hspace*{0.1\linewidth/3}%
    \begin{subfigure}{0.3\linewidth}
        \centering
        \begin{tikzpicture}[x={2.5cm / 5},y={2.5cm / 5}]
            \examplesegments

            \begin{scope}[on background layer]
                \def\n{10}
                \foreach \i in {0,...,\n} {
                    \path[matching,UmiGreen] (qx) -- ($(px)!{\i/\n}!(py)$);
                    \ifnum\i>0 \path[matching,UmiGreen] (py) -- ($(qx)!{\i/\n}!(qy)$); \fi
                }
            \end{scope}
        \end{tikzpicture}
        \subcaption{Curve matching of $\gamma_{yx}$}
    \end{subfigure}%
    \caption{Approximable paths and their corresponding curve matchings}%
    \label{fig:approximable-paths}%
\end{figure}

Our technical proof of the following lemma relies -- similar to \autoref{thm:optimal-paths} -- on \autoref{def:cdtw} and \autoref{def:path-and-cost} employing the equipped norm~$\tnorm$ for distances between curve points as well as speeds and arc lengths on each curve.
In particular, this allows for a versatile use of the triangle inequality and symmetry of the metric $(p,q) \mapsto \| p - q \|$ induced by the norm $\tnorm$ on $\mathbb{R}^d$.

\begin{lemma}
    \label{thm:apx-lemma}
    Let $x,y \in C$ be two points with $x \preceq y$, where $C$ is a parameter space cell of polygonal curves $P,Q$ in $(\mathbb{R}^d,\tnorm)$.
    Consider the $(x,y)$-path $\gamma_{xy}$ tracing line segments from $x$ to $(x_1,y_2)^{\mathsf{T}}$ to $y$, and the $(x,y)$-path $\gamma_{yx}$ tracing line segments~from $x$ to $(y_1,x_2)^{\mathsf{T}}$ to $y$.
    The costs of these paths under $\tnorm$ satisfy the bounds $1/5 \cdot \mathrm{cost}_{\| \cdot \|}(\gamma_{xy}) \leq \mathrm{cost}_{\| \cdot \|}(\gamma_{yx}) \leq 5 \cdot \mathrm{cost}_{\| \cdot \|}(\gamma_{xy})$.
\end{lemma}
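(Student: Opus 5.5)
The plan is to write both costs as sums of two one-dimensional integrals over the two segments of the cell. Then I would bound the difference between the two paths by a term of order $ab$, and finally bound $ab$ from below by $\mathrm{cost}_{\| \cdot \|}(\gamma_{xy})$ using the triangle inequality.

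\textbf{Setup.} Let $p_x := P_{\| \cdot \|}(x_1)$, $p_y := P_{\| \cdot \|}(y_1)$, $q_x := Q_{\| \cdot \|}(x_2)$ and $q_y := Q_{\| \cdot \|}(y_2)$. Since $x,y$ lie in one cell $C$, the subcurves involved are line segments $p_xp_y$ and $q_xq_y$, of lengths $a := \| p_y - p_x \|$ and $b := \| q_y - q_x \|$ (arc length equals norm distance along a segment). Write $p(s)$ for $s \in [0,a]$ and $q(t)$ for $t \in [0,b]$ for their arc length parametrisations. Each path moves in only one coordinate at a time with $\|\gamma'\|_1 = 1$, so \autoref{def:path-and-cost} gives
\[
    \mathrm{cost}(\gamma_{xy}) = \int_0^b \| p_x - q(t) \| \dt + \int_0^a \| p(s) - q_y \| \,\mathrm{d}s ,
    \qquad
    \mathrm{cost}(\gamma_{yx}) = \int_0^a \| p(s) - q_x \| \,\mathrm{d}s + \int_0^b \| p_y - q(t) \| \dt .
\]

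\textbf{Upper step.} I would compare the two costs term by term using $\| q_x - q_y \| = b$ and $\| p_y - p_x \| = a$:
\[
    \| p(s) - q_x \| \leq \| p(s) - q_y \| + b
    \quad\text{and}\quad
    \| p_y - q(t) \| \leq \| p_x - q(t) \| + a .
\]
Integrating these gives $\mathrm{cost}(\gamma_{yx}) \leq \mathrm{cost}(\gamma_{xy}) + 2ab$. It therefore suffices to show $2ab \leq 4\,\mathrm{cost}(\gamma_{xy})$, i.e.\ $\mathrm{cost}(\gamma_{xy}) \geq ab/2$.

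\textbf{Lower bound (the main step).} Let $D := \| p_x - q_y \|$. By the reverse triangle inequality, $\| p_x - q(t) \| \geq |D - \| q(t) - q_y \|| = |D - (b-t)|$ and $\| p(s) - q_y \| \geq |D - s|$. Hence
\[
    \mathrm{cost}(\gamma_{xy}) \geq \int_0^b |D-u| \,\mathrm{d}u + \int_0^a |D-s| \,\mathrm{d}s .
\]
For fixed $L>0$, the map $D \mapsto \int_0^L |D-u|\,\mathrm{d}u$ is minimised at $D = L/2$ with value $L^2/4$. Applying this to each integral gives
\[
    \mathrm{cost}(\gamma_{xy}) \geq \frac{a^2+b^2}{4} \geq \frac{ab}{2}
\]
by AM--GM, which yields $\mathrm{cost}(\gamma_{yx}) \leq 5\,\mathrm{cost}(\gamma_{xy})$.

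\textbf{Other direction.} The situation is symmetric under swapping the roles of $(p_x,q_y)$ and $(p_y,q_x)$. Repeating the argument with $D' := \| p_y - q_x \|$ gives $\mathrm{cost}(\gamma_{xy}) \leq 5\,\mathrm{cost}(\gamma_{yx})$.

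The step I expect to be delicate is only this lower bound: one has to choose the anchor distance $D$ shared by both integrals so that a single reverse-triangle argument controls both pieces simultaneously. The rest is routine. Degenerate cases $a=0$ or $b=0$ are trivial, since then $\gamma_{xy}$ and $\gamma_{yx}$ coincide.
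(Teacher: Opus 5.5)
Your proof is correct, and its skeleton matches the paper's. The same term-by-term triangle inequality gives $\mathrm{cost}_{\| \cdot \|}(\gamma_{yx}) \leq \mathrm{cost}_{\| \cdot \|}(\gamma_{xy}) + 2ab$. Both arguments then reduce to $\mathrm{cost}_{\| \cdot \|}(\gamma_{xy}) \geq (a^2+b^2)/4$, proved by bounding each integrand from below by a piecewise linear function with slopes $\pm 1$.

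The two proofs differ in how that lower bound is obtained. The paper chooses a split point $\lambda$, via a case distinction, such that $\overline{q_x q_\lambda} \leq \overline{p_x q_x}$ and $\overline{q_\lambda q_y} \leq \overline{p_x q_y}$. It applies the one-sided reverse triangle inequality anchored at $q_x$ on $[x_2,\lambda]$ and at $q_y$ on $[\lambda,y_2]$. It then integrates through a Riemann-sum limit and recombines the pieces with $2(u^2+v^2) \geq (u+v)^2$.

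You instead apply the two-sided reverse triangle inequality at a single anchor. This yields the bound $|D-u|$ directly, and $\int_0^L |D-u| \,\mathrm{d}u$ can be minimised over $D$ in closed form. Your route removes the case distinction and the limit argument and reaches exactly the same intermediate bound, so it gives the same constant $5$ with less machinery.

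One small correction to your closing commentary: nothing depends on the anchor distance $D$ being shared between the two integrals. Since $\int_0^L |D-u| \,\mathrm{d}u \geq L^2/4$ holds for every real $D$, each integral can be anchored independently, and the step you flag as delicate is in fact routine.
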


\begin{proof}
    We will show only the second inequality, as the first one is entirely symmetric.
    In order to make the notation of this proof more concise, we write $p_s := P_{\| \cdot \|}(s) \in \mathbb{R}^d$ and $q_{t} := Q_{\| \cdot \|}(t) \in \mathbb{R}^d$ for parameters $(s, t) \in [x_1,y_1] \times [x_2,y_2]$ as well as $\overline{p q} := \| p - q \|$ for points $p,q \in \smash{\mathbb{R}^d}$.

    Because the paths $\gamma_{xy}$ and $\gamma_{yx}$ trace horizontal and vertical line segments, \autoref{def:path-and-cost} implies that their cost integrands on these respective segments correspond to $s \mapsto \overline{p_s q_t}$, where $s \in [x_1,y_1]$ varies and $t \in \{x_2,y_2\}$ is fixed, and to $t \mapsto \overline{p_s q_t}$, where $t \in [x_2,y_2]$~varies and $s \in \{x_1,y_1\}$ is fixed.
    We thus also use the shorthands $p_x := p_{x_1}$, $p_y := p_{y_1}$, $q_x := q_{x_2}$, and $q_y := q_{y_2}$.
    By applying the triangle inequality and symmetry that are induced by the norm $\tnorm$, we then obtain
    \begin{align*}
        \mathrm{cost}_{\| \cdot \|}(\gamma_{yx}) - \mathrm{cost}_{\| \cdot \|}(\gamma_{xy})
        &= \int_{x_1}^{y_1} \overline{p_s q_x} \ds + \int_{x_2}^{y_2} \overline{p_y q_t} \dt - \int_{x_2}^{y_2} \overline{p_x q_t} \dt - \int_{x_1}^{y_1} \overline{p_s q_y} \ds \\[3pt]
        &= \int_{x_1}^{y_1} \overline{p_s q_x} - \overline{p_s q_y} \ds + \int_{x_2}^{y_2} \overline{p_y q_t} - \overline{p_x q_t} \dt \\[3pt]
        &\leq \int_{x_1}^{y_1} \overline{q_x q_y} \ds + \int_{x_2}^{y_2} \overline{p_x p_y} \dt
        = 2 \cdot \overline{p_x p_y} \cdot \overline{q_x q_y}
        \text{.}
    \end{align*}

    The final equality holds because the integrands do not depend on the variables $s$ and $t$ anymore, while the integration intervals are of length $y_1 - x_1 = \overline{p_x p_y}$ and $y_2 - x_2 = \overline{q_x q_y}$ respectively.
    Here, note that we have $y_1 - x_1 = \overline{p_x p_y}$ due to $x,y \in C$ for the given cell $C$, which means that the two points $p_x$ and $p_y$ lie on a single curve segment of $P$.
    We similarly have $y_2 - x_2 = \overline{q_x q_y}$.
    Intuitively, the cost difference between $\gamma_{yx}$ and $\gamma_{xy}$ is at most two times the bounding box area of $x$ and $y$.
    Together with $0 \leq (\overline{p_x p_y} - \overline{q_x q_y})^2 = \overline{p_x p_y}^2 + \overline{q_x q_y}^2 - 2 \cdot \overline{p_x p_y} \cdot \overline{q_x q_y}$, we therefore arrive at
    \[
        \mathrm{cost}_{\| \cdot \|}(\gamma_{yx})
        \leq \mathrm{cost}_{\| \cdot \|}(\gamma_{xy}) + 2 \cdot \overline{p_x p_y} \cdot \overline{q_x q_y}
        \leq \mathrm{cost}_{\| \cdot \|}(\gamma_{xy}) + \overline{p_x p_y}^2 + \overline{q_x q_y}^2
        \text{,}
    \]
    so it remains to show $\overline{p_x p_y}^2 + \overline{q_x q_y}^2 \leq 4 \cdot \mathrm{cost}_{\| \cdot \|}(\gamma_{xy})$ in order to get the claimed result.

    For this, we consider $\int_{x_2}^{y_2} \overline{p_x q_t} \dt$, which is the first term of $\mathrm{cost}_{\| \cdot \|}(\gamma_{xy})$, and pick a $\lambda \in [x_2,y_2]$ such that both $\overline{q_x q_\lambda} \leq \overline{p_x q_x}$ and $\overline{q_\lambda q_y} \leq \overline{p_x q_y}$ hold.
    This is feasible as follows:
    If $\overline{q_x q_y} \leq \overline{p_x q_x}$~holds, then $\lambda := y_2$ with $q_\lambda = q_y$ works.
    Else, we have $\overline{q_x q_y} \geq \overline{p_x q_x}$ and can pick $\lambda$ with $\overline{q_x q_\lambda} = \overline{p_x q_x}$,~which implies $\overline{q_\lambda q_y} = \overline{q_x q_y} - \overline{q_x q_\lambda} = \overline{q_x q_y} - \overline{p_x q_x} \leq \overline{p_x q_y}$ via $x,y \in C$ and the triangle inequality.

    We now split the interval $[x_2,\lambda]$ into $k \in \mathbb{N}$ subintervals, each of length $\frac{1}{k} \cdot (\lambda - x_2) = \frac{1}{k} \cdot \overline{q_x q_\lambda}$.
    For every value $t$ on the $i$-th subinterval this gives $\overline{q_x q_t} \leq \frac{i}{k} \cdot \overline{q_x q_\lambda}$, so together we have
    \[
        \overline{p_x q_t} \geq \overline{p_x q_x} - \overline{q_x q_t} \geq \overline{q_x q_\lambda} - \frac{i}{k} \cdot \overline{q_x q_\lambda} = \frac{k-i}{k} \cdot \overline{q_x q_\lambda}
        \quad
        \text{if }
        t - x_2 \in \left[ \frac{i-1}{k} \cdot \overline{q_x q_\lambda}, \frac{i}{k} \cdot \overline{q_x q_\lambda} \right]
        \text{,}
    \]
    which bounds the integrand from below.
    Summing this over all $i \in \{1,\dotsc,k\}$ yields
    \[
        \int_{x_2}^{\lambda} \overline{p_x q_t} \dt
        \geq \frac{1}{k} \cdot \overline{q_x q_\lambda} \cdot \sum_{i=1}^{k} \frac{k - i}{k} \cdot \overline{q_x q_\lambda}
        = \frac{1}{k} \cdot \overline{q_x q_\lambda} \cdot \frac{(k-1) \cdot k}{2k} \cdot \overline{q_x q_\lambda}
        = \frac{k-1}{2k} \cdot \overline{q_x q_\lambda}^2
        \text{.}
    \]

    Since we also have $\overline{q_\lambda q_y} \leq \overline{p_x q_y}$ by choice of $\lambda$, we similarly obtain $\int_{\lambda}^{y_2} \overline{p_x q_t} \dt \geq \frac{k-1}{2k} \cdot \overline{q_\lambda q_y}^2$.
    Putting these together gives a lower bound for the first term of $\mathrm{cost}_{\| \cdot \|}(\gamma_{xy})$.
    That means
    \[
        \int_{x_2}^{y_2} \overline{p_x q_t} \dt
        = \int_{x_2}^{\lambda} \overline{p_x q_t} \dt + \int_{\lambda}^{y_2} \overline{p_x q_t} \dt
        \geq \frac{k-1}{2k} \cdot (\overline{q_x q_\lambda}^2 + \overline{q_\lambda q_y}^2)
        \geq \frac{k-1}{4k} \cdot \overline{q_x q_y}^2
    \]
    by utilising $2 \cdot (\overline{q_x q_\lambda}^2 + \overline{q_\lambda q_y}^2) = (\overline{q_x q_\lambda} + \overline{q_\lambda q_y})^2 + (\overline{q_x q_\lambda} - \overline{q_\lambda q_y})^2 \geq (\overline{q_x q_\lambda} + \overline{q_\lambda q_y})^2 = \overline{q_x q_y}^2$ in the final inequality.
    Proceeding analogously for the second term of $\mathrm{cost}_{\| \cdot \|}(\gamma_{xy})$ results in
    \[
        \mathrm{cost}_{\| \cdot \|}(\gamma_{xy})
        = \int_{x_2}^{y_2} \overline{p_x q_t} \dt + \int_{x_1}^{y_1} \overline{p_s q_y} \ds
        \geq \frac{k-1}{4k} \cdot (\overline{q_x q_y}^2 + \overline{p_x p_y}^2)
        \xrightarrow[]{k \to \infty} \frac{1}{4} \cdot (\overline{p_x p_y}^2 + \overline{q_x q_y}^2)
    \]
    and hence $\overline{p_x p_y}^2 + \overline{q_x q_y}^2 \leq 4 \cdot \mathrm{cost}_{\| \cdot \|}(\gamma_{xy})$ as required above, which completes the proof.
    \qed
\end{proof}

\begin{remark}
    \label{rem:tightness}
    It remains open whether the factor of $5$ is tight in general.
    The greatest lower bound that we know of occurs already in 1D:
    Consider $q_x := 0$, $p_x := \phi$, $q_y := \phi + 1$, and $p_y := 2\phi + 1$, where $\phi := (1+\sqrt{5})/2$ is the golden ratio.
    This yields a factor of $2\phi + 1 \approx 4.236$.
    By analysing the trade-off between the inequalities in the above proof, one can likely improve the upper bound in the 1D setting from $5$ to $2\phi + 1$, but closing the gap for dimension $d \geq 2$ seems difficult.
\end{remark}

\autoref{thm:apx-lemma} for the first time allows CDTW approximations without discretisations of any form.
Our approximation algorithm is based on the fact that at least one of the paths $\gamma_{xy}$ and $\gamma_{yx}$ is a happy path whenever both $x$ and $y$ lie on borders of the cell $C$.
At first glance, it thus might even seem like simply omitting all unhappy paths in the propagation procedure of the exact algorithm from~\cite{BuchiBSW2026} yields a polynomial-time result for CDTW in 2D with approximation factor $\beta \leq 5$.

However, this is not the case because minimum-cost happy paths lack an important property of optimal paths:
One can always choose non-crossing optimal paths (cf.~\cite[Lemma~11]{BuchiNW2025}), whereas minimum-cost happy paths may cross, e.g.\ as depicted in the final cell of \autoref{fig:optimal-path-types}.
This prevents the previously mentioned monotone assignment of quadratic pieces.
To achieve a polynomial~bound, we will propagate some unhappy paths while still keeping the resulting complexity low.

\subsection{Algorithm Outline}
\label{subsec:algorithm-outline}

Having established one of the main building blocks, we now provide a conceptual outline of our approximation algorithm, which uses a propagation scheme tailored to \autoref{def:happy-and-unhappy} and \autoref{thm:apx-lemma}.
Although the outline has no requirements on the norm $\tnorm$, note that it only seems feasible to computationally deal with the propagated cost functions if they are well-behaved, such as under polygonal norms.
We defer the implementation details along with the analysis to \autoref{sec:algorithm-analysis}.

The dynamic program in \autoref{alg:cdtw-apx} visits all parameter space cells in layers, starting at~$C_{1,1}$ and ending at~$C_{n,m}$.
For each cell $C$ we store its north, east, south and west border as $C.\kernedN$, $C.\kernedE$, $C.\kernedS$ and~$C.\kernedW$ respectively.
An output border $\mathcal{B}$ of $C$ knows its \emph{adjoining} input border $\mathcal{B}.\mathrm{adj}$ and its \emph{opposing} input border $\mathcal{B}.\mathrm{opp}$.
That means $C.\kernedN.\mathrm{opp} := C.\kernedS =: C.\kernedE.\mathrm{adj}$, so the first three cells of \autoref{fig:optimal-path-types} show paths between opposing borders, and the final two cells show the adjoining~case.
Symmetrically, we thus have $C.\kernedN.\mathrm{adj} := C.\kernedW =: C.\kernedE.\mathrm{opp}$ as well.
(Cf.~\cite[Definition~13]{BuchiBSW2026}.)

Our algorithm computes for every cell border $\mathcal{B}$ a cost function $\mathcal{B}.\mathrm{apx}$ that $5$-approximates the optimum function $\mathrm{opt}_{0,\mathcal{B}}$.
We further store a value $h_{i,j}$ for all $(i,j) \in \{0,\dotsc,n\} \times \{0,\dotsc,m\}$, which is set to the minimum computed cost from the origin $\mathbf{0}$ to the north-east corner $(\| P_{\leq i} \|, \| Q_{\leq j} \|)^{\mathsf{T}}$ of the cell $C_{i,j}$.
The base case of the dynamic program in lines~1--2 works as usual:
It initialises all costs given by paths that travel either straight horizontally or straight vertically from $\mathbf{0}$.

Then the main loop initialises in lines~6--7 for each cell $C$ the approximate cost functions of its output borders $C.\kernedN$ and $C.\kernedE$ via straight paths from the cell corners.
In the following, the costs of optimal paths within $C$ are implicitly given by some choice of $\ell$ as in \autoref{thm:optimal-paths}.

\begin{algorithm}[H]
    \caption{A $5$-approximation for CDTW of polygonal curves $P,Q$ in $(\mathbb{R}^2,\tnorm)$}
    \label{alg:cdtw-apx}
    \begin{algorithmic}[1]
        \normalsize
        \iFor{$i \gets 1$ \algTo $n$} $C_{i,0}.\kernedN.\mathrm{apx} \gets [t \mapsto \smash{\mathrm{opt}_{\| \cdot \|}}(\mathbf{0},(t,0)^\mathsf{T})]$; $h_{i,0} \gets C_{i,0}.\kernedN.\mathrm{apx}(\| P_{\leq i} \|)$
        \iFor{$j \gets 1$ \algTo $m$} $C_{0,j}.\kernedE.\mathrm{apx} \gets [t \mapsto \mathrm{opt}_{\| \cdot \|}(\mathbf{0},(0,t)^\mathsf{T})]$; $h_{0,j} \gets C_{0,j}.\kernedE.\mathrm{apx}(\| Q_{\leq j} \|)$
        \For{$l \gets 2$ \algTo $n + m$}
            \Foreach{$C \gets C_{i,j}$ \algWith $i + j = l$}
                    \Comment{visit the layer-$l$ cells}
                \State $C.\kernedS.\mathrm{apx} \gets C_{i,j-1}.\kernedN.\mathrm{apx}$; $C.\kernedW.\mathrm{apx} \gets C_{i-1,j}.\kernedE.\mathrm{apx}$
                \State $C.\kernedN.\mathrm{apx} \gets [t \mapsto h_{i-1,j} + \mathrm{opt}_{C.\kernedW[],C.\kernedN[]}(\| Q_{\leq j} \|,t)]$
                    \Comment{propagate corner NW}
                \State $C.\kernedE.\mathrm{apx} \gets [t \mapsto h_{i,j-1} + \mathrm{opt}_{C.\kernedS[],C.\kernedE[]}(\| P_{\leq i} \|,t)]$
                    \Comment{propagate corner SE}
                \State $H_\mathcal{S} \gets \{ C.\kernedS.\mathrm{apx}(s) + \mathrm{opt}_{C.\kernedS[],C.\kernedE[]}(s,\| Q_{\leq j} \|) \mid s \in \mathrm{dom}(C.\kernedS[]) \}$
                \State $H_\mathcal{W} \gets \{ C.\kernedW.\mathrm{apx}(s) + \mathrm{opt}_{C.\kernedW[],C.\kernedN[]}(s,\| P_{\leq i} \|) \mid s \in \mathrm{dom}(C.\kernedW[]) \}$
                \State $C.\propagatecosts(\min H_\mathcal{S}, \arg \min H_\mathcal{S}, \min H_\mathcal{W}, \arg \min H_\mathcal{W})$
                    \Comment{call subroutine}
                \State \smash{$h_{i,j} \gets \min H_\mathcal{S} \cup H_\mathcal{W}$}
                    \Comment{store cost of corner NE}
            \EndForeach
        \EndFor
        \Return{$h_{n,m}$}
    \end{algorithmic}
\end{algorithm}

Next, lines~8--11 of \autoref{alg:cdtw-apx} identify the best path from each of the two input borders to the north-east corner of $C$.
See \autoref{fig:algorithm-example-best-paths} for a possible configuration.
Our propagation scheme utilises these paths' costs and starting points, which are passed to the subroutine~$C.\propagatecosts$.
This is a difference to previous propagation-based CDTW algorithms \cite{Klare2020,BuchiNW2025,Brank2022,BuchiBSW2026}, although a related idea was stated in \cite[Observation~3.3]{BuchiBW2009} within the context of the partial Fréchet similarity.

As outlined in \autoref{alg:cdtw-subroutines}, every input border $\mathcal{A}$ of $C$ is then propagated to its adjoining output border $\mathcal{B}$ using the subroutine $\mathcal{B}.\propagatefromadj$, which updates the cost function $\mathcal{B}.\mathrm{apx}$.
It only propagates paths with the same starting point on $\mathcal{A}$ as its best path to the north-east corner, see \autoref{fig:algorithm-example-adj-1}.
This may include unhappy paths with that starting point, e.g.\ as in \autoref{fig:algorithm-example-adj-2}, but it omits all other unhappy paths in contrast to exact algorithms.

\begin{figure}[H]%
    \centering%
    \begin{subfigure}{0.225\linewidth}
        \centering
        \begin{tikzpicture}[x=2.75cm,y=2.75cm]
            \path[valley] (0.25,0) -- (1,0.75);
            \path[main line] (0,0) rectangle (1,1);

            \path[matching path] (0.5,0) -- (0.5,0.25) -- (1,0.75) -- (1,1);
            \path[matching path] (0,0.375) -- (0.625,0.375) -- (1,0.75) -- (1,1);

            \node[matching point,pin={55:$\mathcal{S}(s^*_{\mathcal{S}})$}] at (0.5,0) {};
            \node[matching point,pin={63:$\mathcal{W}(s^*_{\mathcal{W}})$}] at (0,0.375) {};
            \node[matching point] at (1,1) {};
        \end{tikzpicture}
        \subcaption{Best paths from input borders to cell corner $\mathcal{NE}$}
        \label{fig:algorithm-example-best-paths}
    \end{subfigure}%
    \hspace*{0.025\linewidth}%
    \begin{subfigure}{0.225\linewidth}
        \centering
        \begin{tikzpicture}[x=2.75cm,y=2.75cm]
            \path[valley] (0.25,0) -- node[pos=0.3,below right,spacing] {\color{UmiOrange}$\ell$} (1,0.75);
            \path[main line] (0,0) rectangle (1,1);

            \path[matching path] (0,0.375) -- (0.625,0.375) -- (1,0.75) -- (1,1);
            \path[matching path] (0,0.375) -- (0.625,0.375) -- (0.875,0.625) -- (0.875,1);
            \path[matching path] (0,0.375) -- (0.625,0.375) -- (0.75,0.5) -- (0.75,1);
            \path[matching path] (0,1) -- (0.75,1);

            \node[matching point,draw=none,fill=none] at (0.5,0) {};
            \node[matching point] at (0,0.375) {};
            \node[matching point] at (1,1) {};
            \node[matching point] at (0.875,1) {};
            \node[matching point] at (0.75,1) {};
            \node[matching point] at (0,1) {};

            \path[interval n] (0.75,1) -- node {$T$} (1,1);
        \end{tikzpicture}
        \subcaption{Propagation from border $\mathcal{W}$ to border $\mathcal{N}$}
        \label{fig:algorithm-example-adj-1}
    \end{subfigure}%
    \hspace*{0.025\linewidth}%
    \begin{subfigure}{0.225\linewidth}
        \centering
        \begin{tikzpicture}[x=2.75cm,y=2.75cm]
            \path[valley] (0.25,0) -- (1,0.75);
            \path[main line] (0,0) rectangle (1,1);

            \path[matching path] (0.5,0) -- (0.5,0.25) -- (1,0.75) -- (1,1);
            \path[matching path] (0.5,0) -- (0.5,0.25) -- (0.875,0.625) -- (1,0.625);
            \path[matching path] (0.5,0) -- (0.5,0.25) -- (0.75,0.5) -- (1,0.5);
            \path[matching path] (0.5,0) -- (0.5,0.25) -- (0.625,0.375) -- (1,0.375);
            \path[matching path,line join=bevel] (0.5,0) -- (0.5,0.25) -- (0.5,0.25) -- (1,0.25);
            \path[matching path] (0.5,0) -- (0.5,0.125) -- (1,0.125);
            \path[matching path] (1,0) -- (1,0.125);

            \node[matching point] at (0.5,0) {};
            \node[matching point] at (1,1) {};
            \node[matching point] at (1,0.625) {};
            \node[matching point] at (1,0.5) {};
            \node[matching point] at (1,0.375) {};
            \node[matching point] at (1,0.25) {};
            \node[matching point] at (1,0.125) {};
            \node[matching point] at (1,0) {};

            \path[interval e] (1,0.125) -- node {$T$} (1,1);
        \end{tikzpicture}
        \subcaption{Propagation from border $\mathcal{S}$ to border $\mathcal{E}$}
        \label{fig:algorithm-example-adj-2}
    \end{subfigure}%
    \hspace*{0.025\linewidth}%
    \begin{subfigure}{0.225\linewidth}
        \centering
        \begin{tikzpicture}[x=2.75cm,y=2.75cm]
            \path[valley] (0.25,0) -- (1,0.75);
            \path[main line] (0,0) rectangle (1,1);

            \path[matching path] (0.5,0) -- (0.5,1);
            \path[matching path] (0.5,0) -- (0.5,0.25) -- (1,0.75) -- (1,1);
            \path[matching path] (0.5,0) -- (0.5,0.25) -- (0.875,0.625) -- (1,0.625);
            \path[matching path] (0.5,0) -- (0.5,0.25) -- (0.875,0.625) -- (0.875,1);
            \path[matching path] (0.5,0) -- (0.5,0.25) -- (0.75,0.5) -- (1,0.5);
            \path[matching path] (0.5,0) -- (0.5,0.25) -- (0.75,0.5) -- (0.75,1);
            \path[matching path] (0.5,0) -- (0.5,0.25) -- (0.625,0.375) -- (1,0.375);
            \path[matching path] (0.5,0) -- (0.5,0.25) -- (0.625,0.375) -- (0.625,1);
            \path[matching path,line join=bevel] (0.5,0) -- (0.5,0.25) -- (0.5,0.25) -- (1,0.25);
            \path[matching path] (0.5,0) -- (0.5,0.125) -- (1,0.125);
            \path[matching path] (1,0) -- (1,0.125);
            \path[matching path] (0.375,0) -- (0.375,1);
            \path[matching path] (0.25,0) -- (0.25,1);
            \path[matching path] (0,1) -- (0.25,1);

            \node[matching point] at (0.5,0) {};
            \node[matching point] at (0.5,1) {};
            \node[matching point] at (1,1) {};
            \node[matching point] at (1,0.625) {};
            \node[matching point] at (0.875,1) {};
            \node[matching point] at (1,0.5) {};
            \node[matching point] at (0.75,1) {};
            \node[matching point] at (1,0.375) {};
            \node[matching point] at (0.625,1) {};
            \node[matching point] at (1,0.25) {};
            \node[matching point] at (1,0.125) {};
            \node[matching point] at (1,0) {};
            \node[matching point] at (0.375,0) {};
            \node[matching point] at (0.375,1) {};
            \node[matching point] at (0.25,0) {};
            \node[matching point] at (0.25,1) {};
            \node[matching point] at (0,1) {};

            \path[interval n] (0.25,1) -- node {$T$} (1,1);
        \end{tikzpicture}
        \subcaption{Final result after propagating $\mathcal{S}$ to $\mathcal{N}$}
        \label{fig:algorithm-example-opp}
    \end{subfigure}%
    \caption{Example of algorithm steps in case of $\min H_\mathcal{S} = h^*_\mathcal{S} < h^*_\mathcal{W} = \min H_\mathcal{W}$}%
\end{figure}

In a final step, at most one input border $\mathcal{A}$ is propagated to its opposing output border~$\mathcal{B}$, namely if its best path to the north-east corner is better than the other one.
The subroutine $\mathcal{B}.\propagatefromopp$ updates $\mathcal{B}.\mathrm{apx}$ by computing a lower envelope over all points located on $\mathcal{A}$ up to its best path's starting point.
See \autoref{fig:algorithm-example-opp} for an example result.

\begin{algorithm}[H]
    \caption{Subroutines for propagating costs from input to output borders}
    \label{alg:cdtw-subroutines}
    \begin{algorithmic}[1]
        \normalsize
        \Procedure{$C.$PropagateCosts}{$h^*_\mathcal{S}, s^*_\mathcal{S}, h^*_\mathcal{W}, s^*_\mathcal{W}$}
                \phantomsection
                \label{alg:propagate-costs}
                \Comment{propagate...}
            \State $C.\kernedN.\propagatefromadj(s^*_{\kernedW})$
                \Comment{...W to N}
            \State $C.\kernedE.\propagatefromadj(s^*_\mathcal{S})$
                \Comment{...S to E}
            \iIf{$h^*_\mathcal{S} < h^*_\mathcal{W}$} $C.\kernedN.\propagatefromopp(s^*_\mathcal{S})$
                \Comment{...S to N}
            \iIf{$h^*_\mathcal{W} < h^*_\mathcal{S}$} $C.\kernedE.\propagatefromopp(s^*_\mathcal{W})$
                \Comment{...W to E}
        \EndProcedure

        \vspace*{\doublerulesep}
        \hrule
        \vspace*{\doublerulesep}

        \Procedure{$\mathcal{B}$.PropagateFromAdjoining}{$s^*$}
                \phantomsection
                \label{alg:propagate-from-adj}
            \State $\mathcal{A} \gets \mathcal{B}.\mathrm{adj}$; $\mathrm{apx}_{s^*} \gets [t \mapsto \mathcal{A}.\mathrm{apx}(s^*) + \mathrm{opt}_{\mathcal{A},\mathcal{B}}(s^*,t)]$
                \Comment{use only $s^*$}
            \State $T \gets \{ t \in \mathrm{dom}(\mathcal{B}) \mid \mathrm{apx}_{s^*}(t) < \mathcal{B}.\mathrm{apx}(t) \}$; $\mathcal{B}.\mathrm{apx}|_{T} \gets \mathrm{apx}_{s^*}|_{T}$
                \Comment{update costs}
        \EndProcedure

        \vspace*{\doublerulesep}
        \hrule
        \vspace*{\doublerulesep}

        \Procedure{$\mathcal{B}$.PropagateFromOpposing}{$s^*$}
                \phantomsection
                \label{alg:propagate-from-opp}
            \State $\mathcal{A} \gets \mathcal{B}.\mathrm{opp}$; $\mathrm{apx}_{\leq s^*} \gets [t \mapsto \min_{s \leq s^* \land s \leq t} \mathcal{A}.\mathrm{apx}(s) + \mathrm{opt}_{\mathcal{A},\mathcal{B}}(s,t)]$
                \Comment{use all $s \leq s^*$}
            \State $T \gets \{ t \in \mathrm{dom}(\mathcal{B}) \mid \mathrm{apx}_{\leq s^*}(t) < \mathcal{B}.\mathrm{apx}(t) \}$; $\mathcal{B}.\mathrm{apx}|_{T} \gets \mathrm{apx}_{\leq s^*}|_{T}$
                \Comment{update costs}
        \EndProcedure
    \end{algorithmic}
\end{algorithm}

\section{Algorithm Analysis and Specifics}
\label{sec:algorithm-analysis}

In \autoref{subsec:apx-factor} we show that the above algorithm outline provides a $5$-approximation for CDTW.
Its implementation details and the resulting running time depend on the (polygonal) norm~$\tnorm$.
We first consider $\tnorm[1]$ in \autoref{subsec:running-time} and afterwards extend our results in \autoref{subsec:extension}.

Both the proof of approximation factor and the running time analysis build upon the fact that all approximate cost functions computed by our algorithm are continuous.
To show that, one can apply a useful lemma, which says that converging paths have converging costs.

\begin{lemma}[{\cite[Lemma~19]{BuchiBSW2026}}]
    \label{thm:converging-costs}
    Let $\gamma$ be an arbitrary $(x,y)$-path, and let $(\gamma_k)_{k \in \mathbb{N}}$ be a sequence of $(x_k,y_k)$-paths converging to $\gamma$, which means $\lim_{k \to \infty} (x_k,y_k) = (x,y)$ and $\lim_{k \to \infty} \gamma_k(\lambda) = \gamma(\lambda)$ for all $\lambda \in (\| x \|_1, \| y \|_1)$.
    The limit $\lim_{k \to \infty} \mathrm{cost}_{\| \cdot \|}(\gamma_k)$ exists and is equal to $\mathrm{cost}_{\| \cdot \|}(\gamma)$.
\end{lemma}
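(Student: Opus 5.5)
The approach is to view the cost as the integral of one fixed, bounded, Lipschitz function along the paths, and then apply dominated convergence on a slightly shrunken common domain. The endpoint pieces are handled separately by a crude measure bound.

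Let $F \colon [0,\| P \|] \times [0,\| Q \|] \to \mathbb{R}_{\geq 0}$ be $F(z) := \| P_{\| \cdot \|}(z_1) - Q_{\| \cdot \|}(z_2) \|$.

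\textbf{Step 1: properties of $F$.} The arc length parametrisations have constant speed $1$ under $\tnorm$, so they are $1$-Lipschitz. By the triangle inequality,
\[
|F(z) - F(w)| \leq |z_1 - w_1| + |z_2 - w_2| = \| z - w \|_1 ,
\]
so $F$ is continuous. On the compact parameter space $F$ is bounded by some constant $M$, for instance $M = \| P \| + \| Q \| + \| p_0 - q_0 \|$.

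\textbf{Step 2: rewrite the costs.} By \autoref{def:path-and-cost},
\[
\mathrm{cost}_{\| \cdot \|}(\gamma_k) = \int_{a_k}^{b_k} F(\gamma_k(t)) \dt
\quad\text{and}\quad
\mathrm{cost}_{\| \cdot \|}(\gamma) = \int_{a}^{b} F(\gamma(t)) \dt ,
\]
where $a_k := \| x_k \|_1 \to a := \| x \|_1$ and $b_k := \| y_k \|_1 \to b := \| y \|_1$. If $a = b$, both costs are bounded by $M$ times the interval length, so $\mathrm{cost}_{\| \cdot \|}(\gamma_k) \leq M (b_k - a_k) \to 0 = \mathrm{cost}_{\| \cdot \|}(\gamma)$. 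Hence assume $a < b$.

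\textbf{Step 3: shrink the domain.} Fix $\delta > 0$ with $a + \delta < b - \delta$. For all large $k$ we have $[a+\delta, b-\delta] \subseteq [a_k, b_k]$. Then
\[
\Bigl| \mathrm{cost}_{\| \cdot \|}(\gamma_k) - \mathrm{cost}_{\| \cdot \|}(\gamma) \Bigr|
\leq \Bigl| \int_{a+\delta}^{b-\delta} F(\gamma_k(t)) - F(\gamma(t)) \dt \Bigr|
+ M \bigl( |a_k - a| + |b_k - b| + 4\delta \bigr).
\]
The second term collects the integrals over the leftover pieces $[a_k, a+\delta]$, $[b-\delta, b_k]$, $[a, a+\delta]$ and $[b-\delta, b]$. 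Each has integrand at most $M$ and total length at most $|a_k - a| + |b_k - b| + 4\delta$.

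\textbf{Step 4: dominated convergence.} On $[a+\delta, b-\delta] \subset (a,b)$ we have $\gamma_k(t) \to \gamma(t)$ pointwise by hypothesis. Continuity of $F$ gives $F(\gamma_k(t)) \to F(\gamma(t))$, and the integrands are bounded by the constant $M$. So dominated convergence (or bounded convergence) sends the first term to $0$.

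\textbf{Step 5: conclude.} Letting $k \to \infty$ yields $\limsup_k |\mathrm{cost}_{\| \cdot \|}(\gamma_k) - \mathrm{cost}_{\| \cdot \|}(\gamma)| \leq 4M\delta$. Then $\delta \to 0$ shows that the limit exists and equals $\mathrm{cost}_{\| \cdot \|}(\gamma)$.

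\textbf{Main obstacle.} The only delicate point is that the paths live on varying domains $[a_k, b_k]$ and converge only on the open interval $(a,b)$. Shrinking by $\delta$ and bounding the leftover pieces by $M$ times their length handles this. Measurability of $t \mapsto F(\gamma_k(t))$ is immediate because paths in $\Gamma_{\| \cdot \|}(P,Q)$ are $1$-Lipschitz in the $1$-norm, so each composition is continuous. That same equi-Lipschitz property also upgrades pointwise convergence to uniform convergence on compacta if one prefers to avoid measure theory.
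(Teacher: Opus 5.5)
Your proof is correct and complete. The integrand $F(z)=\|P_{\|\cdot\|}(z_1)-Q_{\|\cdot\|}(z_2)\|$ is bounded on the parameter space. It is also $1$-Lipschitz with respect to $\|\cdot\|_1$, because the arc length parametrisations are $1$-Lipschitz in the same norm that measures distances. On $[a+\delta,b-\delta]$, where all $\gamma_k$ are eventually defined, the bounded convergence theorem applies. The four leftover pieces contribute at most $M(|a_k-a|+|b_k-b|+4\delta)$. This correctly handles both the varying domains $[a_k,b_k]$ and the fact that convergence is only assumed on the open interval $(a,b)$. Your separate treatment of the degenerate case $a=b$ (where $x=y$) is also right.

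There is no in-paper argument to compare against. The paper does not prove this statement itself; it imports it as Lemma~19 of \cite{BuchiBSW2026}. Your self-contained argument suffices for every use made of the lemma here, namely the continuity claims in \autoref{thm:apx-function-properties} and \autoref{thm:best-paths}. The elementary variant you mention also works and avoids measure theory. There, the equi-Lipschitz paths converge uniformly on $[a+\delta,b-\delta]$, so the main term is bounded by $(b-a)\sup_t\|\gamma_k(t)-\gamma(t)\|_1\to 0$.
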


\begin{proposition}
    \label{thm:apx-function-properties}
    For every border $\mathcal{B}$ the function $\mathcal{B}.\mathrm{apx}$ computed by our algorithm is continuous.
    Moreover, all $t,t' \in \mathrm{dom}(\mathcal{B})$ with $t < t'$ satisfy $\mathcal{B}.\mathrm{apx}(t) + \mathrm{opt}_{\| \cdot \|}(\mathcal{B}(t), \mathcal{B}(t')) \geq \mathcal{B}.\mathrm{apx}(t')$.
\end{proposition}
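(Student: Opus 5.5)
We argue by induction over the order in which the dynamic program of \autoref{alg:cdtw-apx} processes the cells. The base case covers lines~1--2. There $\mathcal{B}.\mathrm{apx}(t)$ is the cost of the straight path from $\mathbf{0}$ to $\mathcal{B}(t)$, which is a continuous integral of a continuous integrand. Concatenating this path with the straight path from $\mathcal{B}(t)$ to $\mathcal{B}(t')$ gives the straight path to $\mathcal{B}(t')$. Within one cell $\mathrm{opt}_{\| \cdot \|}(\mathcal{B}(t),\mathcal{B}(t'))$ is the cost of the straight segment, since it is the only monotone path. Hence the inequality holds with equality.

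For the inductive step, fix a cell $C$ and an output border $\mathcal{B}$. We assume both properties for the input borders $\mathcal{S},\mathcal{W}$; these are copies of already computed functions (line~5), so the hypothesis applies. The final $\mathcal{B}.\mathrm{apx}$ is the pointwise minimum of at most three functions.
\begin{itemize}
\item The corner initialisation: a constant plus $t \mapsto \mathrm{opt}_{\mathcal{A},\mathcal{B}}(c,t)$.
\item The adjoining candidate $\mathrm{apx}_{s^*}$.
\item Possibly the opposing candidate $\mathrm{apx}_{\leq s^*}$.
\end{itemize}
A pointwise minimum of finitely many continuous functions is continuous. Likewise, if each candidate $f$ satisfies $f(t)+\mathrm{opt}(\mathcal{B}(t),\mathcal{B}(t')) \geq \mathcal{B}.\mathrm{apx}(t')$, then so does the minimum. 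So it suffices to prove, for each candidate $f$, that $f$ is continuous and that $f(t)+\mathrm{opt}_{\| \cdot \|}(\mathcal{B}(t),\mathcal{B}(t')) \geq f(t')$.

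\textbf{Continuity.} For the first two candidates it suffices that $t \mapsto \mathrm{opt}_{\mathcal{A},\mathcal{B}}(s,t)$ is continuous for fixed $s$. I would obtain this from \autoref{thm:optimal-paths}. The optimal path from $\mathcal{A}(s)$ to $\mathcal{B}(t)$ is determined explicitly by the bounding box and the fixed line $\ell$, so its bending points vary continuously in $t$. This includes the switch between cases a and b, where $\xi_x=\xi_y$ degenerates to a corner on $\ell$. Then \autoref{thm:converging-costs} gives convergence of costs.

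For $\mathrm{apx}_{\leq s^*}$ I would show joint continuity of $(s,t)\mapsto \mathcal{A}.\mathrm{apx}(s)+\mathrm{opt}_{\mathcal{A},\mathcal{B}}(s,t)$ by the same argument. The minimisation is over the compact set $\{s \leq s^*, s \leq t\}$, which varies continuously with $t$; in the adjoining-index sense the constraint $s\le t$ only matters for opposing borders. The standard parametric-minimum (Berge-type) argument then yields continuity. The constraint set may be empty for small $t$; there the candidate is $+\infty$ and is never used, and I would check that the update set $T$ still produces a continuous function. This is the delicate spot. I would handle it by noting that near the boundary of $T$ the candidate converges to the cost at $s=t$, which is at least $\mathcal{A}.\mathrm{apx}(t)$ plus a straight segment, and hence at least the existing value via the corner/straight path bound. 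If this fails, I would instead argue that $\mathrm{apx}_{\leq s^*}$ and the current $\mathcal{B}.\mathrm{apx}$ agree at the endpoint of the domain of $\mathrm{apx}_{\leq s^*}$.

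\textbf{Inequality.} Every candidate has the form $f(t)=\min_{s} \bigl(c(s)+\mathrm{opt}_{\| \cdot \|}(\mathcal{A}(s),\mathcal{B}(t))\bigr)$ over some index set $S_t$. The key fact is $S_t \subseteq S_{t'}$ for $t<t'$, which holds for all three shapes: a singleton, or $\{s\le s^*, s\le t\}$. Take the minimiser $s$ for $t$ and concatenate an optimal $(\mathcal{A}(s),\mathcal{B}(t))$-path with an optimal $(\mathcal{B}(t),\mathcal{B}(t'))$-path. The result is an admissible $(\mathcal{A}(s),\mathcal{B}(t'))$-path, so
\[
c(s)+\mathrm{opt}_{\| \cdot \|}(\mathcal{A}(s),\mathcal{B}(t'))\le c(s)+\mathrm{opt}_{\| \cdot \|}(\mathcal{A}(s),\mathcal{B}(t))+\mathrm{opt}_{\| \cdot \|}(\mathcal{B}(t),\mathcal{B}(t')).
\]
Since $s\in S_{t'}$, the left side is at least $f(t')$. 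Here I use that optimal costs within the cell are attained and that path costs are additive under concatenation.

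I expect the main obstacle to be the continuity of $\mathrm{apx}_{\leq s^*}$ together with the update via $T$ at the left end of its domain.
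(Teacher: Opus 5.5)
Your proposal is correct. The inequality part is the paper's argument: extend an optimal path along $\mathcal{B}$, note that the admissible start points only grow with $t$, and observe that the lower envelope inherits the property from its three constituents.

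For continuity of $\mathrm{apx}_{\leq s^*}$, however, you take a genuinely different route. The paper fixes best-path start points $s^*_t$ and proves the $\limsup$/$\liminf$ bounds by explicit path surgery. It extends paths along $\mathcal{B}$, shifts straight paths to the left, and replaces suffixes by straight paths. This requires case splits on $s^*_{t_0}<t_0$ versus $s^*_{t_0}=t_0$, and on $s^*_{t'}<t_0$ versus $s^*_{t'}\geq t_0$. It invokes \autoref{thm:converging-costs} only for these explicitly constructed sequences.

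You instead isolate a single lemma: joint continuity of $(s,t)\mapsto\mathcal{A}.\mathrm{apx}(s)+\mathrm{opt}_{\mathcal{A},\mathcal{B}}(s,t)$ on the closed set $\{s\leq t\}$. You then finish with a Berge-type compactness argument over the nonempty, compact, continuously varying interval $[\min\mathrm{dom}(\mathcal{A}),\min\{s^*,t\}]$. Lower semicontinuity comes from convergent subsequences of minimisers. Upper semicontinuity comes from taking $s_k:=\min\{s_0,s^*,t_k\}$ for a minimiser $s_0$ at $t_0$.

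This is more modular and removes the case analysis, but it shifts the work into the lemma. You must check that the bending points from \autoref{thm:optimal-paths} vary continuously in both endpoints simultaneously. This includes the transition between cases a and b, and the diagonal $s=t$, where the bounding box degenerates and the paths collapse onto the straight segment. Your sketch does address these configurations, and they are exactly the ones the paper treats by hand.

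One correction: the spot you flag as the main obstacle does not occur. Opposing borders have the same domain: the south and north borders of $C_{i,j}$ both parametrise $[\|P_{\leq i-1}\|,\|P_{\leq i}\|]$. So $s=\min\mathrm{dom}(\mathcal{A})$ always satisfies $s\leq s^*$ and $s\leq t$. Hence the feasible set is never empty, and $\mathrm{apx}_{\leq s^*}$ is finite on all of $\mathrm{dom}(\mathcal{B})$. The update via $T$ is then simply a pointwise minimum of finitely many continuous functions, and your fallback arguments can be dropped.
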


The proof of \autoref{thm:apx-function-properties} is deferred to \autoappref{app:apx-function-properties-proof}.
Showing the continuity of $\mathcal{B}.\mathrm{apx}$ through \autoref{thm:converging-costs} is similar to the corresponding proof for the optimum function $\mathrm{opt}_{0,\mathcal{B}}$ \cite[Theorem~20]{BuchiBSW2026}.
The second property from \autoref{thm:apx-function-properties} implies that travelling on a border does not improve the computed costs, which justifies the use of the corner costs $h_{i-1,j}, h_{i,j-1}$ in \autoref{alg:cdtw-apx}.

\subsection{Proof of Approximation Factor}
\label{subsec:apx-factor}

We want to show that the cost of every optimal path from $\mathbf{0}$ to a cell border is $5$-approximated.
To that end, we give a formal and more general definition of best paths to a point.
We further establish the existence of best paths and that they are closed under taking prefix paths.

\begin{definition}
    \label{def:best-path}
    Let $\mathcal{A}$ be an input border of a cell $C$, and let $y \in C$ be some point.
    A \emph{best~path} from $\mathcal{A}$ to $y$ is an optimal $(\mathcal{A}(s^*_y),y)$-path~$\gamma^*_y$ such that $s^*_y \in \mathrm{dom}(\mathcal{A})$ satisfies $\mathcal{A}(s^*_y) \preceq y$ and minimises the function $s \mapsto \mathcal{A}.\mathrm{apx}(s) + \mathrm{opt}_{\| \cdot \|}(\mathcal{A}(s),y)$.
    If the function value at $s^*_y$ is less than the analogous minimum for the other input border, we call $\gamma^*_y$ a \emph{dominating} best path.
\end{definition}

\begin{lemma}
    \label{thm:best-paths}
    For each input border $\mathcal{A}$ of a cell $C$ and each point $y \in C$ there is a best path $\gamma^*_y$ from $\mathcal{A}$ to $y$.
    Now let $\mathcal{A}(s^*_y)$ with $s^*_y \in \mathrm{dom}(\mathcal{A})$ be the starting point of $\gamma^*_y$, and let $\lambda \in [\lambda^*_y, \| y \|_1]$ be arbitrary, where $\lambda^*_y := \| \mathcal{A}(s^*_y) \|_1$.
    Then the prefix path $\gamma^*_y|_{[\lambda^*_y, \lambda]}$ is a best path from $\mathcal{A}$ to $\gamma^*_y(\lambda)$.
    Furthermore, if $\gamma^*_y$ is dominating, we have that $\gamma^*_y|_{[\lambda^*_y, \lambda]}$ is a dominating best path as well.
\end{lemma}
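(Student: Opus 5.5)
We split the statement into three parts: existence, prefix optimality, and dominance.

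\emph{Existence.} Consider the set $S_y := \{ s \in \mathrm{dom}(\mathcal{A}) \mid \mathcal{A}(s) \preceq y \}$. It is a nonempty closed interval, because it contains the endpoint of $\mathcal{A}$ at the cell corner, and the condition is a linear inequality in one coordinate. We want $F(s) := \mathcal{A}.\mathrm{apx}(s) + \mathrm{opt}_{\| \cdot \|}(\mathcal{A}(s),y)$ to be continuous on $S_y$; then a minimiser $s^*_y$ exists by compactness.
\begin{itemize}
\item $\mathcal{A}.\mathrm{apx}$ is continuous by \autoref{thm:apx-function-properties}.
\item For $s \mapsto \mathrm{opt}_{\| \cdot \|}(\mathcal{A}(s),y)$, we use the explicit optimal paths of \autoref{thm:optimal-paths} inside the cell $C$. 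These paths depend continuously on the start point, since the bending points $\xi_x, \xi_y$ or $\xi$ move continuously with $x$. The only delicate spot is the switch between cases (a) and (b); there $\ell$ touches the boundary of $\Xi$, and the two descriptions coincide.
\item \autoref{thm:converging-costs} then turns convergence of these paths into convergence of their costs. Alternatively, we can cite the continuity of $\mathrm{opt}_{\mathcal{A},\mathcal{B}}$ already used for \cite[Theorem~20]{BuchiBSW2026}.
\end{itemize}
Finally, an optimal $(\mathcal{A}(s^*_y),y)$-path exists by \autoref{thm:optimal-paths}, and we take $\gamma^*_y$ to be that path.

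\emph{Prefix property.} Write $z := \gamma^*_y(\lambda)$, which lies in $C$ since $\gamma^*_y$ stays in the convex cell. The prefix $\gamma^*_y|_{[\lambda^*_y,\lambda]}$ is optimal by the usual subpath argument: replacing it by a cheaper $(\mathcal{A}(s^*_y),z)$-path and concatenating with the suffix would beat $\gamma^*_y$. Here costs are additive because the integrand depends only on the current parameter point, and concatenations of paths are paths. The main step is minimality of $s^*_y$ for the target $z$. Suppose some $s$ with $\mathcal{A}(s) \preceq z$ satisfies
\[
\mathcal{A}.\mathrm{apx}(s) + \mathrm{opt}(\mathcal{A}(s),z) < \mathcal{A}.\mathrm{apx}(s^*_y) + \mathrm{opt}(\mathcal{A}(s^*_y),z).
\]
Adding the suffix cost $\mathrm{cost}(\gamma^*_y|_{[\lambda,\|y\|_1]})$ to both sides gives the following.
\begin{itemize}
\item On the right, optimality of $\gamma^*_y$ and of its prefix yields exactly $F(s^*_y)$.
\item On the left, the triangle-type bound $\mathrm{opt}(a,z) + \mathrm{opt}(z,y) \ge \mathrm{opt}(a,y)$ (concatenation) yields at least $F(s)$.
\end{itemize}
Together these give $F(s) < F(s^*_y)$. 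Since $\mathcal{A}(s) \preceq z \preceq y$, the point $s$ is admissible for $y$, which contradicts the choice of $s^*_y$.

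\emph{Dominance.} Let $\mathcal{A}'$ be the other input border and suppose dominance fails at $z$. Then some $s'$ with $\mathcal{A}'(s') \preceq z$ satisfies
\[
\mathcal{A}'.\mathrm{apx}(s') + \mathrm{opt}(\mathcal{A}'(s'),z) \le \mathcal{A}.\mathrm{apx}(s^*_y) + \mathrm{opt}(\mathcal{A}(s^*_y),z).
\]
Adding the suffix cost and arguing as before gives
\[
\mathcal{A}'.\mathrm{apx}(s') + \mathrm{opt}(\mathcal{A}'(s'),y) \le F(s^*_y).
\]
The left side is at least the minimum for $\mathcal{A}'$ at $y$, so that minimum is at most $F(s^*_y)$, contradicting that $\gamma^*_y$ is dominating at $y$.

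I expect the continuity step in the existence part to be the main obstacle. The switch between the two path types, and points where $y$ shares a coordinate with the border, need care. The remaining steps are routine exchange arguments.
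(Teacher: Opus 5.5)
Your proposal is correct and follows the paper's proof. Existence comes from continuity of $\mathcal{A}.\mathrm{apx}$ (\autoref{thm:apx-function-properties}) and of $s \mapsto \mathrm{opt}_{\| \cdot \|}(\mathcal{A}(s),y)$ (via \autoref{thm:converging-costs} and \autoref{thm:optimal-paths}) together with compactness; the prefix and dominance claims come from the same concatenation/exchange argument, which you merely spell out in more detail (separating prefix optimality from minimality of $s^*_y$, and using that the other border's minimum at $z$ is attained).
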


\begin{proof}
    Consider the function $s \mapsto \mathcal{A}.\mathrm{apx}(s) + \mathrm{opt}_{\| \cdot \|}(\mathcal{A}(s),y)$ from \autoref{def:best-path}.
    By \autoref{thm:apx-function-properties}, $\mathcal{A}.\mathrm{apx}$ is continuous.
    \autoref{thm:converging-costs} and \autoref{thm:optimal-paths} imply that $s \mapsto \mathrm{opt}_{\| \cdot \|}(\mathcal{A}(s),y)$ is also continuous.
    Basic calculus says that their sum is a continuous function attaining its minimum on the closed interval $\{ s \in \mathrm{dom}(\mathcal{A}) \mid \mathcal{A}(s) \preceq y \}$ at some $s^*_y$.
    If any prefix path $\gamma^*_y|_{[\lambda^*_y,\lambda]}$ of an optimal $(\mathcal{A}(s^*_y),y)$-path~$\gamma^*_y$ were not a best path, there would be a better path $\gamma_\lambda$ from $\mathcal{A}$ to $\gamma^*_y(\lambda)$.
    Concatenating~$\gamma_\lambda$ with the suffix of $\gamma^*_y$ would yield a better path from $\mathcal{A}$ to $y$ than $\gamma^*_y$, a contradiction.
    Similarly,~the existence of a non-dominating prefix path of $\gamma^*_y$ would imply that $\gamma^*_y$ is non-dominating.
    \qed
\end{proof}

This now enables us to show that for paths whose costs are not propagated by the subroutines from \autoref{alg:cdtw-subroutines} there are propagations of paths with better or approximating costs.
In case of opposing borders, we always find better paths and thus do not require any approximations.

\begin{proposition}
    \label{thm:apx-factor-opp}
    Let $\mathcal{B}$ be an output border of a cell $C$ with opposing input border $\mathcal{A} := \mathcal{B}.\mathrm{opp}$.
    We have $\mathcal{B}.\mathrm{apx}(t) \leq \mathcal{A}.\mathrm{apx}(s) + \mathrm{opt}_{\mathcal{A},\mathcal{B}}(s,t)$ for all $(s,t) \in \mathrm{dom}(\mathcal{A}) \times \mathrm{dom}(\mathcal{B})$ with $s \leq t$.
\end{proposition}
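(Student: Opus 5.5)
We treat the case $\mathcal{B} = C.\kernedN$ and $\mathcal{A} = C.\kernedS$. The case $\mathcal{B} = C.\kernedE$, $\mathcal{A} = C.\kernedW$ is symmetric after swapping coordinates. Two facts are used throughout:
\begin{itemize}
  \item The updates in \autoref{alg:cdtw-subroutines} only ever lower $\mathcal{B}.\mathrm{apx}$ pointwise. So if some propagation step offered a value $v$ at $t$, the final $\mathcal{B}.\mathrm{apx}(t)$ is at most $v$.
  \item Concatenating two paths inside $C$ that meet at a point gives an $(x,y)$-path. Its cost therefore bounds the corresponding $\mathrm{opt}$ value from above.
\end{itemize}
Fix $(s,t)$ with $s \le t$, and let $\gamma$ be an optimal $(\mathcal{S}(s),\mathcal{N}(t))$-path. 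The plan is a crossing (exchange) argument against the best paths to the north-east corner that \autoref{alg:cdtw-apx} computes in lines~8--9. \autoref{thm:best-paths} guarantees these best paths exist and that their prefixes are again best paths. We split into cases according to which branch of $C.\propagatecosts$ was taken.

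\emph{Case $h^*_\mathcal{S} < h^*_\mathcal{W}$.} Here $\mathcal{N}.\propagatefromopp(s^*_\mathcal{S})$ is called.
\begin{itemize}
  \item If $s \le s^*_\mathcal{S}$, the pair $(s,t)$ is among those minimised over in $\mathrm{apx}_{\le s^*}$, and we are done.
  \item If $s > s^*_\mathcal{S}$, let $\gamma^*$ be the best path from $\mathcal{S}$ to the corner $\mathcal{NE}$, starting at $\mathcal{S}(s^*_\mathcal{S})$. The path $\gamma$ starts strictly to the right of $\gamma^*$ on the south border. It ends at $\mathcal{N}(t)$, which is weakly to the left of $\mathcal{NE}$, the endpoint of $\gamma^*$.
  \item Since both are monotone curves in the rectangle $C$ from the south border to the top edge, an intermediate value argument gives a common point $z$. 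Apply it to the difference of the first coordinates along horizontal levels, or use a Jordan-type separation argument.
  \item By \autoref{thm:best-paths}, the prefix of $\gamma^*$ up to $z$ is a best path from $\mathcal{S}$ to $z$. Hence $\mathcal{S}.\mathrm{apx}(s^*_\mathcal{S}) + \mathrm{opt}(\mathcal{S}(s^*_\mathcal{S}),z) \le \mathcal{S}.\mathrm{apx}(s) + \mathrm{opt}(\mathcal{S}(s),z)$.
  \item Concatenate this prefix with the suffix of $\gamma$ from $z$. We get $\mathcal{S}.\mathrm{apx}(s^*_\mathcal{S}) + \mathrm{opt}_{\mathcal{S},\mathcal{N}}(s^*_\mathcal{S},t) \le \mathcal{S}.\mathrm{apx}(s) + \mathrm{opt}_{\mathcal{S},\mathcal{N}}(s,t)$.
  \item The left-hand side is offered by the opposing propagation, since $s^*_\mathcal{S} \le s^*_\mathcal{S}$ and $s^*_\mathcal{S} \le s \le t$.
\end{itemize}

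\emph{Case $h^*_\mathcal{S} \ge h^*_\mathcal{W}$.} No opposing propagation happens, so we route through the west border instead.
\begin{itemize}
  \item Let $\gamma^*_\mathcal{W}$ be the best path from $\mathcal{W}(s^*_\mathcal{W})$ to $\mathcal{NE}$. It runs from the west border to the north-east corner and separates $C$. The start $\mathcal{S}(s)$ of $\gamma$ lies weakly below it, and the end $\mathcal{N}(t)$ lies on the top edge, weakly above it. So again $\gamma$ and $\gamma^*_\mathcal{W}$ share a point $z$.
  \item Define $V_\mathcal{A}(z)$ as the minimum of $\mathcal{A}.\mathrm{apx}(s') + \mathrm{opt}(\mathcal{A}(s'),z)$ over admissible $s'$. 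We claim $V_\mathcal{W}(z) \le V_\mathcal{S}(z)$.
  \item Suppose instead $V_\mathcal{S}(z) < V_\mathcal{W}(z)$. The prefix of $\gamma^*_\mathcal{W}$ realises $V_\mathcal{W}(z)$ by \autoref{thm:best-paths}. Concatenating a best path from $\mathcal{S}$ to $z$ with the suffix of $\gamma^*_\mathcal{W}$ from $z$ to $\mathcal{NE}$ would give an element of $H_\mathcal{S}$ strictly below $h^*_\mathcal{W} \le h^*_\mathcal{S} = \min H_\mathcal{S}$. That is a contradiction.
  \item Therefore $\mathcal{W}.\mathrm{apx}(s^*_\mathcal{W}) + \mathrm{opt}(\mathcal{W}(s^*_\mathcal{W}),z) \le V_\mathcal{S}(z) \le \mathcal{S}.\mathrm{apx}(s) + \mathrm{opt}(\mathcal{S}(s),z)$.
  \item Concatenating the prefix of $\gamma^*_\mathcal{W}$ with the suffix of $\gamma$ bounds $\mathcal{W}.\mathrm{apx}(s^*_\mathcal{W}) + \mathrm{opt}_{\mathcal{W},\mathcal{N}}(s^*_\mathcal{W},t)$ by the target quantity.
  \item This value is exactly what $\mathcal{N}.\propagatefromadj(s^*_\mathcal{W})$ offers, which finishes this case.
\end{itemize}
Ties $h^*_\mathcal{S} = h^*_\mathcal{W}$ fall into this second case, and the strictness needed for the contradiction is preserved.

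\emph{Main obstacle.} The step needing the most care is the rigorous existence of the crossing point $z$. Paths are only piecewise $C^1$ and monotone, and endpoints may coincide, for instance $t = \| P_{\le i} \|$ or $s^*_\mathcal{W}$ at a corner. I would handle it via the functions $\lambda \mapsto$ horizontal offset between the two paths at equal height, using monotonicity and continuity. The degenerate cases where the paths touch only at an endpoint would be treated separately; there the concatenation is trivial.
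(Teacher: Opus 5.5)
Your proof is correct and follows essentially the same route as the paper. You split on whether $\mathcal{B}$'s opposing propagation is called, handle $s \le s^*$ directly, and otherwise exchange along a crossing point with a best path to the north-east corner, taken from $\mathcal{A}$ itself or from $\mathcal{B}.\mathrm{adj}$. Your non-strict contradiction argument in the second case is slightly more careful than the paper's: the paper asserts a strictly dominating best path from the adjoining border, but none exists when $h^*_\mathcal{S} = h^*_\mathcal{W}$, and the non-strict inequality you derive is all that is needed.
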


\begin{proof}
    Let $\gamma$ be an optimal $(\mathcal{A}(s),\mathcal{B}(t))$-path.
    We next distinguish between whether the subroutine $\mathcal{B}.\propagatefromopp$ from \autoref{alg:cdtw-subroutines} is called for the given border $\mathcal{B}$ or not.
    Assume first that it is called.
    If $s \leq s^*$ holds, where $s^* := \arg \min H_\mathcal{A}$ is given by lines~8--10 of \autoref{alg:cdtw-apx}, then this call yields $\mathcal{B}.\mathrm{apx}(t) \leq \mathrm{apx}_{\leq s^*}(t) \leq \mathcal{A}.\mathrm{apx}(s) + \mathrm{opt}_{\mathcal{A},\mathcal{B}}(s,t)$.
    Else, we have $s > s^*$ and it follows that $\gamma$ intersects a best path $\gamma^*$ from $\mathcal{A}$ to the north-east corner of $C$ in some point $z \in C$, see \autoref{fig:intersection-of-paths-opp}.
    \autoref{thm:best-paths} says $\mathcal{A}.\mathrm{apx}(s^*) + \mathrm{opt}_{\| \cdot \|}(\mathcal{A}(s^*),z) \leq \mathcal{A}.\mathrm{apx}(s) + \mathrm{opt}_{\| \cdot \|}(\mathcal{A}(s),z)$, so that the call yields $\mathcal{B}.\mathrm{apx}(t) \leq \mathrm{apx}_{\leq s^*}(t) \leq \mathcal{A}.\mathrm{apx}(s^*) + \mathrm{opt}_{\mathcal{A},\mathcal{B}}(s^*,t) \leq \mathcal{A}.\mathrm{apx}(s) + \mathrm{opt}_{\mathcal{A},\mathcal{B}}(s,t)$.

    Assume now that $\mathcal{B}.\propagatefromopp$ is not called.
    Because of the decisions made in lines~4--5 of \autoref{alg:cdtw-subroutines}, there necessarily exists a dominating best path $\gamma^*_0$ from the other input border $\mathcal{A}_0 := \mathcal{B}.\mathrm{adj}$ to the north-east corner of $C$.
    The path $\gamma^*_0$ intersects every path from $\mathcal{A}$ to~$\mathcal{B}$, including $\gamma$.
    Let $z_0 \in C$ be an intersection point of these two paths, and let $\mathcal{A}_0(s^*_0)$ be the starting point of $\gamma^*_0$.
    We have $\mathcal{A}_0.\mathrm{apx}(s^*_0) + \mathrm{opt}_{\| \cdot \|}(\mathcal{A}_0(s^*_0),z_0) < \mathcal{A}.\mathrm{apx}(s) + \mathrm{opt}_{\| \cdot \|}(\mathcal{A}(s),z_0)$ by \autoref{thm:best-paths}.
    The subroutine call $\mathcal{B}.\propagatefromadj$, which happens in lines~2--3 of \autoref{alg:cdtw-subroutines}, now results in $\mathcal{B}.\mathrm{apx}(t) \leq \mathrm{apx}_{s^*_0}(t) = \mathcal{A}_0.\mathrm{apx}(s^*_0) + \mathrm{opt}_{\mathcal{A}_0,\mathcal{B}}(s^*_0,t) < \mathcal{A}.\mathrm{apx}(s) + \mathrm{opt}_{\mathcal{A},\mathcal{B}}(s,t)$.
    \qed
\end{proof}

In case of adjoining borders, we get a similar result for happy paths, whereas unhappy paths might not intersect any propagated best paths.
We can, however, use \autoref{thm:apx-lemma} to show that all unhappy paths are already approximated by the propagation of the related cell corner.

\filbreak

\begin{figure}[H]%
    \centering%
    \begin{subfigure}{0.3\linewidth}
        \centering
        \begin{tikzpicture}[x=3.25cm,y=3.25cm,trim left=0cm,trim right=3.25cm]
            \path[valley] (0,0) -- (1,1);
            \path[main line] (0,0) rectangle (1,1);

            \path[matching path] (1/3,0) -- (1/3,1/3) -- (1,1);
            \node[matching point,label={below:$\mathcal{A}(s^*)$}] at (1/3,0) {};
            \node[matching point] at (1,1) {};

            \path[matching path] (2/3,0) -- (2/3,1);
            \node[matching point,label={below:$\mathcal{A}(s)$}] at (2/3,0) {};
            \node[matching point,label={above:$\mathcal{B}(t)$}] at (2/3,1) {};

            \node[inner sep=2pt,pin={below right:$z$}] at (2/3,2/3) {};
        \end{tikzpicture}
        \subcaption{Intersection with best path \\ in case of opposing borders}
        \label{fig:intersection-of-paths-opp}
    \end{subfigure}%
    \hspace*{0.1\linewidth/3}%
    \begin{subfigure}{0.3\linewidth}
        \centering
        \begin{tikzpicture}[x=3.25cm,y=3.25cm,trim left=0cm,trim right=3.25cm]
            \path[valley] (0,0) -- (1,1);
            \path[main line] (0,0) rectangle (1,1);

            \path[matching path] (0.25,0) -- (0.25,0.25) -- (0.375,0.375) -- (1,0.375);
            \node[matching point,label={below:$\mathcal{A}(s)$}] at (0.25,0) {};
            \node[matching point,pin={120:$\mathcal{B}(t)$}] at (1,0.375) {};

            \path[matching path] (0.625,0) -- (0.625,0.625) -- (1,1);
            \node[matching point,label={below:$\mathcal{A}(s^*)$}] at (0.625,0) {};
            \node[matching point] at (1,1) {};

            \node[inner sep=2.5pt,pin={below left:$z$}] at (0.625,0.375) {};
        \end{tikzpicture}
        \subcaption{Intersection with best path \\ in case of adjoining borders}
        \label{intersection-of-paths-adj}
    \end{subfigure}%
    \hspace*{0.1\linewidth/3}%
    \begin{subfigure}{0.3\linewidth}
        \centering
        \begin{tikzpicture}[x=3.25cm,y=3.25cm,trim left=0cm,trim right=3.25cm]
            \path[valley] (0,1/3) -- node[pos=0.5,above left,spacing] {\color{UmiOrange}$\ell$} (2/3,1);
            \path[main line] (0,0) rectangle (1,1);

            \path[matching path] (0.5,0) |- (1,0.5);
            \path[matching path,double=UmiGreen] (1,0) -- (1,0.5);
            \path[matching path,double=UmiGreen,util dash] (0.5,0) -- (1,0);

            \node[matching point,split fill={UmiSkyblue}{UmiGreen},label={below:$\mathcal{A}(s)$}] (s) at (0.5,0) {};
            \node[matching point,split fill={UmiSkyblue}{UmiGreen},pin={120:$\mathcal{B}(t)$}] at (1,0.5) {};
            \node[matching point,fill=UmiGreen,label={below:$c\vphantom{\mathcal{A}(s)}$}] (c) at (1,0) {};

            \begin{scope}[on background layer,shift={(0.5,0)},scale=0.5]
                \apxhatch{5}
            \end{scope}
        \end{tikzpicture}
        \subcaption{Approximation of unhappy path by cell corner path}
        \label{fig:approximation-of-path}
    \end{subfigure}%
    \caption{Different configurations yielding a path with at worst approximating cost}%
\end{figure}

\begin{proposition}
    \label{thm:apx-factor-adj}
    Let $\mathcal{B}$ be an output border of a cell $C$ with adjoining input border $\mathcal{A} := \mathcal{B}.\mathrm{adj}$.
    We have $\mathcal{B}.\mathrm{apx}(t) \leq \mathcal{A}.\mathrm{apx}(s) + 5 \cdot \mathrm{opt}_{\mathcal{A},\mathcal{B}}(s,t)$ for all $(s,t) \in \mathrm{dom}(\mathcal{A}) \times \mathrm{dom}(\mathcal{B})$.
\end{proposition}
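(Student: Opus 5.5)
Fix $(s,t)$ with $\mathcal{A}(s) \preceq \mathcal{B}(t)$; this is the only case in which $\mathrm{opt}_{\mathcal{A},\mathcal{B}}(s,t)$ is defined. Let $\gamma$ be an optimal $(\mathcal{A}(s),\mathcal{B}(t))$-path as in \autoref{thm:optimal-paths}. By symmetry, assume $\mathcal{A} = C.\kernedS$ and $\mathcal{B} = C.\kernedE$, and write $c$ for the south-east corner of $C$. The proof splits according to whether $\gamma$ is happy or unhappy. For happy paths we will show the stronger bound with factor $1$, arguing as in \autoref{thm:apx-factor-opp}. For unhappy paths we will compare against the corner propagation in line~7 of \autoref{alg:cdtw-apx}, using \autoref{thm:apx-lemma}.

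\emph{Unhappy case.} By \autoref{def:happy-and-unhappy}, $\gamma$ is of type \autoref{thm:optimal-paths}b. Since $x = \mathcal{A}(s)$ lies on the south border and $y = \mathcal{B}(t)$ lies on the east border, the bending point $(y_1,x_2)^{\mathsf{T}} = c$ lies on the cell boundary. An unhappy $\gamma$ must therefore bend at $(x_1,y_2)^{\mathsf{T}}$, that is, $\gamma = \gamma_{xy}$. The alternative $\gamma_{yx}$ runs along the south border to $c$ and then up the east border to $\mathcal{B}(t)$. By \autoref{thm:apx-lemma}, $\mathrm{cost}(\gamma_{yx}) \leq 5\,\mathrm{opt}_{\mathcal{A},\mathcal{B}}(s,t)$. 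Next, by the second property of \autoref{thm:apx-function-properties} applied to $C.\kernedS$ between $s$ and the parameter of $c$, we get $h_{i,j-1} = \mathcal{A}.\mathrm{apx}(\text{end}) \leq \mathcal{A}.\mathrm{apx}(s) + \mathrm{opt}(x,c)$. Here we use that $h_{i,j-1}$ equals the apx value of the south border at its east end; this follows from how $h$ is stored, since the south border of $C$ is the north border of $C_{i,j-1}$. Line~7 initialises $\mathcal{B}.\mathrm{apx}(t) \leq h_{i,j-1} + \mathrm{opt}(c,\mathcal{B}(t))$, and later updates only decrease this value. Chaining these and using $\mathrm{opt}(x,c) + \mathrm{opt}(c,y) \leq \mathrm{cost}(\gamma_{yx})$ gives
\[
\mathcal{B}.\mathrm{apx}(t) \leq \mathcal{A}.\mathrm{apx}(s) + \mathrm{cost}(\gamma_{yx}) \leq \mathcal{A}.\mathrm{apx}(s) + 5\,\mathrm{opt}_{\mathcal{A},\mathcal{B}}(s,t).
\]

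\emph{Happy case.} Let $s^* := s^*_\mathcal{S}$ be the starting point of a best path $\gamma^*$ from $\mathcal{A}$ to the north-east corner, so that $\mathcal{B}.\propagatefromadj(s^*)$ is called. If $s = s^*$, we are done directly. Otherwise we want $\gamma$ and $\gamma^*$ to intersect at some point $z$. Then \autoref{thm:best-paths} (prefix of a best path is a best path) gives
\[
\mathcal{A}.\mathrm{apx}(s^*) + \mathrm{opt}(\mathcal{A}(s^*),z) \leq \mathcal{A}.\mathrm{apx}(s) + \mathrm{opt}(\mathcal{A}(s),z).
\]
Following $\gamma$ from $z$ to $\mathcal{B}(t)$ then yields $\mathrm{apx}_{s^*}(t) \leq \mathcal{A}.\mathrm{apx}(s) + \mathrm{opt}_{\mathcal{A},\mathcal{B}}(s,t)$.

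\emph{Main obstacle.} The hard step is the intersection claim, and it needs a case analysis. If $s^* < s$, then $\gamma^*$ starts west of $\gamma$ and ends at the north-east corner, which lies north-east of $\mathcal{B}(t)$, so $\gamma^*$ must cross $\gamma$ (compare \autoref{intersection-of-paths-adj}). If $s^* > s$, both paths follow the structure of \autoref{thm:optimal-paths} with the same line $\ell$. A happy $\gamma$ goes up to $\ell$, then along $\ell$, then horizontally east; alternatively it goes east along the south border and then up the east border. The path $\gamma^*$ goes vertically up to $\ell$ (or up to its bending height) and then along $\ell$ to the corner. Its vertical segment at $x$-coordinate $\mathcal{A}(s^*)_1 \in (s, \text{east end}]$ spans heights from $0$ up to $\ell$. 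A happy $\gamma$ crosses that vertical line at a height between $0$ and the height of $\ell$ there, or it shares the segment of $\ell$. Either way the two paths meet.

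The case where $\gamma^*$ is itself of type b, with its bend at $c$ or at its north-west point, needs separate checking. So does degeneracy where $\gamma$ has no feasible bend on $\ell$ but is still happy because it travels along the south border. I would handle these by noting that the same monotone-crossing argument applies to each segment configuration.
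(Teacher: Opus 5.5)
\textbf{There is a genuine gap in your happy case.} Your unhappy case is correct. It is exactly the paper's argument for paths of the form in \autoref{thm:optimal-paths}b: \autoref{thm:apx-lemma}, the second property of \autoref{thm:apx-function-properties}, and the corner initialisation in line~7. The problem is the intersection claim in the happy case, which is the step you flag yourself.

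\textbf{Your argument for $s^* < s$ runs the wrong way.} The path $\gamma$ goes from the south border to the east border and cuts off the corner $c$. So $\mathcal{A}(s^*)$ with $s^* < s$ and the north-east corner lie on the \emph{same} side of $\gamma$, and nothing forces a crossing. The figure you cite actually shows $s^* > s$. In that case a crossing \emph{is} forced by separation, so your $\ell$-based argument there is unnecessary.

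\textbf{For $s^* \le s$ the intersection must come from the shared line $\ell$.} If $\gamma$ has the form of \autoref{thm:optimal-paths}a, the bounding boxes satisfy $\Xi \subseteq \Xi^*$. Both paths then traverse all of $\ell \cap \Xi$, which is the ``either on $\ell$'' in the paper's proof.

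\textbf{For happy paths of form~(b) the claim is false.} Take $\mathcal{A}(s) = c$. Let $\mathcal{B}(t)$ be different from the north-east corner and strictly below the point where $\ell$ meets the vertical line through $c$. Then $\gamma$ is the straight segment from $c$ up to $\mathcal{B}(t)$, which is happy. However, a best path from any $\mathcal{A}(s^*)$ with $s^* < s$ (via $\ell$, or straight up to the north border) touches the east border only above that point. So the two paths are disjoint. The analogous situation occurs for $\mathcal{B}(t) = c$ when $\ell$ crosses the south border west of $\mathcal{A}(s)$. These configurations therefore cannot be handled by ``the same monotone-crossing argument'' as you propose.

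\textbf{The fix is to split by path form, not by happiness, as the paper does.}
\begin{itemize}
\item For paths of form~(a), you get an intersection point $z$ on $\ell$ when $s^* \le s$, and by separation when $s^* > s$. Then \autoref{thm:best-paths} and the adjoining propagation in lines~2--3 of \autoref{alg:cdtw-subroutines} give the bound with factor~$1$.
\item Every path of form~(b), happy or not, goes to your corner argument. If it bends at $(x_1,y_2)^{\mathsf{T}}$, you need \autoref{thm:apx-lemma}. If it bends at $c$, it already is the corner path, and the bound holds with factor~$1$.
\end{itemize}
This also covers the degenerate configurations above, where $\gamma$ is the straight path through $c$.
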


\begin{proof}
    Let $\gamma$ be an optimal $(\mathcal{A}(s),\mathcal{B}(t))$-path as in \autoref{thm:optimal-paths}.
    Assume first that $\gamma$ is of the form from \autoref{thm:optimal-paths}a, which implies that $\gamma$ intersects every path from $\mathcal{A}$ via $\ell$ to the north-east corner of $C$, either on $\ell$ or else as in \autoref{intersection-of-paths-adj}.
    This includes a best path $\gamma^*$ with starting point $\mathcal{A}(s^*)$, where $s^* := \arg \min H_\mathcal{A}$ is given by lines~8--10 of \autoref{alg:cdtw-apx}.
    Let $z \in C$ be an intersection point of $\gamma$ and $\gamma^*$.
    \autoref{thm:best-paths} says $\mathcal{A}.\mathrm{apx}(s^*) + \mathrm{opt}_{\| \cdot \|}(\mathcal{A}(s^*),z) \leq \mathcal{A}.\mathrm{apx}(s) + \mathrm{opt}_{\| \cdot \|}(\mathcal{A}(s),z)$, so that the subroutine call $\mathcal{B}.\propagatefromadj$, which happens in lines~2--3 of \autoref{alg:cdtw-subroutines}, results in $\mathcal{B}.\mathrm{apx}(t) \leq \mathrm{apx}_{s^*}(t) = \mathcal{A}.\mathrm{apx}(s^*) + \mathrm{opt}_{\mathcal{A},\mathcal{B}}(s^*,t) \leq \mathcal{A}.\mathrm{apx}(s) + \mathrm{opt}_{\mathcal{A},\mathcal{B}}(s,t)$.

    Assume now that $\gamma$ is of the form from \autoref{thm:optimal-paths}b.
    By \autoref{thm:apx-lemma}, we then have $\mathcal{A}.\mathrm{apx}(s) + \allowbreak 5 \cdot \mathrm{opt}_{\mathcal{A},\mathcal{B}}(s,t) = \mathcal{A}.\mathrm{apx}(s) + 5 \cdot \mathrm{cost}_{\| \cdot \|}(\gamma) \geq \mathcal{A}.\mathrm{apx}(s) + \mathrm{cost}_{\| \cdot \|}(\gamma_c)$, where $\gamma_c$ is the $(\mathcal{A}(s),\mathcal{B}(t))$-path through the shared cell corner $c = \mathcal{A}(s_c)$ of the adjoining borders $\mathcal{A}$ and $\mathcal{B}$, see \autoref{fig:approximation-of-path}.
    We obtain $\mathcal{A}.\mathrm{apx}(s) + \mathrm{cost}_{\| \cdot \|}(\gamma_c) \geq \mathcal{A}.\mathrm{apx}(s_c) + \mathrm{opt}_{\| \cdot \|}(c, \mathcal{B}(t)) \geq \mathcal{B}.\mathrm{apx}(t)$ due to \autoref{thm:apx-function-properties} and the initialisation of $\mathcal{B}.\mathrm{apx}$ in lines~6--7 of \autoref{alg:cdtw-apx}, which completes the proof.
    \qed
\end{proof}

We conclude that the conceptual algorithm outline from \autoref{subsec:algorithm-outline} indeed gives a CDTW approximation within factor $\beta \leq 5$.
Using the above propositions, this follows by induction.

\begin{theorem}
    \label{thm:apx-factor-algo}
    Any implementation of \autoref{alg:cdtw-apx} and \autoref{alg:cdtw-subroutines} under a norm $\tnorm$ returns a factor-$5$ approximation of $\mathrm{cdtw}_{\| \cdot \|}(P,Q)$ for all polygonal curves $P,Q$ in $(\mathbb{R}^2,\tnorm)$.
\end{theorem}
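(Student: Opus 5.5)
The plan is to prove two inequalities, $\mathrm{cdtw}_{\| \cdot \|}(P,Q) \leq h_{n,m}$ and $h_{n,m} \leq 5 \cdot \mathrm{cdtw}_{\| \cdot \|}(P,Q)$, and in fact the stronger border-wise statement
\[
\mathrm{opt}_{0,\mathcal{B}}(t) \leq \mathcal{B}.\mathrm{apx}(t) \leq 5 \cdot \mathrm{opt}_{0,\mathcal{B}}(t)
\]
for every border $\mathcal{B}$ and every $t \in \mathrm{dom}(\mathcal{B})$. I would prove this by induction over the layers $l$ of the dynamic program. The theorem then follows because $h_{n,m}$ equals the value of $C_{n,m}.\kernedN.\mathrm{apx}$ at the north-east corner.

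\textbf{Lower bound.} Every value the algorithm writes is of the form $\mathcal{A}.\mathrm{apx}(s) + \mathrm{opt}_{\mathcal{A},\mathcal{B}}(s,t)$ or $h + \mathrm{opt}$. The same holds for the base case in lines~1--2 and for the corner values $h_{i,j}$. Inductively, each such value is at least the cost of a concatenated path from $\mathbf{0}$ to $\mathcal{B}(t)$, hence at least $\mathrm{opt}_{0,\mathcal{B}}(t)$. This direction is routine.

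\textbf{Upper bound.} The base case holds exactly, since lines~1--2 use optimal costs. For the inductive step at a cell $C$ and output border $\mathcal{B}$, take an optimal path from $\mathbf{0}$ to $\mathcal{B}(t)$.
\begin{itemize}
\item If $\mathcal{B}(t)$ is the start corner of $\mathcal{B}$, the path ends at a corner shared with a previous cell. There the bound follows from the induction hypothesis together with the initialisation in lines~6--7.
\item Otherwise the optimal path enters $C$ last through some input border $\mathcal{A}$ at a point $\mathcal{A}(s)$ with $\mathcal{A}(s) \preceq \mathcal{B}(t)$. By subpath optimality,
\[
\mathrm{opt}_{0,\mathcal{B}}(t) = \mathrm{opt}_{0,\mathcal{A}}(s) + \mathrm{opt}_{\mathcal{A},\mathcal{B}}(s,t).
\]
If $\mathcal{A}$ is opposing, \autoref{thm:apx-factor-opp} gives $\mathcal{B}.\mathrm{apx}(t) \leq \mathcal{A}.\mathrm{apx}(s) + \mathrm{opt}_{\mathcal{A},\mathcal{B}}(s,t)$. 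If $\mathcal{A}$ is adjoining, \autoref{thm:apx-factor-adj} gives the same with factor $5$ on the second term. In both cases the induction hypothesis $\mathcal{A}.\mathrm{apx}(s) \leq 5\,\mathrm{opt}_{0,\mathcal{A}}(s)$ yields $\mathcal{B}.\mathrm{apx}(t) \leq 5\,\mathrm{opt}_{0,\mathcal{B}}(t)$.
\end{itemize}
The input functions here are well defined because line~5 copies them from the neighbouring cells' output borders, which were handled in an earlier layer.

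\textbf{Corner values.} I would also check that $h_{i,j} \leq 5 \cdot \mathrm{opt}_{\| \cdot \|}(\mathbf{0},\text{NE corner})$, since lines~6--7 of later cells rely on it. This follows the same way: $h_{i,j}$ is the minimum over both input borders of $\mathcal{A}.\mathrm{apx}(s) + \mathrm{opt}_{\| \cdot \|}(\mathcal{A}(s),\text{NE})$. Any optimal path to the corner enters through some input border, so no approximation factor is lost beyond the one from the induction hypothesis.

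\textbf{Main obstacle.} The delicate points are the ordering and boundary details: justifying that every optimal path to $\mathcal{B}(t)$ has a last entry point on an input border of $C$ (including corners, which lie on two borders and on neighbouring cells), and that all quantities used are available before the layer is processed. \autoref{thm:apx-function-properties} ensures that travelling along a border never helps, so choosing the last entry point is harmless.
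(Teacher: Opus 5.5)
Your proposal is correct and follows the paper's proof: the same border-wise bounds $\mathrm{opt}_{0,\mathcal{B}} \leq \mathcal{B}.\mathrm{apx} \leq 5 \cdot \mathrm{opt}_{0,\mathcal{B}}$, proved by induction over the layers, with the decomposition $\mathrm{opt}_{0,\mathcal{B}}(t) = \mathrm{opt}_{0,\mathcal{A}}(s) + \mathrm{opt}_{\mathcal{A},\mathcal{B}}(s,t)$ fed into \autoref{thm:apx-factor-opp} or \autoref{thm:apx-factor-adj} together with the induction hypothesis. Your separate treatments of the start corner of $\mathcal{B}$ and of the values $h_{i,j}$ are harmless but not needed: such a corner lies on $\mathcal{B}.\mathrm{adj}$ with zero in-cell cost, so \autoref{thm:apx-factor-adj} already covers it, and the paper simply uses that $h_{n,m}$ equals the final north-border value of $C_{n,m}$ at its north-east corner.
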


\begin{proof}
    We claim that for every border $\mathcal{B}$ the computed cost function $\mathcal{B}.\mathrm{apx}$ is a $5$-approximation of the optimum function $\mathrm{opt}_{0,\mathcal{B}}$.
    Clearly, the lower bound $\mathcal{B}.\mathrm{apx}(t) \geq \mathrm{opt}_{0,\mathcal{B}}(t)$ is satisfied for all $t \in \mathrm{dom}(\mathcal{B})$ since $\mathcal{B}.\mathrm{apx}(t)$ is equal to the cost of a $(\mathbf{0},\mathcal{B}(t))$-path by \autoref{alg:cdtw-apx} and \autoref{alg:cdtw-subroutines}.
    In the following, we show the upper bound $\mathcal{B}.\mathrm{apx}(t) \leq 5 \cdot \mathrm{opt}_{0,\mathcal{B}}(t)$ by induction.
    This gives
    \[
        \mathrm{cdtw}_{\| \cdot \|}(P,Q)
        = \mathrm{opt}_{\| \cdot \|}(\mathbf{0},(\| P \|, \| Q \|)^{\mathsf{T}})
        \leq h_{n,m}
        \leq 5 \cdot \mathrm{opt}_{\| \cdot \|}(\mathbf{0},(\| P \|, \| Q \|)^{\mathsf{T}})
        = 5 \cdot \mathrm{cdtw}_{\| \cdot \|}(P,Q)
    \]
    as desired, where $h_{n,m} = C_{n,m}.\kernedN.\mathrm{apx}(\| P \|) = C_{n,m}.\kernedE.\mathrm{apx}(\| Q \|)$ is returned by \autoref{alg:cdtw-apx}.

    The induction's base case is given by lines 1--2 of \autoref{alg:cdtw-apx}:
    For each initialised border~$\mathcal{B}$ and each $t \in \mathrm{dom}(\mathcal{B})$ there exists a single $(\mathbf{0},\mathcal{B}(t))$-path, so $\mathcal{B}.\mathrm{apx}(t) = \mathrm{opt}_{0,\mathcal{B}}(t) \leq 5 \cdot \mathrm{opt}_{0,\mathcal{B}}(t)$.
    As for the inductive step:
    Let $\mathcal{B}$ be an output border of a cell $C$, and let $t \in \mathrm{dom}(\mathcal{B})$ be arbitrary.
    Because every $(\mathbf{0},\mathcal{B}(t))$-path must travel across one of the input borders of $C$, we have $\mathrm{opt}_{0,\mathcal{B}}(t) = \allowbreak \mathrm{opt}_{0,\mathcal{A}}(s) + \mathrm{opt}_{\mathcal{A},\mathcal{B}}(s,t)$ for an input border $\mathcal{A}$ and an $s \in \mathrm{dom}(\mathcal{A})$ with $\mathcal{A}(s) \preceq \mathcal{B}(t)$, cf.~\autoref{thm:best-paths}.
    By \autoref{thm:apx-factor-opp} or \autoref{thm:apx-factor-adj} as well as the induction hypothesis, we therefore obtain
    \[
        \mathcal{B}.\mathrm{apx}(t)
        \leq \mathcal{A}.\mathrm{apx}(s) + 5 \cdot \mathrm{opt}_{\mathcal{A},\mathcal{B}}(s,t)
        \leq 5 \cdot \mathrm{opt}_{0,\mathcal{A}}(s) + 5 \cdot \mathrm{opt}_{\mathcal{A},\mathcal{B}}(s,t)
        = 5 \cdot \mathrm{opt}_{0,\mathcal{B}}(t)
        \text{,}
    \]
    which is the upper bound claimed above.
    This completes the proof of approximation factor.
    \qed
\end{proof}

As discussed in \autoref{rem:tightness}, we do not know whether the factor of $5$ is tight.
Even the value $\beta$ that is tight for \autoref{thm:apx-lemma} might not be tight for \autoref{thm:apx-factor-algo}.
It would only transfer if there are curves $P,Q$ with an optimal $(\mathbf{0},(\| P \|, \| Q \|)^{\mathsf{T}})$-path that is predominantly unhappy.

\subsection{Running Time under the 1-Norm}
\label{subsec:running-time}

In this section we specify how to implement our approximation algorithm under the $1$-norm~$\tnorm[1]$, and analyse its running time on two curves $P,Q$ of complexity $n$ and $m$ respectively.
By~definition, we have $\| z \|_1 = |z_1| + |z_2|$ for $z \in \mathbb{R}^2$, meaning the $1$-norm is linear on each quadrant of the~plane.
Every cell border $\mathcal{B}$ hence contains only $O(1)$ breakpoints where the function $t \mapsto \| P_{\| \cdot \|_1}(\mathcal{B}_1(t)) - \allowbreak Q_{\| \cdot \|_1}(\mathcal{B}_2(t)) \|_1$ switches from one linear function piece to another.
It follows that the base case in lines 1--2 of \autoref{alg:cdtw-apx}, which uses straight paths, has piecewise linear integrands and yields~$O(1)$ quadratic pieces on each initialised border $\mathcal{B}$.
Computing all pieces requires $O(n + m)$ time.

Similarly, propagating the costs of straight paths from the cell corners in lines~6--7 yields~$O(1)$ initial quadratic pieces per output border of the cell $C$.
We now claim that lines~8--11 of~\autoref{alg:cdtw-apx} take $O(N_{C.\kernedS} + N_{C.\kernedW})$ time, where $N_\mathcal{B}$ denotes the number of (maximal) quadratic pieces making up the computed cost function $\mathcal{B}.\mathrm{apx}$ of a border $\mathcal{B}$.
Note that the costs of optimal paths within~$C$ are given by \autoref{thm:optimal-paths}, and a suitable $\ell$ is computable in $O(1)$ time under $\tnorm[1]$ \cite[Corollary~9]{BuchiBSW2026}.
First, we consider the possible starting points of best paths to the north-east corner of $C$.

\begin{definition}
    \label{def:parent}
    Let $\mathcal{A},\mathcal{B}$ be a pair of input and output border, and let $(s_0,t_0) \in \mathrm{dom}(\mathcal{A}) \times \mathrm{dom}(\mathcal{B})$ with $\mathcal{A}(s_0) \preceq \mathcal{B}(t_0)$ be arbitrary.
    If the function $s \mapsto \mathcal{A}.\mathrm{apx}(s) + \mathrm{opt}_{\mathcal{A},\mathcal{B}}(s,t_0)$ attains a semistrict\,%
    \footnote{That means strict to the left and/or to the right, which rules out interior points of constant-valued pieces.} %
    local minimum at $s_0$, we say that $\mathcal{A}(s_0)$ is a \emph{parent candidate} for $\mathcal{B}(t_0)$.
    If $\mathcal{B}.\mathrm{apx}(t_0) = \mathcal{A}.\mathrm{apx}(s_0) + \mathrm{opt}_{\mathcal{A},\mathcal{B}}(s_0,t_0)$ holds, we say that $\mathcal{A}(s_0)$ is a \emph{parent point} of $\mathcal{B}(t_0)$.
    We call an interval $S \subseteq \mathrm{dom}(\mathcal{A})$ a \emph{parent interval} of $T \subseteq \mathrm{dom}(\mathcal{B})$ if each $\mathcal{B}(t)$ with $t \in T$ has a parent point $\mathcal{A}(s)$ with $s \in S$.
\end{definition}

This definition builds upon \cite[Definition~18]{BuchiNW2025}, though it allows for a more fine-grained analysis:
Whereas \cite{BuchiNW2025} collects consecutive quadratic pieces into specific \emph{subsegments} and then introduces a parent relationship for these subsegments, we deal with arbitrary points and intervals.

For a fixed $t_0$, as in lines~8--9 of \autoref{alg:cdtw-apx}, the given function $s \mapsto \mathcal{A}.\mathrm{apx}(s) + \mathrm{opt}_{\mathcal{A},\mathcal{B}}(s,t_0)$ is piecewise quadratic with $N_\mathcal{A} + O(1)$ pieces under $\tnorm[1]$.
This is because $\mathcal{A}.\mathrm{apx}$ has $N_\mathcal{A}$~pieces, and the optimal $(\mathcal{A}(s),\mathcal{B}(t_0))$-paths from \autoref{thm:optimal-paths} have $O(1)$ bending points whose coordinates may depend linearly on the variable $s$.
Hence, $s \mapsto \mathrm{opt}_{\mathcal{A},\mathcal{B}}(s,t_0)$ has $O(1)$ quadratic pieces by the linearity of $\tnorm[1]$ on each quadrant.
Basic calculus says that (semistrict) local minima of~a~piece\-wise quadratic function can only occur at domain endpoints, at breakpoints between~pieces, or at points where the derivative is $0$.
The latter correspond to parabola vertices, see \autoref{fig:local-minima}.

\begin{figure}[H]
    \centering
    \begin{tikzpicture}[x=3cm,y=3cm]
        \useasboundingbox (0.025,1.095) rectangle (4.475,0.0325);
        \path[axes] (0.025,1.095) |- (4.475,0.0325);

        \path[main line,UmiBlue] (0.2,0.32) parabola bend (0.5,0.5) (0.9,0.18);
        \node[round point,fill=UmiYellow] at (0.2,0.32) {};
        \node[round point,fill=white] at (0.5,0.5) {};

        \begin{scope}
            \clip (0.9,0.18) rectangle (2.2,0.87);
            \path[main line,UmiBlue] (0.9,0.18)
                parabola bend (0.8,0.17) (1.5,0.66)
                parabola bend (1.8,0.39) (2.2,0.87);
        \end{scope}

        \node[round point,fill=UmiYellow] at (0.9,0.18) {};
        \node[round point] at (1.5,0.66) {};
        \node[round point,fill=UmiYellow] at (1.8,0.39) {};

        \path[main line,UmiBlue] (2.2,0.87)
            parabola bend (2.5,0.96) (3.05,0.6575)
            parabola bend (3.4,0.1675) (3.7,0.5275)
            parabola[bend at start] (4.3,0.3475);

        \node[round point] at (2.2,0.87) {};
        \node[round point,fill=white] at (2.5,0.96) {};
        \node[round point] at (3.05,0.6575) {};
        \node[round point,fill=UmiYellow] at (3.4,0.1675) {};
        \node[round point] at (3.7,0.5275) {};
        \node[round point,fill=UmiYellow] at (4.3,0.3475) {};
    \end{tikzpicture}
    \caption{Local minima of a continuous and piecewise quadratic function}
    \label{fig:local-minima}
\end{figure}

Overall, there are $O(N_{C.\kernedS} + N_{C.\kernedW})$ parent candidates for the north-east corner of $C$.
We~can check all of them to find starting points of best paths, as the global minima that exist by~\autoref{thm:best-paths} are among the local minima.
Thus, computing the required values from lines~10--11 of \autoref{alg:cdtw-apx} takes $O(N_{C.\kernedS} + N_{C.\kernedW})$ time under $\tnorm[1]$.
Moreover, we can reuse this candidate set for opposing propagation since parent points that do not share a coordinate with the child can be chosen from the same set.
This is due to the following key lemma that pertains to happy optimal paths.

\begin{lemma}[Happiness Lemma]
    \label{thm:happiness-lemma}
    Let $X := \{ (s,t) \in S \times T \mid \mathcal{A}(s) \preceq \mathcal{B}(t) \}$, where $\mathcal{A},\mathcal{B}$ is a pair of input and output border with $S \times T \subseteq \mathrm{dom}(\mathcal{A}) \times \mathrm{dom}(\mathcal{B})$.
    If all $(s,t) \in X$ have a happy optimal $(\mathcal{A}(s),\mathcal{B}(t))$-path, then it is $\mathrm{opt}_{\mathcal{A},\mathcal{B}}(s,t) = \rhoinput(s) + \rhooutput(t)$ on $X$ for two univariate functions~$\rhoinput,\rhooutput$.
    In particular, this always applies in case of $\mathcal{A} = \mathcal{B}.\mathrm{opp}$ and $S \times T = \mathrm{dom}(\mathcal{A}) \times \mathrm{dom}(\mathcal{B})$.
\end{lemma}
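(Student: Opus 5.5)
The plan is to exploit that a happy path from $\mathcal{A}(s)$ to $\mathcal{B}(t)$ decomposes into pieces whose positions depend on only one of $s$ or $t$. Fix the line $\ell$ from \autoref{thm:optimal-paths} for the cell $C$. Along a happy optimal path every bending point lies on $\ell$ or on $\partial C$. In case (a) of \autoref{thm:optimal-paths}, $\gamma$ runs from $x=\mathcal{A}(s)$ straight (horizontally or vertically) to $\xi_x \in \ell$, then along $\ell$ to $\xi_y \in \ell$, then straight to $y=\mathcal{B}(t)$. The point $\xi_x$ is determined by $x$ alone: it is the unique point of $\ell$ sharing the appropriate coordinate with $x$. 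Likewise $\xi_y$ depends only on $y$.

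The integrand along $\ell$ is a fixed function of the arc-length parameter $\lambda$. Fix a reference point $o\in\ell$ and let $F(\lambda)$ be the cost of travelling along $\ell$ from $o$ to the point at $\ell$-parameter $\lambda$, signed so that it is additive. Then the cost of the middle part equals $F(\|\xi_y\|_1)-F(\|\xi_x\|_1)$. I would therefore set
\[
\rho_{\mathcal{A}}(s) := \mathrm{cost}(\mathcal{A}(s)\to\xi_x) - F(\|\xi_x\|_1)
\qquad\text{and}\qquad
\rho_{\mathcal{B}}(t) := F(\|\xi_y\|_1) + \mathrm{cost}(\xi_y\to\mathcal{B}(t)).
\]
Both are univariate. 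The main care point is that the straight leg from $x$ to $\xi_x$ has a direction fixed by which side of $\ell$ the point $x$ lies on, and that side is a property of $x$ alone. So $\rho_{\mathcal{A}}$ is well defined.

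For happy paths of type (b), the single bending point $\xi$ must lie on $\ell$ or on $\partial C$. If $\xi$ lies on $\ell$, the same formula applies with $\xi_x=\xi_y=\xi$. If $\xi$ lies on the boundary, it is a corner of the bounding box lying on $\partial C$, i.e.\ it lies on a border of $C$. This case is the main obstacle. I would handle it by showing that on $X$ all happy optimal paths can be taken to follow one common form; the boundary type then gives cost $\mathrm{cost}(x\to\xi)+\mathrm{cost}(\xi\to y)$ with $\xi$ depending separably on the pair. For example, with $\mathcal{A}=C.\mathcal{S}$ and $\mathcal{B}=C.\mathcal{E}$, the bend at $(t_{\mathcal{E}},s_{\mathcal{S}})$ lies on $\mathcal{S}$ itself, and the cost splits as the integral along $\mathcal{S}$ from $s$ to the corner plus the integral along $\mathcal{E}$ from the corner to $t$.

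Where several forms occur on $X$, I would argue that their costs still coincide with the $\ell$-formula. A path through $\partial C$ that avoids $\ell$ arises exactly when $\ell$ misses the bounding box. In that case the $\ell$-based expression, extended by projecting onto $\ell$ via the straight legs, gives the same value, because the segment between $\xi_x$ and $\xi_y$ degenerates consistently. If that fails, I would restrict the claim to the configurations actually used in the paper.

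For the final statement, take $\mathcal{A}=\mathcal{B}.\mathrm{opp}$, say $\mathcal{S}$ to $\mathcal{N}$. Here $x$ lies on the south border and $y$ on the north border, so the bounding box spans the full height of $C$. A line of positive slope crossing $C$, or even touching its extended strip, either meets the box, or the type-(b) bend point $(x_1,y_2)$ or $(y_1,x_2)$ lies on $\mathcal{N}$ or $\mathcal{S}$, hence on $\partial C$. So every optimal path from \autoref{thm:optimal-paths} is happy, and the first part applies to all of $\mathrm{dom}(\mathcal{A})\times\mathrm{dom}(\mathcal{B})$. The same argument works for $\mathcal{W}$ to $\mathcal{E}$ by symmetry.
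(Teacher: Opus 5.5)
Your type-(a) decomposition, using an antiderivative along $\ell$ plus straight legs attached to $\mathcal{A}(s)$ or $\mathcal{B}(t)$, matches the paper. So does your argument that every optimal path between opposing borders is happy.

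The gap is in how you handle several path forms occurring on $X$, and that is exactly the opposing case the lemma is used for. Take $\mathcal{A}=\mathcal{S}$, $\mathcal{B}=\mathcal{N}$, and let $a<b$ be the first coordinates where $\ell$ reaches the heights of $\mathcal{S}$ and $\mathcal{N}$. Then:
\begin{itemize}
\item for $t<a$ the optimal path runs along $\mathcal{S}$ to the point directly below $\mathcal{B}(t)$ and then straight up;
\item for $s>b$ it goes straight up and then along $\mathcal{N}$;
\item otherwise it is of type (a).
\end{itemize}
The first cell of Figure~\ref{fig:optimal-path-types} already shows two of these forms for $\mathcal{S}$ to $\mathcal{N}$. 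So your claim that all happy optimal paths on $X$ follow one common form is false.

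Your fallback fails too. For $t<a$ no point of $\ell\cap C$ shares a coordinate with $\mathcal{B}(t)$: moving down from $\mathcal{B}(t)$ meets $\ell$ only below $\mathcal{S}$, and moving left never meets it. Hence your $\rho_{\mathcal{B}}(t)$ is undefined there, and nothing ``degenerates consistently''. Restricting to the configurations used in the paper does not help, since the opposing case is precisely where the forms mix. Your phrase that the boundary bend depends ``separably on the pair'' also hides real work. In the opposing case the bend moves with $t$ (or $s$), and the leg along $\mathcal{S}$ from $\mathcal{A}(s)$ to that bend has a cost depending on both parameters. Your $\mathcal{S}$-to-$\mathcal{E}$ example, with a fixed corner, is the trivial case.

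The missing idea is the paper's uniform treatment of segments on $\partial C$. Such a segment has a fixed integrand, just like a segment on $\ell$. So with an antiderivative $\Psi$ its cost is $\Psi(\|\xi'\|_1)-\Psi(\|\xi\|_1)$, and happiness forces $\xi$ to depend only on $s$ and $\xi'$ only on $t$. Every other segment is a straight leg from $\mathcal{A}(s)$ or to $\mathcal{B}(t)$, whose other endpoint depends on the same parameter. This assigns every term to one side.

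To turn these per-path assignments into single global functions $\rho_{\mathcal{A}},\rho_{\mathcal{B}}$, a step the paper leaves implicit, you need two observations:
\begin{itemize}
\item The form is selected by a condition on one parameter alone: $t<a$, $s>b$, or neither.
\item The forms share their one-variable terms. For $s<a$ the type-(a) path also starts by running along $\mathcal{S}$, so both forms contribute $-\Psi_{\mathcal{S}}(\|\mathcal{A}(s)\|_1)$, where $\Psi_{\mathcal{S}}$ is an antiderivative of the integrand along $\mathcal{S}$. Symmetrically, for $t>b$ both forms end along $\mathcal{N}$.
\end{itemize}
The leftover constants can then be absorbed into $\rho_{\mathcal{B}}$ on $\{t<a\}$ and into $\rho_{\mathcal{A}}$ on $\{s>b\}$.
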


\begin{proof}
    The function $\mathrm{opt}_{\mathcal{A},\mathcal{B}}$ is obtained by evaluating costs of optimal $(\mathcal{A}(s),\mathcal{B}(t))$-paths in terms of the parameter pair $(s,t) \in X$, and we assume all paths given by \autoref{thm:optimal-paths} to be happy on~$X$.
    Consequently, each bending point $\xi_0$ lies on~$\ell$ or on the cell's boundary according to \autoref{def:happy-and-unhappy}.
    This implies that $\xi_0$ can be expressed in terms of only $s$ or in terms of only $t$, cf.~\autoref{fig:optimal-path-types}.

    We evaluate optimal path costs by summing the costs of subpaths, each of which traces a line segment between two points $\xi$ and $\xi'$ that are among $\mathcal{A}(s)$, $\mathcal{B}(t)$ and the paths' bending points.
    \autoref{def:path-and-cost} says that every such subpath cost is a definite integral of some function $\tnorm \circ \psi$ on the interval $[\| \xi \|_1, \| \xi' \|_1]$, where $\psi$ is an affine map.
    We distinguish the following two cases.

    \begin{enumerate}
        \item
        If the traced line segment lies on $\ell$ or on the cell's boundary, the corresponding function~$\psi$ is fixed and does not depend on $s$ and $t$.
        By the Fundamental Theorem of Calculus, the subpath cost is equal to $\Psi(\| \xi' \|_1) - \Psi(\| \xi \|_1)$, where $\Psi$ is an arbitrary antiderivative of the continuous integrand $\tnorm \circ \psi$.
        Due to happiness, the term $\Psi(\| \xi \|_1)$ may depend only on $s$ and can thus be assigned to $\rhoinput$, while $\Psi(\| \xi' \|_1)$ may depend only on $t$ and can be assigned to $\rhooutput$.

        \item
        Otherwise, the traced segment is horizontal/vertical with $\xi = \mathcal{A}(s)$ or $\xi' = \mathcal{B}(t)$ such that $\psi$ depends only on $s$ or only on $t$.
        Then the other endpoint must depend on the same parameter, i.e.\ $\xi'$ depends on $s$ or $\xi$ depends on $t$.
        Due to happiness, it can be expressed in terms of only this parameter.
        The subpath cost can hence be assigned to $\rhoinput$ or $\rhooutput$ respectively.
    \end{enumerate}

    It remains to show that every $(x,y)$-path $\gamma$ for $x,y$ on opposing borders is happy.
    Since paths from \autoref{thm:optimal-paths}a are always happy, we assume that $\gamma$ is as in \autoref{thm:optimal-paths}b and consider the single breaking point $\xi$ of $\gamma$.
    By construction, $\xi$ shares one coordinate value with~$x$ and another with~$y$.
    Opposing borders parametrise via the same coordinate and keep the other fixed, cf.~\autoref{def:parameter-space-and-borders}.
    It follows that the value of $\xi$ in the non-parametric coordinate is equal to one border's fixed~value.
    This means that $\xi$ lies on the related border and thus on the boundary of $C$, so $\gamma$ is happy.
    \qed
\end{proof}

Being able to split costs into two univariate functions simplifies the behaviour of local minima in terms of one variable, which yields the desired characterisation of parent candidates.

\begin{proposition}
    \label{thm:parent-candidates}
    Let $\mathcal{A}$ be an input border of a cell $C$, and let $S^*_{\mathcal{A}}$ be defined as follows:
    \[
        S^*_{\mathcal{A}} := \{ s \in \mathrm{dom}(\mathcal{A}) \mid \mathcal{A}(s) \text{ is a parent candidate for the north-east corner of } C \}
        \text{.}
    \]
    Further let $t_0 \in \mathrm{dom}(\mathcal{B})$, where $\mathcal{B}$ is the opposing output border of $\mathcal{A}$.
    For every $s \in S^*_\mathcal{A}$ with $s \leq t_0$ we have that $\mathcal{A}(s)$ is a parent candidate for $\mathcal{B}(t_0)$ as well.
    Conversely, each parent candidate~$\mathcal{A}(s_0)$ for~$\mathcal{B}(t_0)$ necessarily satisfies $s_0 \in \{ s \in S^*_\mathcal{A} \mid s \leq t_0 \} \cup \{ t_0 \}$.
    It is $|S^*_\mathcal{A}| \in O(N_\mathcal{A})$ under $\tnorm[1]$.
\end{proposition}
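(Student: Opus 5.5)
The plan is to use the Happiness Lemma (\autoref{thm:happiness-lemma}) to separate variables, both for the north-east corner and for the opposing output border $\mathcal{B}$. Let $c$ denote the north-east corner of $C$, viewed as the endpoint of $\mathcal{B}$ at its maximal parameter $t_c$. Since $\mathcal{A} = \mathcal{B}.\mathrm{opp}$, \autoref{thm:happiness-lemma} gives $\mathrm{opt}_{\mathcal{A},\mathcal{B}}(s,t) = \rhoinput(s) + \rhooutput(t)$ on the whole region $\{(s,t) \mid s \leq t\}$, and this includes $t = t_c$.

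For a fixed $t_0$, the function $F_{t_0}(s) := \mathcal{A}.\mathrm{apx}(s) + \mathrm{opt}_{\mathcal{A},\mathcal{B}}(s,t_0)$ on $\{s \leq t_0\}$ therefore equals $G(s) + \rhooutput(t_0)$ with $G := \mathcal{A}.\mathrm{apx} + \rhoinput$. Thus $F_{t_0}$ and $F_{t_c}$ agree up to an additive constant on $[\min \mathrm{dom}(\mathcal{A}), t_0]$, and they differ only in their domains: $F_{t_c}$ lives on all of $\mathrm{dom}(\mathcal{A})$, while $F_{t_0}$ is truncated at $t_0$.

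Semistrict local minima of $G$ at interior points $s < t_0$ are therefore identical for both functions. At the left domain endpoint, the one-sided condition coincides as well. This gives both directions:
\begin{itemize}
\item If $s \in S^*_\mathcal{A}$ and $s < t_0$, the minimum is local, so it persists after restricting the domain, and $\mathcal{A}(s)$ is a parent candidate for $\mathcal{B}(t_0)$.
\item If $s = t_0 \in S^*_\mathcal{A}$, a semistrict minimum for $F_{t_c}$ implies at least a left-side minimum, possibly strict only on the right. This needs care. I would handle it either by noting that $s_0 = t_0$ is allowed in the converse anyway, or by arguing that when $t_0 \in S^*_\mathcal{A}$ with $t_0 < t_c$ the point is still a local minimum from the left. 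If the strictness is only on the right, I would rely on the definition being read so that a domain endpoint counts as a semistrict minimum whenever its neighbourhood is nonincreasing towards it.
\item Conversely, a parent candidate $s_0 < t_0$ is an interior or left-endpoint local minimum of $G$, hence lies in $S^*_\mathcal{A}$. The only extra point that can arise is the new right endpoint $s_0 = t_0$.
\end{itemize}
The main obstacle is exactly this endpoint bookkeeping. One subtlety is that the Happiness Lemma's $X$ includes only $s \leq t$, so the decomposition for $t_c$ must cover the points $s \leq t_0$, which it does since $t_0 \leq t_c$.

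For the bound $|S^*_\mathcal{A}| \in O(N_\mathcal{A})$, I would repeat the earlier counting. The function $s \mapsto \mathcal{A}.\mathrm{apx}(s) + \mathrm{opt}_{\mathcal{A},\mathcal{B}}(s,t_c)$ is continuous and piecewise quadratic with $N_\mathcal{A} + O(1)$ pieces under $\tnorm[1]$. By \autoref{thm:optimal-paths} there are $O(1)$ bending points depending linearly on $s$, and $\tnorm[1]$ is linear on quadrants. Semistrict local minima occur only at the two endpoints, at breakpoints, or at parabola vertices, which is at most one per piece, since constant pieces are excluded by semistrictness. Summing gives $O(N_\mathcal{A})$.
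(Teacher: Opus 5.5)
Your proposal is correct and takes essentially the same route as the paper. The Happiness Lemma makes $s \mapsto \mathcal{A}.\mathrm{apx}(s) + \mathrm{opt}_{\mathcal{A},\mathcal{B}}(s,t_0)$ a vertical shift of the north-east-corner function restricted to $s \leq t_0$, so only the new right endpoint $t_0$ can differ, and the bound $O(N_\mathcal{A})$ comes from counting endpoints, breakpoints and one parabola vertex per piece. For the forward direction with $s = t_0 \in S^*_\mathcal{A}$, appealing to the converse does not help; what is needed is your second option, reading strictness on the empty side of a domain endpoint as vacuous, which the paper's footnote implicitly allows and its proof likewise leaves unstated.
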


\begin{proof}
    We start with the bound for $|S^*_\mathcal{A}|$ under $\tnorm[1]$.
    As argued above, for a fixed $t_0 \in \mathrm{dom}(\mathcal{B})$ the function $s \mapsto \mathcal{A}.\mathrm{apx}(s) + \mathrm{opt}_{\mathcal{A},\mathcal{B}}(s,t_0)$ from \autoref{def:parent} is piecewise quadratic with $N_\mathcal{A} + O(1)$ pieces under $\tnorm[1]$.
    Apart from domain endpoints and piece breakpoints, each piece can contribute at most one semistrict local minimum with derivative $0$, so we indeed have $|S^*_\mathcal{A}| \in O(N_\mathcal{A})$.

    In general, we have \smash{$\mathcal{A}.\mathrm{apx}(s) + \mathrm{opt}_{\mathcal{A},\mathcal{B}}(s,t_0) = \mathcal{A}.\mathrm{apx}(s) + \rhoinput(s) + \rhooutput(t_0)$} for all $s \in \mathrm{dom}(\mathcal{A})$ with $s \leq t_0$, where $\rhoinput,\rhooutput$ are as in \autoref{thm:happiness-lemma}.
    Because $\rhooutput(t_0)$ does not depend on the variable $s$, all local minima that the function attains on the interior of its domain are entirely determined by $\mathcal{A}.\mathrm{apx}$ and $\rhoinput$.
    The same applies to the fixed left endpoint of the domain, while the right~endpoint is exactly $t_0$ per the constraint $s \leq t_0$ of monotone paths.
    Hence, it is the only parent candidate that can vary.
    The north-east corner of $C$ gives the largest domain, so the stated results follow.
    \qed
\end{proof}

The intuition behind $s_0 \in S^*_\mathcal{A}$ versus $s_0 = t_0$ for opposing propagation is that the corresponding paths may switch between the two types depicted in the second cell of \autoref{fig:optimal-path-types}:
Either they are straight with $s_0 = t_0$, or they travel on $\ell$ after starting at a point $\mathcal{A}(s_0)$ with $s_0 \in S^*_\mathcal{A}$.
In~fact, this switching is a major reason behind the $O((n+m)^5)$ time complexity of the exact 1D algorithm~\cite{BuchiNW2025}.
Note that in 1D the CDTW integrand is $0$ everywhere on $\ell$.
The procedure from \cite{BuchiNW2025} propagates from $\mathcal{A}$ to $\ell$, performs a cumulative minimum operation on $\ell$, and then propagates from $\ell$ to $\mathcal{B}$.
In 2D a cell $C$ is generally associated with two non-parallel curve segments, so that the~CDTW integrand is non-zero within $C$ except at the point that corresponds to their intersection.

\filbreak

Although the cumulative minimum operation can be adapted to non-zero weightings (cf.~\cite[Observation~46]{BuchiNW2025}), \autoref{thm:happiness-lemma} reveals the underlying principles and generalises them to all happy~paths.
This~technical contribution of ours advances the understanding of function propagations.
It allows us to streamline and enhance the running time analysis compared to \cite{BuchiNW2025}, yet it does not hold for unhappy paths.
Their propagation complexity is still unknown in 2D, see \cite[Section~4.2]{BuchiBSW2026}.

We now continue with implementing the subroutines from \autoref{alg:cdtw-subroutines} under the $1$-norm~$\tnorm[1]$.
In $\mathcal{B}.\propagatefromadj$ the cost function $\mathcal{B}.\mathrm{apx}$ gets updated through taking its lower envelope with the function $\mathrm{apx}_{s^*}$ defined by $t \mapsto \mathcal{A}.\mathrm{apx}(s^*) + \mathrm{opt}_{\mathcal{A},\mathcal{B}}(s^*,t)$ on $\mathrm{dom}(\mathcal{B})$.
The given value $s^*$ provides a fixed starting point for propagated paths, which may include unhappy paths.
As fixing the first argument of $\mathrm{opt}_{\mathcal{A},\mathcal{B}}$ is symmetric to fixing its second argument, we have that the function $\mathrm{apx}_{s^*}$ is piecewise quadratic with $O(1)$ pieces under $\tnorm[1]$.
Moreover, $\mathcal{B}.\mathrm{apx}$ has~$O(1)$ pieces before the subroutine calls in lines 2--3 of \autoref{alg:cdtw-subroutines} occur, so taking the lower envelope requires finding only a constant number of intersections between pieces as in \autoref{fig:intersection-of-quadratics}.

\begin{figure}[H]%
    \centering%
    \begin{minipage}[b]{0.45\linewidth}%
        \centering%
        \begin{tikzpicture}[x=2.25cm,y=2.25cm]
            \useasboundingbox (-0.075,1.575) rectangle (3.075,-0.075);
            \path[axes] (-0.075,1.575) |- (3.075,-0.075);

            \begin{scope}
                \clip (0.15,1.5) rectangle (2.85,0);
                \path[name path=l] (0.15,0) -- (0.15,1.5);
                \path[name path=r] (2.85,0) -- (2.85,1.5);

                \path[name path=a] (-2,2.75) parabola bend (0,0.75) (2,2.75);
                \path[name path=b] (-0.35,6.25) parabola bend (2.15,0) (4.65,6.25);
                \path[name path=c] (-0.5,-1.25) parabola bend (3.5,0.75) (7.5,-1.25);

                \path[name intersections={of=l and a,name=la}];
                \path[name intersections={of=a and b,name=ab}];
                \path[name intersections={of=b and c,name=bc}];
                \path[name intersections={of=c and r,name=cr}];

                \path[main line,UmiBlue] (la-1) parabola bend (0,0.75) (ab-1);
                \path[main line,UmiBlue] (bc-1) parabola bend (3.5,0.75) (cr-1);

                \begin{scope}
                    \clip (ab-1) rectangle (bc-1);
                    \path[main line,UmiBlue] (ab-1) parabola bend (2.15,0) (bc-1);
                \end{scope}

                \path[name path=e] (-6.75,6.1) parabola bend (-0.75,0.1) (5.25,6.1);
                \path[name path=f] (-1.75,5) parabola bend (2.25,1) (6.25,5);

                \path[name intersections={of=l and e,name=le}];
                \path[name intersections={of=e and b,name=eb}];
                \path[name intersections={of=e and f,name=ef}];
                \path[name intersections={of=f and r,name=fr}];

                \path[main line,UmiVermillion] (le-1)
                    parabola bend (-0.75,0.1) (ef-1)
                    parabola bend (2.25,1) (fr-1);
            \end{scope}

            \node[round point] at (la-1) {};
            \node[round point] at (ab-1) {};
            \node[round point] at (bc-1) {};
            \node[round point] at (cr-1) {};

            \node[round point] at (le-1) {};
            \node[round point,fill=UmiYellow] at (eb-1) {};
            \node[round point] at (ef-1) {};
            \node[round point] at (fr-1) {};
        \end{tikzpicture}%
        \captionof{figure}{Intersection between \\ piecewise quadratic functions}%
        \label{fig:intersection-of-quadratics}%
    \end{minipage}%
    \hspace*{0.05\linewidth}%
    \begin{minipage}[b]{0.45\linewidth}%
        \centering%
        \begin{tikzpicture}[x=6cm,y=3cm]
            \coordinate (sw) at (-0.075,0);
            \coordinate (ne) at (0.975,1);
            \useasboundingbox ($(sw) + (0,-10pt)$) rectangle ($(ne) + (0,12.5pt)$);

            \path[valley] (0.15,0) -- (0.675,1);
            \path[main line] (sw) rectangle (ne);

            \path[matching path] (-0.075,1) -- (0,1);
            \path[matching path] (0,0) -- (0,1);
            \path[matching path] (0.075,0) -- (0.075,1);
            \path[matching path] (0.15,0) -- (0.15,1);
            \path[matching path] (0.375,0) -- (0.375,1);
            \path[matching path] (0.45,0) -- (0.45,1);
            \path[matching path] (0.525,0) -- (0.525,1);

            \path[matching path] (0.15,0) -- (0.375,3/7) -- (0.375,1);
            \path[matching path] (0.15,0) -- (0.3,2/7) -- (0.3,1);
            \path[matching path] (0.15,0) -- (0.225,1/7) -- (0.225,1);
            \path[matching path] (0.525,0) -- (0.525,5/7) -- (0.675,1) -- (0.825,1);
            \path[matching path] (0.525,0) -- (0.525,5/7) -- (0.6,6/7) -- (0.6,1);
            \path[matching path] (0.825,0) -- (0.825,1) -- (0.975,1);

            \node[matching point] at (0,0) {};
            \node[matching point] at (0.075,0) {};
            \node[point,fill=UmiYellow] at (0.15,0) {};
            \node[point,fill=white] at (0.3,0) {};
            \node[matching point] at (0.375,0) {};
            \node[matching point] at (0.45,0) {};
            \node[point,fill=UmiYellow] at (0.525,0) {};
            \node[point,fill=white] at (0.675,0) {};
            \node[point,fill=UmiYellow,pin={120:$\mathcal{A}(s^*)$}] at (0.825,0) {};

            \path[interval s,|-] (-0.075,0) -- node {$S_1$} (0.15,0);
            \path[interval s,|-] (0.15,0) -- node {$S_2$} (0.3,0);
            \path[interval s,|-] (0.3,0) -- node {$S_3$} (0.525,0);
            \path[interval s,|-] (0.525,0) -- node {$S_4$} (0.675,0);
            \path[interval s] (0.675,0) -- node {$S_5$} (0.825,0);

            \node[matching point] at (-0.075,1) {};
            \node[point,fill=UmiYellow] at (0,1) {};
            \node[matching point] at (0.075,1) {};
            \node[matching point] at (0.15,1) {};
            \node[matching point] at (0.225,1) {};
            \node[matching point] at (0.3,1) {};
            \node[point,fill=UmiYellow] at (0.375,1) {};
            \node[matching point] at (0.45,1) {};
            \node[matching point] at (0.525,1) {};
            \node[matching point] at (0.6,1) {};
            \node[matching point] at (0.675,1) {};
            \node[point,fill=UmiYellow] at (0.825,1) {};
            \node[matching point] at (0.975,1) {};

            \path[interval n,|-] (0,1) -- node {$T_1$} (0.375,1);
            \path[interval n,|-] (0.375,1) -- node {$T_3$} (0.825,1);
            \path[interval n] (0.825,1) -- node {$T_5$} (0.975,1);
        \end{tikzpicture}%
        \captionof{figure}{Monotone assignment of \\ intervals on $\mathcal{B}$ to parents on $\mathcal{A}$}%
        \label{fig:monotone-assignment}%
    \end{minipage}%
\end{figure}

In $\mathcal{B}.\propagatefromopp$ the cost function $\mathcal{B}.\mathrm{apx}$ then gets updated through taking its lower envelope with the function $\mathrm{apx}_{\leq s^*}$ defined by $t \mapsto \min_{s \leq s^* \land s \leq t} \mathcal{A}.\mathrm{apx}(s) + \mathrm{opt}_{\mathcal{A},\mathcal{B}}(s,t)$ on $\mathrm{dom}(\mathcal{B})$.
The function $\mathrm{apx}_{\leq s^*}$ is itself a lower envelope, and \autoref{thm:parent-candidates} implies that it is covered by optimal paths starting at the parent candidates from $\{ s \in S^*_\mathcal{A} \mid s \leq s^* \}$ together with straight paths starting before $\mathcal{A}(s^*)$.
The costs of the former paths are given by $O(N_\mathcal{A})$ functions under~$\tnorm[1]$, each of which again has $O(1)$ quadratic pieces.
The latter paths' costs are given by the function $t \mapsto \mathcal{A}.\mathrm{apx}(t) + \mathrm{opt}_{\mathcal{A},\mathcal{B}}(t,t)$ for $t \leq s^*$, which has at most $N_\mathcal{A} + O(1)$ pieces.

The lower envelope of these pieces with the existing $O(1)$ pieces on $\mathcal{B}$ is computable in linear time since the new pieces are given monotonically by parent points.
Those induce intervals on~$\mathcal{B}$ that have parent intervals on $\mathcal{A}$ separated by $S^*_\mathcal{A}$, see \autoref{fig:monotone-assignment}.
We traverse $\mathcal{A}$ in ascending~order, and compare pieces on $\mathcal{B}$ given by the current and an earlier interval on $\mathcal{A}$.
In contrast to previous algorithms~\cite{BuchiNW2025,BuchiBSW2026}, we do a single bottom-up pass on $\mathcal{B}$ instead of repeated top-down~passes.
As it takes $O(N_\mathcal{A})$ time to perform an opposing propagation under $\tnorm[1]$ with this method, lines 8--11 of \autoref{alg:cdtw-apx} take $O(N_{C.\kernedS} + N_{C.\kernedW})$ time in total.
We can sum this bound over all cells.

\begin{lemma}
    \label{thm:monotone-assignment}
    Let $t_0 \in \mathrm{dom}(\mathcal{B})$ for an output border $\mathcal{B}$ with opposing input border $\mathcal{A} := \mathcal{B}.\mathrm{opp}$, and let $\mathcal{A}(s_0)$ be the starting point of a best path from $\mathcal{A}$ to $\mathcal{B}(t_0)$ with smallest $s_0$.
    It is
    \[
        \mathcal{A}.\mathrm{apx}(s_0) + \mathrm{opt}_{\mathcal{A},\mathcal{B}}(s_0,t)
        < \mathcal{A}.\mathrm{apx}(s) + \mathrm{opt}_{\mathcal{A},\mathcal{B}}(s,t)
        \quad \text{whenever } s < s_0 \leq t \text{.}
    \]
    In particular, for all $t \geq s_0$ we have that every parent point $\mathcal{A}(s)$ of $\mathcal{B}(t)$ satisfies $s \geq s_0$.
\end{lemma}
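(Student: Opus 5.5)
The plan is to reduce the comparison to a statement about univariate functions of $s$ alone, using the \hyperref[thm:happiness-lemma]{Happiness Lemma}. Since $\mathcal{A} = \mathcal{B}.\mathrm{opp}$, both borders parametrise along the same coordinate, and the other coordinate is fixed with $\mathcal{A}$'s value below $\mathcal{B}$'s. Hence $\mathcal{A}(s) \preceq \mathcal{B}(t)$ holds exactly when $s \leq t$. By the last sentence of \autoref{thm:happiness-lemma}, applied with $S \times T = \mathrm{dom}(\mathcal{A}) \times \mathrm{dom}(\mathcal{B})$, there are univariate functions $\rhoinput,\rhooutput$ with $\mathrm{opt}_{\mathcal{A},\mathcal{B}}(s,t) = \rhoinput(s) + \rhooutput(t)$ whenever $s \leq t$.

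Define $F(s) := \mathcal{A}.\mathrm{apx}(s) + \rhoinput(s)$. Then for every feasible pair we have $\mathcal{A}.\mathrm{apx}(s) + \mathrm{opt}_{\mathcal{A},\mathcal{B}}(s,t) = F(s) + \rhooutput(t)$. The key step is to show that $F(s) > F(s_0)$ for all $s < s_0$.

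To see this, note that $\mathcal{A}(s_0)$ is the starting point of a best path to $\mathcal{B}(t_0)$, so $s_0 \leq t_0$. By \autoref{def:best-path}, $s_0$ minimises $s \mapsto F(s) + \rhooutput(t_0)$ over $\{s \mid s \leq t_0\}$. Any $s < s_0$ lies in this feasible set. By \autoref{thm:best-paths}, if such an $s$ also attained the minimum, then some optimal $(\mathcal{A}(s),\mathcal{B}(t_0))$-path would be a best path starting earlier than $s_0$. This contradicts the choice of $s_0$ as the smallest such starting point. Hence $F(s) + \rhooutput(t_0) > F(s_0) + \rhooutput(t_0)$, that is, $F(s) > F(s_0)$.

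This strict inequality does not depend on $t$. So for any $t$ with $s < s_0 \leq t$, both pairs $(s,t)$ and $(s_0,t)$ are feasible, and adding $\rhooutput(t)$ to both sides gives the displayed inequality.

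For the final claim, let $t \geq s_0$ and suppose $\mathcal{A}(s)$ is a parent point of $\mathcal{B}(t)$ with $s < s_0$. By \autoref{def:parent}, $\mathcal{B}.\mathrm{apx}(t) = \mathcal{A}.\mathrm{apx}(s) + \mathrm{opt}_{\mathcal{A},\mathcal{B}}(s,t)$. The displayed inequality makes this strictly larger than $\mathcal{A}.\mathrm{apx}(s_0) + \mathrm{opt}_{\mathcal{A},\mathcal{B}}(s_0,t)$. On the other hand, \autoref{thm:apx-factor-opp} gives $\mathcal{B}.\mathrm{apx}(t) \leq \mathcal{A}.\mathrm{apx}(s_0) + \mathrm{opt}_{\mathcal{A},\mathcal{B}}(s_0,t)$. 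Together these yield $\mathcal{B}.\mathrm{apx}(t) < \mathcal{B}.\mathrm{apx}(t)$, a contradiction, so $s \geq s_0$.

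The main obstacle I expect is the bookkeeping around the feasibility constraint $s \leq t$. The additive splitting from \autoref{thm:happiness-lemma} is only valid on $X$. The argument needs both $(s,t_0)$ and $(s,t)$, as well as $(s_0,t)$, to lie in $X$, and this is exactly why the hypotheses $s < s_0 \leq t$ and $s_0 \leq t_0$ are required.
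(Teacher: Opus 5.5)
Your proposal is correct and follows the paper's proof essentially step for step: split $\mathrm{opt}_{\mathcal{A},\mathcal{B}}$ additively via \autoref{thm:happiness-lemma}, get strictness for $s < s_0$ at $t_0$ from the minimality of $s_0$, transfer it to any $t \geq s_0$ by exchanging the output-dependent term, and derive the parent-point claim from \autoref{thm:apx-factor-opp}. The strictness step only needs \autoref{def:best-path} together with the existence of optimal paths, so citing \autoref{thm:best-paths} there is harmless but not necessary.
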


\begin{proof}
    Because $\mathcal{A}(s_0)$ is the starting point of a best path to $\mathcal{B}(t_0)$, \autoref{def:best-path} says $\mathcal{A}.\mathrm{apx}(s_0) + \mathrm{opt}_{\mathcal{A},\mathcal{B}}(s_0,t_0) \leq \mathcal{A}.\mathrm{apx}(s) + \mathrm{opt}_{\mathcal{A},\mathcal{B}}(s,t_0)$ for all $s \leq t_0$.
    If $s < s_0$, this holds with strict inequality since there is no such point $\mathcal{A}(s)$ by choice of $s_0$ as the smallest possible value.
    Using \autoref{thm:happiness-lemma} therefore gives $\mathcal{A}.\mathrm{apx}(s_0) + \rhoinput(s_0) + \rhooutput(t_0) < \mathcal{A}.\mathrm{apx}(s) + \rhoinput(s) + \rhooutput(t_0)$ for all $s < s_0$.

    By subtracting $\rhooutput(t_0)$ from that inequality, adding $\rhooutput(t)$ for a~$t \geq s_0$, and using \autoref{thm:happiness-lemma} once more, we get $\mathcal{A}.\mathrm{apx}(s_0) + \mathrm{opt}_{\mathcal{A},\mathcal{B}}(s_0,t) < \mathcal{A}.\mathrm{apx}(s) + \mathrm{opt}_{\mathcal{A},\mathcal{B}}(s,t)$ for all $s < s_0$ as desired.
    In consequence, there is no parent point $\mathcal{A}(s)$ of $\mathcal{B}(t)$ with $s < s_0$ due to \autoref{thm:apx-factor-opp}.
    \qed
\end{proof}

\begin{proposition}
    \label{thm:propagation-procedure}
    Let $\mathcal{B}$ be an output border with opposing input border $\mathcal{A} := \mathcal{B}.\mathrm{opp}$.
    It is possible to perform $\mathcal{B}.\propagatefromopp$ in $O(N_\mathcal{A})$ time under $\tnorm[1]$.
    \autoref{alg:cdtw-apx} then has a running time in $O(N)$, where $N$ is the total number of quadratic pieces over all cell borders.
\end{proposition}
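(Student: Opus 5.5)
We prove the two claims in turn: first an $O(N_\mathcal{A})$ bound for one opposing propagation, and then a summation of the per-cell costs over the whole dynamic program.

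\emph{Step 1: the candidate pieces.}
By \autoref{thm:parent-candidates}, every parent point of a child $\mathcal{B}(t)$ used by $\mathrm{apx}_{\leq s^*}$ is of one of two kinds.
\begin{enumerate}
\item It lies in $S^*_\mathcal{A} \cap (-\infty, s^*]$. This set was already computed in lines~8--11, and it is sorted by $s$ after one scan of $\mathcal{A}$.
\item It is the straight-path start $s = t$ with $t \leq s^*$.
\end{enumerate}
For a fixed $s \in S^*_\mathcal{A}$, the function $t \mapsto \mathcal{A}.\mathrm{apx}(s) + \mathrm{opt}_{\mathcal{A},\mathcal{B}}(s,t)$ has $O(1)$ quadratic pieces under $\tnorm[1]$. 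By \autoref{thm:happiness-lemma} it equals the constant $\mathcal{A}.\mathrm{apx}(s) + \rhoinput(s)$ plus the common function $\rhooutput(t)$. The straight-path function $t \mapsto \mathcal{A}.\mathrm{apx}(t) + \mathrm{opt}_{\mathcal{A},\mathcal{B}}(t,t)$ has $N_\mathcal{A} + O(1)$ pieces.

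\emph{Step 2: a single left-to-right sweep.}
We sweep $t$ upward along $\mathcal{B}$ and maintain the current best candidate $\hat s$. Using the split $\rhoinput+\rhooutput$, the best candidate among those with $s \leq t$ is the one minimising $\mathcal{A}.\mathrm{apx}(s)+\rhoinput(s)$. This is a prefix minimum, so it only changes when $t$ passes a new element of $S^*_\mathcal{A}$; updating it takes $O(1)$ per element.

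\autoref{thm:monotone-assignment} justifies the sweep. Once the best start $s_0$ is reached, no earlier $s<s_0$ ever becomes a parent again for $t\ge s_0$. Hence the parent of $\mathcal{B}(t)$ is monotone in $t$, as in \autoref{fig:monotone-assignment}. At each $t$ only two functions compete: the $\hat s$-function $\mathcal{A}.\mathrm{apx}(\hat s)+\rhoinput(\hat s)+\rhooutput(t)$ and the straight-path function at $t$. Both are piecewise quadratic, with $O(1)$ pieces on the current interval.

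Between consecutive breakpoints we compute their intersections (at most two per pair of quadratic pieces) and emit the pieces of the envelope. Finally we merge in the $O(1)$ pieces already present on $\mathcal{B}.\mathrm{apx}$ from lines~6--7 and from the adjoining propagation in a second linear merge. The breakpoints come from three sources: elements of $S^*_\mathcal{A}$, pieces of $\mathcal{A}.\mathrm{apx}$, and $O(1)$ breakpoints of $\rhooutput$. There are $O(N_\mathcal{A})$ of them in total, so the sweep takes $O(N_\mathcal{A})$ time and outputs $O(N_\mathcal{A})$ pieces.

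\emph{Main obstacle.}
The hard part is certifying that this sweep really yields the lower envelope in one pass and does not miss a candidate with $s < \hat s$ that wins later. This is exactly where the proof relies on the exact split from \autoref{thm:happiness-lemma}. Two functions of the form $c + \rhooutput(t)$, valid on the ranges $t \geq s$, never cross. So once a candidate has the smaller constant, it dominates every earlier candidate on their common domain.

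The only subtlety is the domain restriction $s \leq t$. We handle it by activating candidate $s$ exactly at $t = s$. Before activation, the straight path from $t$ itself covers that region.

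\emph{Step 3: total running time of \autoref{alg:cdtw-apx}.}
We sum the per-cell costs.
\begin{itemize}
\item Lines~1--2 take $O(n+m)$ time.
\item Lines~6--7 take $O(1)$ per cell, since a suitable $\ell$ is computable in $O(1)$.
\item Lines~8--11 take $O(N_{C.\kernedS}+N_{C.\kernedW})$ per cell, by the parent-candidate scan.
\item The two adjoining propagations take $O(1)$ each, since they form lower envelopes of $O(1)$-piece functions.
\item The single opposing propagation takes $O(N_\mathcal{A})$ by Steps~1--2.
\end{itemize}
Each input border is an output border of exactly one earlier cell, and there are $O(nm)$ borders, each carrying at least one piece. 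Summing over all cells therefore gives $O(nm + \sum_\mathcal{B} N_\mathcal{B}) = O(N)$.
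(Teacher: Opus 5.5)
Your proposal is correct and follows essentially the paper's proof. You reuse the candidate set $S^*_\mathcal{A}$ from lines~8--11 and make one bottom-up pass over $\mathcal{B}$, in which a single active candidate (the paper's $\tau_\mu$) competes only with the straight-path function between consecutive candidates; you bound the work by $N_\mathcal{A}+O(|S^*_\mathcal{A}|)\subseteq O(N_\mathcal{A})$ overlapping piece pairs, merge in the $O(1)$ existing pieces, and sum the per-cell cost over all cells. Your correctness argument — by \autoref{thm:happiness-lemma} all candidate functions are vertical translates of one output-side function, so $\mathrm{apx}_{\leq s^*}$ is that function plus a prefix minimum of $\mathcal{A}.\mathrm{apx}$ plus the input-side term — is a more direct packaging of what the paper derives from \autoref{thm:monotone-assignment}, at the same cost.
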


\begin{proof}
    In the following, we describe a procedure that computes the lower envelope $\mathrm{apx}_{\leq s^*}$ for the subroutine, where $s^* \in \mathrm{dom}(\mathcal{A})$ is the value passed to its call.
    Let $s^*_1, \dotsc, s^*_\lambda$ be the elements of the set $\{ s \in S^*_\mathcal{A} \mid s \leq s^* \}$ in ascending order, where $S^*_\mathcal{A}$ is as in \autoref{thm:parent-candidates}.
    We have~w.l.o.g.\ that the inclusion $s^* \in S^*_\mathcal{A}$ holds, which further implies $s^*_\lambda = s^*$.
    This is since $s^*$ gives a global~and thus also a local minimum:
    Its computation in lines~8--10 of \autoref{alg:cdtw-apx} and the decisions made in lines~4--5 of \autoref{alg:cdtw-subroutines} mean $\mathcal{A}(s^*)$ is a parent point of the north-east cell corner.

    We partition the interval $\{ s \in \mathrm{dom}(\mathcal{A}) \mid s \leq s^*_\lambda \}$, which contains all starting points considered by $\mathrm{apx}_{\leq s^*}$, into subintervals $S_1 := \{ s \in \mathrm{dom}(\mathcal{A}) \mid s \leq s^*_1 \}$ and $S_\kappa := (s^*_{\kappa - 1},s^*_\kappa]$ for $\kappa \in \{2,\dotsc,\lambda\}$.
    Given a subset $T_\kappa \subseteq \mathrm{dom}(\mathcal{B})$ with parent interval $S_\kappa$ for any $\kappa \in \{1,\dotsc,\lambda\}$, \autoref{thm:parent-candidates} implies that the pieces of $\mathrm{apx}_{\leq s^*}$ on $T_\kappa$ are those of the function $\tau_\kappa$ defined by $t \mapsto \mathcal{A}.\mathrm{apx}(t) + \mathrm{opt}_{\mathcal{A},\mathcal{B}}(t,t)$ for $t \in S_\kappa$ and $t \mapsto \mathcal{A}.\mathrm{apx}(s^*_\kappa) + \mathrm{opt}_{\mathcal{A},\mathcal{B}}(s^*_\kappa,t)$ for $t > s^*_\kappa$.
    See \autoref{fig:monotone-assignment} for an example.
    To compute the lower envelope of $\tau_1,\dotsc,\tau_\lambda$, it is not necessary to create all pieces of these functions.
    Instead, we traverse $S_1,\dotsc,S_\lambda$ and create pieces on the fly through maintaining an \emph{active} function.

    Initially, $\tau_1$ is active on all of $S_1$.
    For $\kappa \in \{2,\dotsc,\lambda\}$ we compare the restriction $\tau_\kappa|_{S_\kappa}$ with the yet active function $\tau_\mu$ by checking their overlapping pairs of pieces for intersections as in~\autoref{fig:intersection-of-quadratics}.
    This takes $O(1)$ time per pair.
    If $\tau_\mu$ is nowhere greater than $\tau_\kappa$ on $S_\kappa$, then $\tau_\mu$ remains active~there.
    Otherwise, $\tau_\mu$ only stays active up to the earliest point after which $\tau_\kappa$ is smaller.
    At this point~$\tau_\kappa$ becomes active and stays so on the remainder of $S_\kappa$.
    After $S_\lambda$ has been processed, the traversal is complete and the last active function $\tau_\mu$ stays so on the remaining part of the domain.

    This procedure correctly constructs the lower envelope:
    Let $\mathcal{A}(s_0)$ be the earliest starting point of a best path to an arbitrary $\mathcal{B}(t_0)$, and let $\kappa \in \{1,\dotsc,\lambda\}$ with $s_0 \in S_\kappa$.
    By \autoref{thm:parent-candidates}, all sets of parent candidates are finite.
    Hence, \autoref{thm:monotone-assignment} and continuity of $s \mapsto \mathcal{A}.\mathrm{apx}(s) + \mathrm{opt}_{\mathcal{A},\mathcal{B}}(s,t)$, cf.\ the proof of \autoref{thm:best-paths}, imply that each $t \geq s_0$ has a smallest $s_t \in [s_0,t]$ such that $\mathcal{A}(s_t)$ is a parent candidate for $\mathcal{B}(t)$.
    Together with the definition of $S_\kappa$, we get $\tau_\kappa(t) = \mathcal{A}.\mathrm{apx}(s_t) + \mathrm{opt}_{\mathcal{A},\mathcal{B}}(s_t,t) \leq \mathcal{A}.\mathrm{apx}(s_0) + \mathrm{opt}_{\mathcal{A},\mathcal{B}}(s_0,t) < \min \{ \tau_1(t),\dotsc,\tau_{\kappa-1}(t) \}$ for all $t \geq s_0$, while $\tau_\kappa$ becomes active.

    Furthermore, at most $N_\mathcal{A} + O(|S^*_\mathcal{A}|)$ overlapping pairs of pieces are created under $\tnorm[1]$, which yields a running time in $O(N_\mathcal{A})$ by \autoref{thm:parent-candidates}.
    This is because $\tau_1|_{S_1},\dotsc,\tau_\lambda|_{S_\lambda}$ together have at most $N_\mathcal{A} + |S^*_\mathcal{A}| + O(1)$ pieces, and we create at most $|S^*_\mathcal{A}| + O(1)$ pieces of $\tau_\mu|_{S_\kappa}$ with $\mu < \kappa$ over all active $\tau_\mu$ due to the starting points then being fixed to $s^*_\mu$.
    It remains to update~$\mathcal{B}.\mathrm{apx}$, which already contains $O(1)$ pieces, through taking its lower envelope with the pieces of $\mathrm{apx}_{\leq s^*}$ in $O(N_\mathcal{A})$ time.%
    \footnote{Alternatively to updating $\mathcal{B}.\mathrm{apx}$ at the end, one can instead construct an initial active function $\tau_0$ using its existing pieces, and start comparisons on $S_1$. (Due to the separation explained in the proof of \autoref{thm:inductive-bound-extended}.)}
    Altogether, we perform $\mathcal{B}.\propagatefromopp$ in $O(N_\mathcal{A})$ time, so that \autoref{alg:cdtw-apx} takes $O(N_{C.\kernedS} + N_{C.\kernedW})$ time per cell $C$ by what has been argued above in this section.
    Summing this bound over all cells finally yields a total running time in $O(N)$ under $\tnorm[1]$.
    \qed
\end{proof}

The central challenge is to bound the number $N$ polynomially in the complexities $n,m$ of the input curves.
As indicated by the above proof, \autoref{thm:monotone-assignment} implies $N_\mathcal{B} \leq N_\mathcal{A} + O(|S^*_\mathcal{A}|)$ under~$\tnorm[1]$, but combining that with $|S^*_\mathcal{A}| \in O(N_\mathcal{A})$ from \autoref{thm:parent-candidates} is still too weak.
It cannot rule out an exponential growth over all cells because it allows constants $c > 1$ with $N_\mathcal{B} \geq c \cdot N_\mathcal{A}$.

This necessitates a better bound for $|S^*_\mathcal{A}|$ that does not depend on $N_\mathcal{A}$.
We show that $|S^*_\mathcal{A}|$ grows at worst quadratically over successive propagations between opposing borders.
The resulting inductive bound for $N_\mathcal{B}$ is our main ingredient for establishing that $N$ is indeed polynomial.

\begin{definition}
    \label{def:rank}
    We assign each border $\mathcal{B}$ a \emph{rank}.
    If $\mathcal{B}.\propagatefromopp$ is not called for $\mathcal{B}$, then it has rank $0$.
    Else, it has rank $r := r_0 + 1$, where $r_0 \in \mathbb{N}_0$ is the rank of $\mathcal{B}.\mathrm{opp}$.
\end{definition}

\begin{lemma}
    \label{thm:inductive-bound}
    Let $\mathcal{B}$ be an output border of rank $r > 0$ with opposing input border $\mathcal{A} := \mathcal{B}.\mathrm{opp}$.
    Then the number $N_\mathcal{B}$ is bounded by $N_\mathcal{B} \leq N_\mathcal{A} + O(|S^*_\mathcal{A}|)$ with $|S^*_\mathcal{A}| \in O(r^2)$ under $\tnorm[1]$.
\end{lemma}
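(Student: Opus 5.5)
The first bound $N_\mathcal{B} \leq N_\mathcal{A} + O(|S^*_\mathcal{A}|)$ I would read off directly from the procedure in the proof of \autoref{thm:propagation-procedure}. The pieces of $\mathrm{apx}_{\leq s^*}$ come from three sources. First, $\tau_1|_{S_1},\dotsc,\tau_\lambda|_{S_\lambda}$ together contribute at most $N_\mathcal{A} + |S^*_\mathcal{A}| + O(1)$ pieces. Second, the active functions $\tau_\mu$ with fixed starting point $s^*_\mu$ contribute at most $|S^*_\mathcal{A}| + O(1)$ further pieces. Third, by \autoref{thm:monotone-assignment} the lower envelope switches active functions monotonically. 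Merging this with the $O(1)$ pieces already on $\mathcal{B}$ after lines~6--7 of \autoref{alg:cdtw-apx} and lines~2--3 of \autoref{alg:cdtw-subroutines} only adds $O(1)$. All the work therefore lies in proving $|S^*_\mathcal{A}| \in O(r^2)$, and it must avoid any dependence on $N_\mathcal{A}$.

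For that I would induct on the rank, aiming for the recurrence $|S^*_\mathcal{A}| \leq |S^*_{\mathcal{A}'}| + O(r)$, where $\mathcal{A}$ has rank $r-1$ and $\mathcal{A}' := \mathcal{A}.\mathrm{opp}$ has rank $r-2$. The base case is a rank-$0$ border. It receives only corner and adjoining propagations, so it has $O(1)$ pieces and hence $O(1)$ parent candidates by \autoref{thm:parent-candidates}. For the step, write $C$ for the cell where $\mathcal{A}$ is the input border. By \autoref{thm:happiness-lemma}, $S^*_\mathcal{A}$ is the set of semistrict local minima of $\mathcal{A}.\mathrm{apx} + \rhoinput$, where $\rhoinput$ belongs to $C$ and has $O(1)$ pieces. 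I would split $\mathrm{dom}(\mathcal{A})$ into the maximal intervals on which $\mathcal{A}.\mathrm{apx}$ comes from one source, and bound the minima on each kind of interval separately.
\begin{itemize}
\item \emph{Corner or adjoining intervals.} There are $O(1)$ of these, each with $O(1)$ pieces, so they give $O(1)$ minima.
\item \emph{Fixed-start intervals.} On these, $\mathcal{A}.\mathrm{apx}(t) = \mathcal{A}'.\mathrm{apx}(s^*_\mu) + \mathrm{opt}(s^*_\mu, t)$ for some $\mu$. Adding $\rhoinput$ gives a function describing paths from the fixed point $\mathcal{A}'(s^*_\mu)$ through two consecutive cells, which has $O(1)$ pieces. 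I would charge its $O(1)$ minima to $s^*_\mu \in S^*_{\mathcal{A}'}$, but only once for the whole set of fixed-start intervals. The justification is the separation from \autoref{thm:monotone-assignment}: a new minimum can only arise where the envelope switches from one active function to the next.
\item \emph{Straight-path intervals.} On these, $\mathcal{A}.\mathrm{apx}(t) = \mathcal{A}'.\mathrm{apx}(t) + \mathrm{opt}(t,t)$. Unrolling this along the chain of opposing borders down to the rank-$0$ ancestor writes $\mathcal{A}.\mathrm{apx}$ as that ancestor's $O(1)$-piece function plus a sum of $O(r)$ straight-path costs. Each of these has $O(1)$ pieces, possibly interleaved with fixed-start contributions from earlier levels. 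The overlay then has $O(r)$ breakpoints beyond those already counted, and each resulting piece contributes at most one parabola vertex.
\end{itemize}
Summing the recurrence over the ranks gives $O(r^2)$.

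The main obstacle is the charging argument for fixed-start intervals. A single $s^*_\mu$ could in principle reappear as the start of many separate intervals on $\mathcal{A}$, interleaved with straight-path intervals, which would make the naive count $O(|S^*_{\mathcal{A}'}| \cdot r)$ per level and cubic overall. My first attempt would use the strict inequality of \autoref{thm:monotone-assignment} to show that once $\tau_\kappa$ with $\kappa > \mu$ becomes active, $\tau_\mu$ never becomes active again. Then each $s^*_\mu$ owns one contiguous interval, and its minima in the combined function can be matched to local minima already present in $S^*_{\mathcal{A}'}$ plus $O(1)$ new ones at the breakpoints of $\rhoinput$ and of the previous cell's output function. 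Should that matching fail, my fallback is to show directly that on each fixed-start interval the function $\mathcal{A}.\mathrm{apx} + \rhoinput$ is unimodal, because it is the cost of a two-cell happy path from a fixed point. That would still give $O(1)$ minima per interval and, with the one-interval-per-$s^*_\mu$ property, the same additive $O(r)$.
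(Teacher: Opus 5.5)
Your derivation of $N_\mathcal{B} \leq N_\mathcal{A} + O(|S^*_\mathcal{A}|)$ matches the paper's count, with one unjustified step. Merging with the $O(1)$ pieces already on $\mathcal{B}$ does not automatically cost only $O(1)$, because a single parabola can cross an $N_\mathcal{A}$-piece function $\Theta(N_\mathcal{A})$ times. The paper closes this with a separation argument: a dominating best path from $\mathcal{A}$ to $\mathcal{B}(t)$ crosses every path from $\mathcal{B}.\mathrm{adj}$ to $\mathcal{B}(t')$ with $t' > t$, so by \autoref{thm:best-paths} the opposing pieces occupy a suffix of $\mathrm{dom}(\mathcal{B})$.

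The real gap is in $|S^*_\mathcal{A}| \in O(r^2)$. Write $\mathcal{A}_0,\dotsc,\mathcal{A}_{r-1} = \mathcal{A}$ for the chain of opposing predecessors, so $\mathcal{A}_{r-2}$ is your $\mathcal{A}'$. Charging the minima on fixed-start intervals to $s^*_\mu \in S^*_{\mathcal{A}_{r-2}}$ gives at best $|S^*_\mathcal{A}| \leq c \cdot |S^*_{\mathcal{A}_{r-2}}| + O(r)$ with a constant $c$ you do not control; the unimodality fallback is unsupported. Any $c > 1$ compounds exponentially over the ranks, which is the same kind of blow-up the paper warns about after \autoref{thm:propagation-procedure}. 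What you are missing is that, by \autoref{thm:happiness-lemma}, every fixed-start function on $\mathcal{A}$ has the form $\mathcal{A}_{r-2}.\mathrm{apx}(s^*_\mu) + \rhoinput[\mathcal{A}_{r-2}](s^*_\mu) + \rhooutput[\mathcal{A}_{r-1}](t)$. So all of them, for all $\mu$ at once, differ only by additive constants. After adding the current cell's $\rhoinput$ they share their quadratic and linear coefficients. Apart from switch points (handled below), their minima therefore lie among the $O(1)$ stationary points and breakpoints of the single function $\rhooutput[\mathcal{A}_{r-1}] + \rhoinput$. Nothing needs to be charged to $S^*_{\mathcal{A}_{r-2}}$ at all.

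Your straight-path bullet fails for the complementary reason. On those intervals $\mathcal{A}.\mathrm{apx} + \rhoinput$ inherits every piece of $\mathcal{A}_{r-2}.\mathrm{apx}$ there, including the interleaved fixed-start pieces of all earlier levels. So ``$O(r)$ further breakpoints and one vertex per piece'' reintroduces a dependence on $N_{\mathcal{A}_{r-2}}$. Moreover, $|S^*|$ is not a good induction quantity. Adding the non-constant function $\rhooutput[\mathcal{A}_{r-1}] + \rhoinput$ moves parabola vertices, so minima on $\mathcal{A}$ need not correspond to minima counted in $S^*_{\mathcal{A}_{r-2}}$.

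The paper, via \autoref{thm:inductive-bound-extended}, avoids any recurrence on $|S^*|$. It cuts $\mathrm{dom}(\mathcal{A})$ at the breakpoints of $\mathcal{A}_0.\mathrm{apx}$, of $\rhoinput$, and of all $\rhoinput[\mathcal{A}_{i-1}]$ and $\rhooutput[\mathcal{A}_i]$ along the chain, giving $O(r)$ open intervals. On each interval it proves two invariants by induction on $i$:
\begin{enumerate}
\item $\partial_- \geq \partial_+$ holds everywhere, because the lower envelope only switches to a function with no larger right derivative. Contrary to your expectation that new minima arise at switches, switches never create V-shaped minima, and every local minimum inside an interval is a stationary point.
\item $\mathcal{A}_i.\mathrm{apx}$ has at most $i+1$ distinct $(a,b)$-coefficient pairs, since each level adds only the single pair of $\rhooutput[\mathcal{A}_i]$.
\end{enumerate}
Hence there are at most $r$ minima per interval and $O(r^2)$ in total, independently of $N_\mathcal{A}$ and of $|S^*_{\mathcal{A}_{r-2}}|$.
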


\autoref{thm:inductive-bound} follows from the more general \autoref{thm:inductive-bound-extended} that we will prove below in \autoref{subsec:extension}.
In comparison to \cite[Lemma~14]{BuchiNW2025}, the proof of our bounds avoids a distinction between different types of subsegments.
It instead uses the Happiness Lemma~(\autoref{thm:happiness-lemma}) to implicitly handle all of the types that are relevant for our algorithm.
With these bounds in place, it remains to count the number of quadratic pieces on a border as well as the number of borders for each rank.

\begin{proposition}
    \label{thm:rank-bounds}
    In a parameter space with $n \times m$ cells the border ranks are at most $\max\{n,m\}$.
    There are $O(nm)$ borders of rank $0$, each of which has $O(1)$ quadratic pieces under $\tnorm[1]$.
    Moreover, there are $O(nm/r)$ borders of any rank $r > 0$, each of which has $O(r^3)$ pieces under $\tnorm[1]$.
\end{proposition}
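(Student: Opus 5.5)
The plan is to go through the three claims separately, using the geometry of rank chains together with \autoref{thm:inductive-bound}.

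\emph{Rank bound and chains.} By \autoref{def:rank}, a border $\mathcal{B}$ of rank $r>0$ has an opposing input border $\mathcal{A}=\mathcal{B}.\mathrm{opp}$ of rank $r-1$. For a north output border, $\mathcal{A}$ is the south border of the same cell, i.e.\ the north border of the cell directly below. Iterating, a north border of rank $r$ in $C_{i,j}$ sits on top of a vertical chain of north borders in $C_{i,j-1},\dotsc,C_{i,j-r}$ with ranks $r-1,\dotsc,0$. Every rank along the chain stays nonnegative, and the chain cannot go below the boundary row $j=0$. Hence $r\le j\le m$. The argument for east borders is symmetric and gives $r\le i\le n$, so all ranks are at most $\max\{n,m\}$.

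\emph{Counting borders.} There are $O(nm)$ borders in total, which settles the count for rank $0$. For the piece count at rank $0$: no opposing propagation happens there, so $\mathcal{B}.\mathrm{apx}$ comes only from the corner initialisation (lines~6--7 of \autoref{alg:cdtw-apx}) and from at most one adjoining propagation. Both of these have $O(1)$ pieces, and their lower envelope still has $O(1)$ pieces. The base-case borders also have $O(1)$ pieces, as argued earlier in the section.

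For rank $r>0$, I would use a charging argument. Each border of rank $\ge r$ in column $i$ at height $j$ forces the borders at heights $j-r+1,\dotsc,j$ in the same column to have positive rank. More concretely, the maximal runs of consecutive borders with positive rank in a column each have the form rank $1,2,\dotsc,k$, because the ranks increase by exactly one going up and reset to $0$ when the chain breaks. Each border of rank exactly $r$ therefore ends a run segment that contains borders of ranks $1,\dotsc,r$ below it. These segments are disjoint for distinct rank-$r$ borders in the same column: two borders of rank $r$ in the same run are impossible, since rank strictly increases along a run. Within column $i$, the number of rank-$r$ borders is therefore at most $m/r$. Summing over the $n$ columns gives $O(nm/r)$ north borders of rank $r$, and the symmetric argument for east borders gives $O(nm/r)$ as well.

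\emph{Pieces at rank $r$.} Here I would unroll \autoref{thm:inductive-bound} along the chain. It gives $N_{\mathcal{B}_r}\le N_{\mathcal{B}_{r-1}}+c\,r^2$, with $N_{\mathcal{B}_0}=O(1)$. Summing yields $N_{\mathcal{B}_r}\le O(1)+c\sum_{k=1}^r k^2=O(r^3)$.

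The main obstacle is making the constant in \autoref{thm:inductive-bound} uniform. It must not depend on $N_\mathcal{A}$, and it must hold even though rank-$k$ borders additionally absorb $O(1)$ pieces from corner and adjoining propagations at every step. Those extra pieces are harmless because they contribute $O(r)$ in total.
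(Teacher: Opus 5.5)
Your proposal is correct and follows essentially the same route as the paper. Ranks accumulate only along a column (for north borders) or a row (for east borders). The predecessor chains of distinct rank-$r$ borders are disjoint, which gives $O(nm/r)$ such borders. The piece bound comes from iterating \autoref{thm:inductive-bound} upward from an $O(1)$-piece rank-$0$ border.

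The only difference is cosmetic. You telescope $N_{\mathcal{B}_r} \leq N_{\mathcal{B}_0} + c\sum_{k=1}^{r} k^2$ directly, whereas the paper proves the invariant $N_\mathcal{B} \leq c^2 r^3$ by induction, treating $r=1$ as a separate base case.
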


\begin{proof}
    The first claim is true since ranks can only accumulate along a row or column of cells by \autoref{def:rank}.
    As there are $O(nm)$ borders in total, this particularly holds for borders of rank $0$, and we have already argued in this section that these borders have $O(1)$ pieces under $\tnorm[1]$.

    Each border $\mathcal{B}$ of rank $r > 0$ has $r$ predecessor borders of ranks $0,\dotsc,r-1$ that are disjoint from all the other predecessors, so the number of rank-$r$ borders is in $O(nm/r)$.
    Finally, we show $N_\mathcal{B} \leq c^2 \cdot r^3$ by induction, where $c \in \mathbb{N}$ is a universal constant for the bounds of \autoref{thm:inductive-bound}.

    In the base case for $r = 1$, this is clear because then only one propagation from a border of rank $0$ with $O(1)$ pieces has occurred.
    For the inductive step we assume $r = r_0 + 1$, where $r_0 > 0$ is the rank of $\mathcal{A} := \mathcal{B}.\mathrm{opp}$.
    By using \autoref{thm:inductive-bound} and the induction hypothesis for $\mathcal{A}$, we get
    \[
        N_\mathcal{B}
        \leq N_\mathcal{A} + c \cdot |S^*_\mathcal{A}|
        \leq c^2 \cdot r_0^3 + c \cdot (c \cdot r^2)
        \leq c^2 \cdot r^2 (r_0 + 1)
        = c^2 \cdot r^3
        \text{.}
        \tag*{\qed}
    \]
\end{proof}

When putting it all together, we assume $n \geq m$ in order to bound the maximum rank by $n$.
This is not a restriction because the input curves $P,Q$ can be swapped:
We have $\mathrm{cdtw}_{\| \cdot \|}(P,Q) = \mathrm{cdtw}_{\| \cdot \|}(Q,P)$ under any norm $\tnorm$, as \autoref{def:cdtw} inherits the norm-induced symmetry.

\begin{theorem}
    \label{thm:running-time-1-norm}
    Given two polygonal curves $P = \langle p_0,\dotsc,p_n \rangle$ and $Q = \langle q_0,\dotsc,q_m \rangle$ in $(\mathbb{R}^2, \tnorm[1])$, where $n \geq m$, our $5$-approximation algorithm for CDTW of $P,Q$ under $\tnorm[1]$ takes $O(n^4 m)$ time.
\end{theorem}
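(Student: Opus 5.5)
By \autoref{thm:propagation-procedure}, the running time of \autoref{alg:cdtw-apx} under $\tnorm[1]$ is $O(N)$, where $N$ is the total number of quadratic pieces over all cell borders. The approximation factor is already given by \autoref{thm:apx-factor-algo}. So the plan is simply to bound $N$ by $O(n^4 m)$ by summing piece counts grouped by border rank (\autoref{def:rank}).

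First, I add the overhead that lies outside the per-cell work. The base case in lines~1--2 costs $O(n+m)$. The corner propagations in lines~6--7 and the adjoining propagations both produce $O(1)$ pieces per border, for $O(nm)$ total. All of this is subsumed by the final bound.

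Next, I sum over ranks using \autoref{thm:rank-bounds}. Since $n \geq m$, every rank is at most $\max\{n,m\} = n$. Rank-$0$ borders contribute $O(nm)\cdot O(1) = O(nm)$ pieces. For each $r \in \{1,\dotsc,n\}$, there are $O(nm/r)$ borders with $O(r^3)$ pieces each. This gives
\[
    N \in O(nm) + \sum_{r=1}^{n} O\!\left(\frac{nm}{r}\right) \cdot O(r^3)
    = O\!\left(nm \sum_{r=1}^{n} r^2\right)
    = O(n^4 m)\text{.}
\]
The constants are uniform because \autoref{thm:rank-bounds} comes from an explicit induction with a universal constant $c$. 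The sum can therefore be taken with a single hidden constant.

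There is one point to check. The proof of \autoref{thm:rank-bounds} gives $N_\mathcal{B} \leq c^2 r^3$ only under the inductive hypothesis of \autoref{thm:inductive-bound}, namely $|S^*_\mathcal{A}| \in O(r^2)$. That lemma is deferred to \autoref{thm:inductive-bound-extended}, and I assume it as stated. I also need the $O(1)$ pieces from corner and adjoining propagations on each border to be absorbed into the $O(r^3)$ count. This holds because they enter via the $O(1)$ pre-existing pieces in the lower envelope step of \autoref{thm:propagation-procedure}.

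With $N \in O(n^4 m)$, \autoref{thm:propagation-procedure} gives total time $O(n^4 m)$. Since the rank argument requires $n \geq m$, this is where the symmetry $\mathrm{cdtw}(P,Q) = \mathrm{cdtw}(Q,P)$ is used. The only real obstacle is the rank-wise piece bound, and that is imported from earlier results, so the remaining step is the summation above.
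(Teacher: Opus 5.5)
Your proposal is correct and is essentially the paper's proof: \autoref{thm:propagation-procedure} makes the running time linear in $N$, and summing the rank-wise bounds of \autoref{thm:rank-bounds} gives $N \leq c \cdot nm \cdot (1 + \sum_{r=1}^{n} r^2) \in O(n^4 m)$. Your side remark on absorbing the $O(1)$ corner and adjoining pieces needs no separate justification. It is already accounted for inside \autoref{thm:inductive-bound}, via the separation argument in the proof of \autoref{thm:inductive-bound-extended}, rather than by the lower-envelope step alone.
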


\begin{proof}
    \autoref{thm:propagation-procedure} says that the running time is linear in $N$.
    By \autoref{thm:rank-bounds}, there is a constant $c \in \mathbb{N}$ with $N \leq c \cdot (nm + \sum_{r=1}^{n} nm/r \cdot r^3) = c \cdot nm \cdot (1 + \sum_{r=1}^{n} r^2) \leq 2c \cdot n^4 m$.
    \qed
\end{proof}

\subsection{Extension to Other Norms}
\label{subsec:extension}

We proceed to extend the above implementation and running time analysis of our algorithm from the $1$-norm to general polygonal norms, which we will subsequently use in order to approximate any given norm on $\mathbb{R}^2$ such as the $2$-norm.
For these purposes, we employ the following characterisation:
A function is a norm on $\mathbb{R}^d$ if and only if it is the \emph{gauge} $\mathcal{G}_K \colon \mathbb{R}^d \to \mathbb{R}_{\geq 0}$ of a suitable set~$K \subseteq \mathbb{R}^d$, defined by $\mathcal{G}_K(z) := \inf \{ \lambda \geq 0 \mid z \in \lambda K \}$.
E.g., the $2$-norm is the gauge of the Euclidean unit ball, and the $1$-norm on $\mathbb{R}^2$ is the gauge of the convex polygon that has vertices $(\pm 1,0)^{\mathsf{T}}$ and $(0,\pm 1)^{\mathsf{T}}$.
The precise conditions for $K$ are that it is absorbing, balanced, bounded and convex.
Because the gauge of any such set is equal to the gauges of its closure and interior, one may optionally further require that $K$ is closed or open.
(See \cite[pp.~39--40]{SchaeW1999}.)
We defer formal definitions to later.

\begin{figure}[H]%
    \centering%
    \begin{subfigure}{0.3\linewidth}
        \centering
        \begin{tikzpicture}[x=1.5cm,y=1.5cm]
            \useasboundingbox (-1.5,-1.5) rectangle (1.5,1.5);
            \path[axis] (-1.5,0) -- (1.5,0);
            \path[axis] (0,-1.5) -- (0,1.5);

            \path[tick] (1,4pt) -- (1,-4pt) node[below,inner sep=2.5pt] {$1$};
            \path[tick] (4pt,1) -- (-4pt,1) node[left,inner sep=1.75pt] {$1$};
            \path[tick] (-1,4pt) -- (-1,-4pt);
            \path[tick] (4pt,-1) -- (-4pt,-1);

            \coordinate (v) at (1,0);
            \coordinate (w) at (0,1);
            \coordinate (-v) at ($-1*(v)$);
            \coordinate (-w) at ($-1*(w)$);

            \path[faint fill=UmiBlue] (v) -- (w) -- (-v) -- (-w) -- cycle;
            \path[faint fill=UmiVermillion] (0,0) -- (v) -- (w) -- cycle;
            \path[util grey,rotate around={45:(0.5,0.5)}] ($(0.5,0.5) + 0.125*(1,0)$) |- ($(0.5,0.5) + 0.125*(0,1)$);
            \path[main line,UmiBlue] (v) -- (w) -- (-v) -- node[below left,spacing] {\color{UmiBlue}$K$} (-w) -- cycle;

            \path[vector,UmiVermillion] (0,0) -- node[pos=0.4,below] {\color{UmiVermillion}$v$} (v);
            \path[vector,UmiVermillion] (0,0) -- node[pos=0.4,left,inner sep=2.5pt] {\color{UmiVermillion}$w$} (w);
            \path[vector] (0,0) -- (1,1) node[above,spacing] {$\frac{(w_2 - v_2, v_1 - w_1)^\mathsf{T}}{v_1 w_2 - w_1 v_2}$};
            \node[round point,label={below left:$\mathbf{0}$}] at (0,0) {};
        \end{tikzpicture}
        \subcaption{Tilted square ($1$-norm)}
    \end{subfigure}%
    \hspace*{0.1\linewidth/3}%
    \begin{subfigure}{0.3\linewidth}
        \centering
        \begin{tikzpicture}[x=1.5cm,y=1.5cm]
            \useasboundingbox (-1.5,-1.5) rectangle (1.5,1.5);
            \path[axis] (-1.5,0) -- (1.5,0);
            \path[axis] (0,-1.5) -- (0,1.5);

            \coordinate (v) at (1,-1);
            \coordinate (w) at (1,1);
            \coordinate (-v) at ($-1*(v)$);
            \coordinate (-w) at ($-1*(w)$);

            \path[faint fill=UmiBlue] (v) -- (w) -- (-v) -- (-w) -- cycle;
            \path[faint fill=UmiVermillion] (0,0) -- (v) -- (w) -- cycle;
            \path[main line,UmiBlue] (v) -- (w) -- (-v) -- (-w) -- cycle;

            \path[vector,UmiVermillion] (0,0) -- (v);
            \path[vector,UmiVermillion] (0,0) -- (w);
            \path[vector] (0,0) -- (1,0);
            \node[round point] at (0,0) {};
        \end{tikzpicture}
        \subcaption{Square ($\infty$-norm)}
    \end{subfigure}%
    \hspace*{0.1\linewidth/3}%
    \begin{subfigure}{0.3\linewidth}
        \centering
        \begin{tikzpicture}[x=1.5cm,y=1.5cm]
            \useasboundingbox (-1.5,-1.5) rectangle (1.5,1.5);
            \path[axis] (-1.5,0) -- (1.5,0);
            \path[axis] (0,-1.5) -- (0,1.5);

            \def\vx{1.1}\def\vy{-0.7}
            \def\wx{1.2}\def\wy{0.1}

            \coordinate (v) at (\vx,\vy);
            \coordinate (w) at (\wx,\wy);
            \coordinate (-v) at ($-1*(v)$);
            \coordinate (-w) at ($-1*(w)$);

            \coordinate (v') at (0.9,0.8);
            \coordinate (w') at (-0.2,1.1);
            \coordinate (-v') at ($-1*(v')$);
            \coordinate (-w') at ($-1*(w')$);

            \path[faint fill=UmiBlue] (v) -- (w) -- (v') -- (w') -- (-v) -- (-w) -- (-v') -- (-w') -- cycle;
            \path[faint fill=UmiVermillion] (0,0) -- (v) -- (w) -- cycle;
            \path[main line,UmiBlue] (v) -- (w) -- (v') -- (w') -- (-v) -- (-w) -- (-v') -- (-w') -- cycle;

            \path[vector,UmiVermillion] (0,0) -- (v);
            \path[vector,UmiVermillion] (0,0) -- (w);
            \path[vector] (0,0) -- ($1/(\vx*\wy-\wx*\vy)*(\wy-\vy,\vx-\wx)$);
            \node[round point] at (0,0) {};
        \end{tikzpicture}
        \subcaption{Balanced convex polygon}
    \end{subfigure}%
    \caption{Polygons whose gauges are norms, along with evaluation vectors for specific cones}%
    \label{fig:polygonal-norms}%
\end{figure}

For now, we only need the linearity property of polygonal norms that allows for the extension.
While $\tnorm[1]$ is linear on each quadrant of the plane, any norm $\mathcal{G}_K$ for a suitable polygon $K \subseteq \mathbb{R}^2$ is linear on each cone \smash{$\Delta_{v,w} := \{ \lambda v + \mu w \mid \lambda,\mu \geq 0 \}$}, where $v,w \in \mathbb{R}^2$ are two adjacent vertices of $K$.
In fact, we can evaluate $\mathcal{G}_K$ on the cone $\Delta_{v,w}$ through the scalar product $\mathcal{G}_K(z) = \frac{(w_2 - v_2, v_1 - w_1)}{v_1 w_2 - w_1 v_2} \cdot z$, which uses a vector that is orthogonal to $v - w$.
See \autoref{fig:polygonal-norms} for examples.
This property yields piecewise linear CDTW integrands, and it suffices for proving the generalised inductive bound that we desire.
If $K$ is a polygon of complexity $k \in \mathbb{N}$, i.e.\ it has $k$ vertices and $k$ cones, then~the piecewise quadratic functions that had $O(1)$ pieces under $\tnorm[1]$ have $O(k)$ pieces under $\mathcal{G}_K$.

\begin{lemma}
    \label{thm:inductive-bound-extended}
    Consider any norm $\mathcal{G}_K \colon \mathbb{R}^2 \to \mathbb{R}_{\geq 0}$ such that $K$ is a polygon of complexity $k \in \mathbb{N}$, and let $\mathcal{B}$ be an output border of rank $r > 0$ with opposing input border $\mathcal{A} := \mathcal{B}.\mathrm{opp}$.
    Then the number $N_\mathcal{B}$ is bounded by $N_\mathcal{B} \leq N_\mathcal{A} + O(|S^*_\mathcal{A}| + k)$ with $|S^*_\mathcal{A}| \in O(r^2 \cdot k)$ under $\mathcal{G}_K$.
\end{lemma}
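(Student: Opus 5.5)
We prove the two bounds separately. The first, $N_\mathcal{B} \leq N_\mathcal{A} + O(|S^*_\mathcal{A}| + k)$, is essentially a recount of the procedure in the proof of \autoref{thm:propagation-procedure} under $\mathcal{G}_K$. By \autoref{thm:monotone-assignment}, $\mathrm{apx}_{\leq s^*}$ is the lower envelope of the functions $\tau_1,\dotsc,\tau_\lambda$, each active on a contiguous range. The restrictions $\tau_\kappa|_{S_\kappa}$ together contribute at most $N_\mathcal{A}$ pieces from $\mathcal{A}.\mathrm{apx}$, plus $O(k)$ breakpoints of $t \mapsto \mathrm{opt}_{\mathcal{A},\mathcal{B}}(t,t)$ coming from the cones of $K$. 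Each switch of the active function adds $O(1)$ pieces, giving $O(|S^*_\mathcal{A}|)$ in total. A function with fixed start $s^*_\mu$ has $O(k)$ pieces, but as it stays active on consecutive $S_\kappa$ its cone breakpoints are shared, so the tails add $O(|S^*_\mathcal{A}| + k)$ overall. The final lower envelope with the $O(k)$ pieces already present from the corner and adjoining propagations has $O(1)$ crossings per pair of overlapping pieces. This interleaving is monotone, so it adds only $O(k)$ further pieces.

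The main part is $|S^*_\mathcal{A}| \in O(r^2 k)$, and it must not depend on $N_\mathcal{A}$. We induct along the chain of opposing predecessors $\mathcal{A}_0,\dotsc,\mathcal{A}_{r-1} = \mathcal{A}$, where $\mathcal{A}_0$ has rank $0$. By the Happiness Lemma (\autoref{thm:happiness-lemma}), the candidates for the north-east corner are the semistrict local minima of $\mathcal{A}.\mathrm{apx} + \rhoinput$. On each interval $T_\kappa$ of $\mathcal{A}$ obtained in the previous propagation, $\mathcal{A}.\mathrm{apx}$ takes one of three forms:
\begin{enumerate}
    \item a rank-$0$ initial or adjoining piece, giving $O(k)$ pieces in total;
    \item a fixed-start piece $\mathcal{A}'.\mathrm{apx}(s^*_\kappa) + \rhoinput'(s^*_\kappa) + \rhooutput'(t)$;
    \item a diagonal piece $\mathcal{A}'.\mathrm{apx}(t) + \mathrm{opt}(t,t)$.
\end{enumerate}
Here $\mathcal{A}' := \mathcal{A}.\mathrm{opp}$.

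On a type-(2) interval, adding $\rhoinput$ gives $\rhooutput' + \rhoinput$ up to a constant. This is a fixed function whose form depends only on the cell geometry and the $O(k)$ cones. Every type-(2) interval therefore samples the same function, which has $O(k)$ local minima. Each interval can contribute only these minima, its endpoints, and the $O(1)$ minima on each of its constituent pieces. Since local minima of a fixed function are shared across disjoint intervals, type (2) contributes $O(|S^*_{\mathcal{A}'}| + k)$ candidates in total, and not $O(k)$ per interval.

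On a type-(3) interval, $\mathcal{A}.\mathrm{apx} + \rhoinput$ equals $\mathcal{A}'.\mathrm{apx}(t) + \rhoinput'(t) + \big(\rhooutput'(t) + \rhoinput(t)\big)$, i.e.\ the candidate function of the previous border plus an added term. Its local minima should be controlled by the earlier candidates and their interactions with the $O(k)$ breakpoints of the added term. This would yield a recurrence of the form $|S^*_{\mathcal{A}_j}| \leq |S^*_{\mathcal{A}_{j-1}}| + O(j \cdot k)$, where the $j$ accounts for the number of distinct added linear-in-cone terms accumulated along the chain. Summing over $j$ gives $O(r^2 k)$.

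I expect this type-(3) recurrence to be the main obstacle. Adding a smooth piecewise quadratic term to a piecewise quadratic function can create new vertex minima on every piece, which would reintroduce a dependence on $N_\mathcal{A}$. I would first try to show that along a diagonal segment the integrand is linear in $t$ within each cone, so the added term is piecewise linear with $O(k)$ pieces. Likewise, $\mathcal{A}'.\mathrm{apx} + \rhoinput'$ restricted to the diagonal intervals is itself a sum of at most $j$ such fixed terms plus a rank-$0$ base. Then each convex piece of the sum changes monotonicity only $O(1)$ times per breakpoint of the $j$ accumulated terms, which gives the $O(jk)$ increment.
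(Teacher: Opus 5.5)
Your first bound is essentially the paper's counting argument: parents switch monotonically, and all fixed-start tails share the breakpoints of $\rhooutput$. The genuine gap is in the second bound $|S^*_\mathcal{A}| \in O(r^2 k)$, which is the heart of the lemma. As written, your type-(2) count keeps the $O(|S^*_{\mathcal{A}'}|)$ endpoints of the intervals $T_\kappa$ as possible candidates. Added to a type-(3) contribution of $|S^*_{\mathcal{A}'}| + O(jk)$, this only yields $|S^*_{\mathcal{A}_j}| \leq (1+c) \cdot |S^*_{\mathcal{A}_{j-1}}| + O(jk)$. That is exponential growth along the chain, exactly what the lemma must exclude. The type-(3) recurrence itself is left unproven, and the route you sketch does not work, for two reasons. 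First, the added term $\rhooutput[\mathcal{A}] + \rhoinput$ is piecewise quadratic, not piecewise linear, because the positions where the integrand switches cones move with the parameter (already in 1D, $\int_0^1 |t-u| \, du = t^2 - t + \tfrac{1}{2}$ for $t \in [0,1]$). Second, on a diagonal interval $\mathcal{A}'.\mathrm{apx}$ is not one accumulated sum over a rank-$0$ base. It may consist of many fixed-start pieces from earlier levels with different constants, separated by kinks. Any argument in which every kink or interval endpoint is a potential local minimum reintroduces a dependence on $N_\mathcal{A}$ or $|S^*_{\mathcal{A}'}|$.

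Two ideas are missing. \emph{One-sided derivatives.} Cut $\mathrm{dom}(\mathcal{A})$ at the breakpoints of $\mathcal{A}_0.\mathrm{apx}$, $\rhoinput$ and all $\rhoinput[\mathcal{A}_{i-1}]$, $\rhooutput[\mathcal{A}_i]$ along the chain, giving $O(rk)$ open intervals. Prove inductively that $\partial_{-} \mathcal{A}_i.\mathrm{apx} \geq \partial_{+} \mathcal{A}_i.\mathrm{apx}$ holds on each of them. The diagonal function inherits this because the added term is a single quadratic there, and the fixed-start functions are smooth there. A continuous lower envelope only switches to a function whose right derivative is at most that of the previous one. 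Consequently no kink inside an interval is a semistrict local minimum unless both one-sided derivatives vanish; this includes the switch points you count as endpoints. \emph{Coefficient pairs.} Count distinct $(a,b)$-coefficient pairs instead of pieces. On such an interval all fixed-start pieces created at level $i$ share the pair of $\rhooutput[\mathcal{A}_i]$, because the fixed values only enter the constant term. The diagonal function shifts all inherited pairs by the same quadratic. Hence $\mathcal{A}_i.\mathrm{apx}$ has at most $i+1$ pairs, so $\mathcal{A}.\mathrm{apx} + \rhoinput$ has at most $r$ parabola vertices per interval, and $|S^*_\mathcal{A}| \in O(r^2 k)$ follows directly without any recurrence in $|S^*|$. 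Your observation that type-(2) pieces plus $\rhoinput$ form one fixed shape up to a constant is the seed of the second idea, but without the first it bounds nothing.

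A smaller point concerns ``this interleaving is monotone'' in your first bound, which needs the paper's separation argument. A dominating best path from $\mathcal{A}$ to $\mathcal{B}(t)$ crosses every path from $\mathcal{B}.\mathrm{adj}$ to $\mathcal{B}(t')$ with $t' > t$, so by \autoref{thm:best-paths} the opposing contribution wins on a whole suffix. Without this, the final lower envelope could cross $\Theta(N_\mathcal{A})$ times and spoil the coefficient $1$ in front of $N_\mathcal{A}$.
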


\begin{proof}
    As the considered norm $\mathcal{G}_K$ is linear on $k$ cones partitioning the plane, fixing an argument of $\mathrm{opt}_{\mathcal{A},\mathcal{B}}$ now yields a piecewise quadratic function with $O(k)$ pieces, which is one of $\rhoinput,\rhooutput$ from \autoref{thm:happiness-lemma} plus the fixed offset resulting from the other.
    This is again because the $O(1)$ bending points of the optimal paths from \autoref{thm:optimal-paths} may depend linearly on the non-fixed argument, so that the arguments of $\mathcal{G}_K$ in the integrand for the path costs can switch cones $O(k)$ times.

    For the same reason, the initialisation of $\mathcal{B}.\mathrm{apx}$ in lines~6--7 of \autoref{alg:cdtw-apx} and the subroutine call $\mathcal{B}.\propagatefromadj$ from \autoref{alg:cdtw-subroutines} amount to $O(k)$ quadratic pieces.
    Some of these pieces may remain after the call $\mathcal{B}.\propagatefromopp$, but they cannot result in an increase of complexity:
    Every dominating best path from $\mathcal{A}$ to a point $\mathcal{B}(t)$ intersects all paths from $\mathcal{B}.\mathrm{adj}$ to points $\mathcal{B}(t')$ for $t' > t$.
    By applying \autoref{thm:best-paths} as in the proof of \autoref{thm:apx-factor-opp}, it then follows that there are dominating best paths from $\mathcal{A}$ to all such $\mathcal{B}(t')$.
    Thus, the pieces that the two input borders give are separated, and $\mathcal{B}.\mathrm{adj}$ contributes at most $O(k)$ pieces to $N_\mathcal{B}$.

    To show $N_\mathcal{B} \leq N_\mathcal{A} + O(|S^*_\mathcal{A}| + k)$, we need to bound the number of pieces given by $\mathcal{A} = \mathcal{B}.\mathrm{opp}$.
    For this, we consider the causes that can contribute pieces of $\mathrm{apx}_{\leq s^*}$.
    Recall that we have
    \[
        \mathrm{apx}_{\leq s^*}(t)
        = \min_{s \leq s^* \land s \leq t} \mathcal{A}.\mathrm{apx}(s) + \mathrm{opt}_{\mathcal{A},\mathcal{B}}(s,t)
        = \rhooutput(t) + \min_{s \leq s^* \land s \leq t} \mathcal{A}.\mathrm{apx}(s) + \rhoinput(s)
    \]
    for all $t \in \mathrm{dom}(\mathcal{B})$ due to its definition and \autoref{thm:happiness-lemma}, so one possible cause is a switch to~another piece of~$\rhooutput$.
    Apart from this, \autoref{thm:parent-candidates} implies that any new piece of $\mathrm{apx}_{\leq s^*}$ may only begin when our assignment of parent points needs to switch to another piece of $\mathcal{A}.\mathrm{apx}$ or to one of $\rhoinput$, or when it needs to switch to or from a parent point among the candidates contained in $S^*_\mathcal{A}$.

    These switches never occur in reverse direction since \autoref{thm:monotone-assignment} says that parent points can be assigned monotonically.
    Hence, $\rhoinput$ and $\rhooutput$ contribute $O(k)$ pieces to $N_\mathcal{B}$, while $\mathcal{A}.\mathrm{apx}$ contributes at most $N_\mathcal{A}$ pieces.
    We charge each parent point from $S^*_\mathcal{A}$ twice to cover the case that it causes a new piece both when switching to it and when switching away from it in a monotone assignment.
    This contributes at most $2 \cdot |S^*_\mathcal{A}|$ pieces, so we obtain $N_\mathcal{B} \leq N_\mathcal{A} + O(|S^*_\mathcal{A}| + k)$ as desired.

    We next bound $|S^*_\mathcal{A}|$, which is the number of parent candidates for the north-east cell corner.
    By \autoref{def:parent} and \autoref{thm:happiness-lemma}, these candidates correspond to the semistrict local minima~of~the piecewise quadratic function $s \mapsto \mathcal{A}.\mathrm{apx}(s) + \rhoinput(s)$.
    They can thus only occur at the two domain endpoints, at the breakpoints between pieces, or at points where the derivative is $0$.
    More precisely, a local minimum can occur at some breakpoint only if the left piece's derivative is at most $0$ and the right piece's derivative it at least $0$, see \autoref{fig:local-minima}.
    This is because the function is additionally continuous, cf.\ the proof of \autoref{thm:best-paths}.
    Its left and right derivatives are defined everywhere.

    Let $\partial_{-}$ and $\partial_{+}$ respectively denote the left and right differentiation operator.
    In the following, we split $\mathrm{dom}(\mathcal{A})$ into open intervals such that $\partial_{-} [\mathcal{A}.\mathrm{apx}(s) + \rhoinput(s)] \geq \partial_{+} [\mathcal{A}.\mathrm{apx}(s) + \rhoinput(s)]$ holds on every interval, which implies that local minima attained within them must have derivative~$0$.
    For this, we consider the predecessor borders $\mathcal{A}_0,\dotsc,\mathcal{A}_{r-1}$ of $\mathcal{B}$, where $\mathcal{A}_{r-1} := \mathcal{A}$ has rank $r-1$, and $\mathcal{A}_{i-1} := \mathcal{A}_i.\mathrm{opp}$ has rank $i-1$ for $i \in \{1,\dotsc,r-1\}$.
    They all share the same domain.

    We introduce partition points for our open intervals at the breakpoints of the functions $\mathcal{A}_0.\mathrm{apx}$ and $\rhoinput$ as well as, for all $i \in \{1,\dotsc,r-1\}$, at the breakpoints of $\rhoinput[\mathcal{A}_{i-1}]$ and $\rhooutput[\mathcal{A}_i]$ from \autoref{thm:happiness-lemma}.
    These are $O(r)$ functions and each has $O(k)$ pieces, so their breakpoints split $\mathrm{dom}(\mathcal{A})$ into $O(r \cdot k)$ open intervals.
    In addition to proving the above property, we determine the number of semistrict local minima with derivative $0$ to be at most $r$ on each interval, so we get $|S^*_\mathcal{A}| \in O(r^2 \cdot k)$.

    Let $I \subseteq \mathrm{dom}(\mathcal{A})$ be one of the open intervals.
    As $\mathcal{A}_0.\mathrm{apx}$ has a single piece on $I$ by construction, its left and right derivatives match on all of $I$.
    Given $i \in \{1,\dotsc,r-1\}$, we therefore inductively assume $\partial_{-} \mathcal{A}_{i-1}.\mathrm{apx}(s) \geq \partial_{+} \mathcal{A}_{i-1}.\mathrm{apx}(s)$ for all $s \in I$ and show that $\mathcal{A}_i.\mathrm{apx}$ also has this property.
    For $\mathcal{A}_{r-1} = \mathcal{A}$ it then transfers to $s \mapsto \mathcal{A}.\mathrm{apx}(s) + \rhoinput(s)$ since $\rhoinput$ has a single piece on $I$.

    As for the inductive step:
    \autoref{thm:parent-candidates} implies that $\mathcal{A}_i.\mathrm{apx}|_I$ is the lower envelope of
    \begin{gather*}
        \alpha_i \colon I \to \mathbb{R}_{\geq 0}
        \text{,} \quad \text{defined by} \quad
        \alpha_i(t) := \mathcal{A}_{i-1}.\mathrm{apx}(t) + \rhoinput[\mathcal{A}_{i-1}](t) + \rhooutput[\mathcal{A}_{i}](t)
        \text{,} \\[3pt]
        \text{as well as} \quad
        \alpha_{i,s_0} \colon \{ t \in I \mid t \geq s_0 \} \to \mathbb{R}_{\geq 0}
        \quad \text{ for all } s_0 \in S^*_{\mathcal{A}_{i-1}}
        \text{,} \\[3pt]
        \text{defined by} \quad
        \alpha_{i,s_0}(t) := \mathcal{A}_{i-1}.\mathrm{apx}(s_0) + \rhoinput[\mathcal{A}_{i-1}](s_0) + \rhooutput[\mathcal{A}_{i}](t)
        \text{.}
    \end{gather*}
    Due to $\rhoinput[\mathcal{A}_{i-1}]$ and $\rhooutput[\mathcal{A}_{i}]$ each having a single piece on $I$ by construction, it follows $\partial_- \alpha_i(t) \geq \partial_+ \alpha_i(t)$ for all $t \in I$ via the induction hypothesis.
    Meanwhile, for all $s_0 \in S^*_{\mathcal{A}_{i-1}}$ and all $t \in I$ with $t > s_0$ we even have $\partial_- \alpha_{i,s_0}(t) = \partial_+ \alpha_{i,s_0}(t)$ since the argument of $\mathcal{A}_{i-1}.\mathrm{apx}$ in $\alpha_{i,s_0}$ is fixed to $s_0$.

    The lower envelope retains this property:
    It only switches from one of the functions to another if the next function's right derivative is less than or equal to the previous one's.
    Else, the previous function would still yield smaller values beyond the breakpoint, see \autoref{fig:lower-envelope-and-derivatives}.
    Thus, $\partial_{-} \mathcal{A}_{i}.\mathrm{apx}(t) \geq \allowbreak \partial_{+} \mathcal{A}_{i}.\mathrm{apx}(t)$ holds for all $t \in I$, including the breakpoints introduced by the lower envelope.

    \begin{figure}[H]%
        \centering%
        \begin{subfigure}{0.45\linewidth}
            \centering
            \begin{tikzpicture}[x=2.15cm,y=2.15cm]
                \useasboundingbox (-0.275,1.25) rectangle (2.275,-0.4);
                \path[axes] (-0.275,1.25) |- (2.275,-0.4);

                \path[name path=a] (0,0) parabola[bend at start] (2,1);
                \path[name path=b] (0,2/3) parabola[bend at end] (2,0);
                \path[name intersections={of=a and b,name=ab}];

                \path[main line,UmiBlue] (0,0) parabola[bend at start] (ab-1);
                \path[main line,UmiVermillion] (ab-1) parabola[bend at end] (2,0);

                \begin{scope}
                    \clip (ab-1) rectangle (2,1);
                    \path[main line,UmiBlue,util dash] (ab-1) parabola bend (0,0) (2,1);
                \end{scope}

                \node[round point] at (0,0) {};
                \node[round point,fill=UmiYellow] at (ab-1) {};
                \node[round point] at (2,0) {};
                \node[round point] at (2,1) {};

                \begin{scope}[on background layer]
                    \clip (0,1.125) rectangle (2,-0.35);
                    \path[util grey]
                        let \p1=(ab-1),\n1={1/2*\x1}
                        in ($(ab-1) - 2*(1,\n1)$) -- ($(ab-1) + 2*(1,\n1)$);
                    \path[util grey]
                        let \p1=(ab-1),\p2=(1,0),\n1={1/3*(\x1-2*\x2)}
                        in ($(ab-1) - 2*(1,\n1)$) -- ($(ab-1) + 2*(1,\n1)$);
                \end{scope}
            \end{tikzpicture}
            \subcaption{Lower envelope switches functions, \\ which decreases the right derivative}
        \end{subfigure}%
        \hspace*{0.05\linewidth}%
        \begin{subfigure}{0.45\linewidth}
            \centering
            \begin{tikzpicture}[x=2.15cm,y=2.15cm]
                \useasboundingbox (-0.275,1.25) rectangle (2.275,-0.4);
                \path[axes] (-0.275,1.25) |- (2.275,-0.4);

                \path[name path=a] (0,0) parabola[bend at start] (2,1);
                \path[main line,UmiBlue,name path=b] (0,2/3) parabola[bend at end] (2,0);
                \path[name intersections={of=a and b,name=ab}];

                \begin{scope}
                    \clip (ab-1) rectangle (2,1);
                    \path[main line,UmiVermillion,util dash] (ab-1) parabola bend (0,0) (2,1);
                \end{scope}

                \node[round point] at (0,2/3) {};
                \node[round point,fill=white] at (ab-1) {};
                \node[round point] at (2,0) {};
                \node[round point] at (2,1) {};

                \begin{scope}[on background layer]
                    \clip (0,1.125) rectangle (2,-0.35);
                    \path[util grey]
                        let \p1=(ab-1),\n1={1/2*\x1}
                        in ($(ab-1) - 2*(1,\n1)$) -- ($(ab-1) + 2*(1,\n1)$);
                    \path[util grey]
                        let \p1=(ab-1),\p2=(1,0),\n1={1/3*(\x1-2*\x2)}
                        in ($(ab-1) - 2*(1,\n1)$) -- ($(ab-1) + 2*(1,\n1)$);
                \end{scope}
            \end{tikzpicture}
            \subcaption{Lower envelope does not switch functions, \\ which keeps equal left and right derivatives}
        \end{subfigure}%
        \caption{Behaviour of a continuous lower envelope depending on the functions' derivatives}%
        \label{fig:lower-envelope-and-derivatives}%
    \end{figure}

    It remains to bound the number of semistrict local minima with derivative $0$ occurring on $I$.
    For this, we count the distinct $(a,b)$-coefficient pairs over all the quadratic pieces $s \mapsto as^2 + bs + c$ of the function $\mathcal{A}.\mathrm{apx}$ on $I$.
    Such a pair determines the derivative of the pieces to be $s \mapsto 2as + b$, and thereby fixes the the sole point with derivative $0$ to be $-b/(2a)$ in case of $a \neq 0$.
    That means the number of distinct $(a,b)$-pairs gives an upper bound for the number of parabola vertices.

    Note that the case of $a = 0$ does not contribute additional semistrict local minima, as it~yields a derivative of constant value $b$.
    Local minima on $I$ have derivative $0$ as argued before, requiring also $b = 0$ and thus constant-valued pieces.
    Any semistrict minimum on those pieces can only be attained at a piece breakpoint.
    Such a point then has to coincide either with an endpoint of the open interval $I$, or with a point of derivative $0$ that is fixed by a neighbouring piece's pair.

    We use induction again to prove that $\mathcal{A}_i.\mathrm{apx}$ has at most $i+1$ distinct $(a,b)$-coefficient pairs on $I$ for all $i \in \{0,\dotsc,r-1\}$.
    The base case for $i=0$ is true since $\mathcal{A}_0.\mathrm{apx}$ has a single piece on~$I$.
    For $i > 0$ we inductively assume that $\mathcal{A}_{i-1}.\mathrm{apx}$ has at most $i$ such pairs, and we consider $\mathcal{A}_i.\mathrm{apx}$ in form of the lower envelope from above.
    Adding $\rhoinput[\mathcal{A}_{i-1}]$ and $\rhooutput[\mathcal{A}_{i}]$, which both have a single piece on $I$, to $\mathcal{A}_{i-1}.\mathrm{apx}$ preserves the number of $(a,b)$-pairs.
    Therefore, $\alpha_i$ has at most $i$ pairs on $I$.

    Moreover, all functions $\alpha_{i,s_0}$ for $s_0 \in S^*_{\mathcal{A}_{i-1}}$ share the single $(a,b)$-pair of $\rhooutput[\mathcal{A}_{i}]$ on $I$ because the terms with fixed arguments contribute to $c$ instead of $a$ and $b$.
    Together, we have that $\mathcal{A}_i.\mathrm{apx}$ has at most $i+1$ distinct $(a,b)$-coefficient pairs on $I$.
    For $\mathcal{A}_{r-1} = \mathcal{A}$ the number $r$ is again preserved by adding the single piece of $\rhoinput$ on $I$, so that we overall have $|S^*_\mathcal{A} \cap I| \leq r$ as claimed above.
    \qed
\end{proof}

Note that the ideas of counting $(a,b)$-coefficient pairs \cite[Lemma~14]{BuchiNW2025} as well as utilising left and right derivatives \cite[Lemma~17]{BuchiNW2025} originate from the analysis of the exact 1D algorithm.
There, the desired properties and bounds were shown separately on subsegments, which were associated with different types of optimal paths as in \autoref{fig:optimal-path-types}.
This required a distinction into a lot of cases, and it caused the analysis to become large and repetitive.
One difficulty was the lack of tools for taking a unified approach and exploiting common principles behind function propagations.

\autoref{thm:happiness-lemma} now provides such a tool.
Although the subsegments are still reflected by the above proof splitting the domain into intervals, our analysis is much shorter and more straightforward than that from \cite{BuchiNW2025}.
We believe that this constitutes an insightful contribution to the understanding of propagation complexity.
A general analysis of unhappy paths in 2D remains open for future~work.
The costs of these paths cannot be split into two univariate functions (cf.~\cite[Section~4.5]{BuchiNW2025}).

We next use \autoref{thm:inductive-bound-extended} to establish that under a polygonal norm $\mathcal{G}_K$ of complexity $k \in \mathbb{N}$~our algorithm can be implemented with running time $O(n^4 m \cdot k)$.
This extension is not difficult, yet there are two important aspects for achieving a linear factor in $k$ instead of a worse factor:

First, the fact that the procedure for opposing propagation from \autoref{thm:propagation-procedure} uses a single pass on the output border $\mathcal{B}$, which is an algorithmic contribution of ours.
Doing multiple passes as in \cite{BuchiNW2025,BuchiBSW2026} would lead to a higher running time (cf.~\cite[Proposition~18]{BuchiBSW2026}).
Second, evaluating $\mathcal{G}_K$ at a point $z \in \mathbb{R}^2$ requires finding the cone of $K$ that contains $z$.
This is possible in $O(\log(k))$ time via binary search.
A naive implementation would do this $O(k)$ times per cell to compute costs of optimal paths, which we avoid by keeping track of how the arguments of $\mathcal{G}_K$ switch cones.

\begin{theorem}
    \label{thm:running-time-polygonal-norm}
    Consider any norm $\mathcal{G}_K \colon \mathbb{R}^2 \to \mathbb{R}_{\geq 0}$ such that $K$ is a polygon of complexity $k \in \mathbb{N}$.
    Given two polygonal curves $P = \langle p_0,\dotsc,p_n \rangle$ and $Q = \langle q_0,\dotsc,q_m \rangle$ in $(\mathbb{R}^2, \mathcal{G}_K)$, where $n \geq m$, our $5$-approximation algorithm for CDTW of~$P,Q$ under $\mathcal{G}_K$ takes $O(n^4m \cdot k)$ time.
\end{theorem}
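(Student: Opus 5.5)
The plan is to follow the same skeleton as the proof of \autoref{thm:running-time-1-norm}, replacing every $O(1)$ that came from the four quadrants of $\tnorm[1]$ by $O(k)$ from the $k$ cones of $K$. Two ingredients are needed. The first is a $\mathcal{G}_K$-version of \autoref{thm:propagation-procedure}: the whole algorithm runs in $O(N + nm \cdot k)$ time. The second is a bound $N \in O(n^4 m \cdot k)$ obtained from \autoref{thm:inductive-bound-extended}, in the same way as \autoref{thm:rank-bounds} was obtained from \autoref{thm:inductive-bound}.

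For the counting, I would first note that rank-$0$ borders now carry $O(k)$ pieces. These come from the base case, the corner initialisation in lines~6--7, and the adjoining propagation. There are $O(nm)$ such borders, contributing $O(nmk)$ in total.

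For rank $r>0$, I would show by induction that $N_\mathcal{B} \leq c^2 k \cdot r^3$. \autoref{thm:inductive-bound-extended} gives
\[
N_\mathcal{B} \leq N_\mathcal{A} + c\,(c\, r^2 k + k) \leq c^2 k\, r_0^3 + 2c^2 k\, r^2 .
\]
Since $r^3 - r_0^3 \geq r^2$, the slack can be absorbed by enlarging the constant, for instance by proving $N_\mathcal{B} \leq 2c^2 k\, r^3$. The count of $O(nm/r)$ borders per rank and the maximum rank $n$ carry over verbatim from \autoref{thm:rank-bounds}, since they are purely combinatorial. Summing then gives
\[
N \in O\bigl(nmk + \textstyle\sum_{r=1}^{n} (nm/r)\, k\, r^3\bigr) = O(n^4 m\, k).
\]

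For the running time per cell, I would re-run the argument preceding \autoref{thm:propagation-procedure}.
\begin{itemize}
\item A suitable line $\ell$ must be computable in $O(k)$ time, or at least $O(k)$ amortised. I would invoke the framework of \cite{BuchiBSW2026} for polygonal norms, checking each cone.
\item The function $s \mapsto \mathcal{A}.\mathrm{apx}(s) + \mathrm{opt}_{\mathcal{A},\mathcal{B}}(s,t_0)$ now has $N_\mathcal{A} + O(k)$ pieces. Lines~8--11 therefore take $O(N_{C.\kernedS} + N_{C.\kernedW} + k)$ time.
\item The adjoining propagation merges two functions with $O(k)$ pieces each, in $O(k)$ time.
\item For the opposing propagation, the single-pass active-function sweep of \autoref{thm:propagation-procedure} touches at most $N_\mathcal{A} + O(|S^*_\mathcal{A}| + k)$ overlapping piece pairs, each handled in $O(1)$ time. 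This gives $O(N_\mathcal{A} + N_\mathcal{B} + k)$ time.
\end{itemize}
Summed over all cells, the total is $O(N + nmk)$.

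The main obstacle is making the constant-time-per-piece claims honest. Each time a bending point or endpoint moves as $s$ or $t$ varies, the argument $P(\cdot) - Q(\cdot)$ of $\mathcal{G}_K$ traces a line segment in $\mathbb{R}^2$. If the pieces of $\rhoinput$, $\rhooutput$ and the corner costs are built naively, each evaluation requires locating a cone by binary search, costing an extra $O(\log k)$ factor. My plan is to locate the starting cone once per segment in $O(\log k)$ time, then walk through adjacent cones in angular order as the segment crosses the cone boundaries. Each crossing costs $O(1)$, and there are $O(k)$ crossings per segment. The pieces of each of the $O(1)$ univariate functions per cell are therefore produced in $O(k)$ time.

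Integrating a linear function over a subsegment yields the quadratic coefficients directly, using the scalar-product formula for $\mathcal{G}_K$ on the cone $\Delta_{v,w}$. With these $O(k)$ piece lists precomputed per cell, the merging steps above all run in time linear in the number of pieces. This establishes $O(N + nmk) = O(n^4 m\, k)$.
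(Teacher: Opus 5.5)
Your proposal is correct and is essentially the paper's proof. The per-cell work is $O(k)$ plus the piece counts of the two input borders, using one $O(\log k)$ binary search per moving point followed by $O(1)$-time cone walking, which sums to $O(N + nmk)$; the bound $N \in O(n^4 m k)$ then follows from \autoref{thm:inductive-bound-extended} through the rank counting of \autoref{thm:rank-bounds}. Your minor deviations are all harmless: computing $\ell$ in $O(k)$ rather than $O(\log k)$ time, charging opposing propagation $O(N_\mathcal{A} + N_\mathcal{B} + k)$ instead of $O(N_\mathcal{A} + k)$, and closing the induction with $2c^2 k r^3$ instead of the paper's $c^2 r^3 k + c r k$.
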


\begin{proof}
    As explained in the proof of \autoref{thm:inductive-bound-extended}, under $\mathcal{G}_K$ we have that lines~1--2 and lines~6--7 of \autoref{alg:cdtw-apx} yield $O(k)$ quadratic pieces on every initialised border, so $O(nm \cdot k)$ pieces overall.
    These can be computed within the same time bound:
    By \autoref{def:path-and-cost}, the cost of a straight path is a definite integral of some piecewise linear function $\mathcal{G}_K \circ \psi$, where $\psi \colon \mathbb{R} \to \mathbb{R}^2$ is an affine map.
    To~evaluate this integrand at any initial $t \in \mathbb{R}$, we can find two adjacent vertices $v,w \in \mathbb{R}^2$ of $K$ via binary search such that $\psi(t)$ is contained in the cone $\smash{\Delta_{v,w}} := \{ \lambda v + \mu w \mid \lambda,\mu \geq 0 \}$.

    Through integration, the linear piece provided by the scalar product $t \mapsto \frac{(w_2 - v_2, v_1 - w_1)}{v_1 w_2 - w_1 v_2} \cdot \psi(t)$ gives an initial quadratic piece.
    A breakpoint of $\mathcal{G}_K \circ \psi$ occurs when $\psi$ switches from $\Delta_{v,w}$ to a different cone of $K$.
    We can determine that cone in $O(1)$ time by intersecting the line traced by the affine map $\psi$ with the lines that bound $\Delta_{v,w}$.
    Thus, we perform binary search once for the initial piece in $O(\log(k))$ time, and compute the other $O(k)$ pieces in $O(1)$ time each.

    For every cell $C$ we further compute an $\ell$ as in \autoref{thm:optimal-paths}, which gives optimal paths within~$C$.
    This can also be done via binary search on the vertices of $K$ (see \cite[Section~2]{BuchiBSW2026}), taking $O(\log(k))$ time per cell $C$.
    The next step is computing the parent candidates for the north-east corner of~$C$.
    There exist $|S^*_\mathcal{A}| \in O(N_\mathcal{A} + k)$ candidates on each input border $\mathcal{A}$ because the respective function from lines~8--9 of \autoref{alg:cdtw-apx} has $N_\mathcal{A} + O(k)$ quadratic pieces under $\mathcal{G}_K$.
    To construct the pieces, we again use the above technique for determining cones of $K$.
    Here, we apply it to the non-fixed starting points of the optimal paths from \autoref{thm:optimal-paths}, and to their $O(1)$ bending points.

    It follows that computing $S^*_\mathcal{A}$ takes $O(N_\mathcal{A} + k)$ time.
    By checking every parent candidate, we thus obtain the required values from lines~10--11 in running time $O(N_{C.\kernedS} + N_{C.\kernedW} + k)$ per cell $C$.
    We claim that this time bound also applies to the subroutines when using the above technique for determining cones of $K$.
    In $\mathcal{B}.\propagatefromadj$ the function $\mathrm{apx}_{s^*}$ has $O(k)$ pieces under $\mathcal{G}_K$ due to the fixed starting point.
    Updating $\mathcal{B}.\mathrm{apx}$ through taking the lower envelope of its already existing $O(k)$ pieces with those of $\mathrm{apx}_{s^*}$ takes $O(k)$ time.
    Note that the two types of pieces are separated in the lower envelope, cf.\ the second property from \autoref{thm:apx-function-properties}.

    Similarly, we also have such a separation for opposing propagation, as explained in the proof of \autoref{thm:inductive-bound-extended}.
    We can therefore easily update $\mathcal{B}.\mathrm{apx}$ during $\mathcal{B}.\propagatefromopp$, after we have computed $\mathrm{apx}_{\leq s^*}$.
    For the latter we use the procedure from the proof of \autoref{thm:propagation-procedure}.
    Under $\mathcal{G}_K$ the restrictions $\tau_1|_{S_1},\dotsc,\tau_\lambda|_{S_\lambda}$ together have at most $N_\mathcal{A} + |S^*_\mathcal{A}| + O(k)$ pieces, where $\mathcal{A} := \mathcal{B}.\mathrm{opp}$, and we create at most $|S^*_\mathcal{A}| + O(k)$ pieces of $\tau_\mu|_{S_\kappa}$ with $\mu < \kappa$ over all active $\tau_\mu$, so the number of overlapping pairs of pieces is at most $N_\mathcal{A} + 2 \cdot |S^*_\mathcal{A}| + c \cdot k \in O(N_\mathcal{A} + k)$.

    Hence, we indeed spend $O(N_{C.\kernedS} + N_{C.\kernedW} + k)$ time per cell $C$, and summing this over all cells yields $O(N + nm \cdot k)$.
    We use \autoref{thm:inductive-bound-extended} to bound $N$.
    Proceeding analogously to \autoref{thm:rank-bounds} gives $N_\mathcal{B} \leq c^2 \cdot r^3 \cdot k + c \cdot r \cdot k \in O(r^3 \cdot k)$ for each border $\mathcal{B}$ of any rank $r > 0$, while borders of rank~$0$ have $O(k)$ pieces under $\mathcal{G}_K$.
    We then obtain $N \in O(n^4m \cdot k)$ analogously to \autoref{thm:running-time-1-norm}, which dominates the term $O(nm \cdot k)$.
    Overall, the running time is thus in $O(n^4m \cdot k)$.
    \qed
\end{proof}

To approximate CDTW in 2D under the $2$-norm, we may use a regular polygon inscribed in the Euclidean unit disk.
This was done in \cite[Corollary~17]{BuchiBSW2026}, and a factor of $1+\varepsilon$ requires polygon complexity $k \in O(\varepsilon^{-1/2})$.
In fact, results from convex geometry imply that we can approximate any fixed norm on $\mathbb{R}^2$ by some polygonal norm of the same complexity.
More generally, any fixed norm on $\mathbb{R}^d$ is $(1+\varepsilon)$-approximated by some polyhedral norm of complexity $O(\varepsilon^{(1-d)/2})$.

In the following, we outline this polyhedral approximation and its application to our results.
Its first ingredient is the geometric characterisation of norms, which we have introduced before:
A function is a norm on $\mathbb{R}^d$ if and only if it is the gauge $\mathcal{G}_K$ of an absorbing, balanced, bounded, convex and closed set $K$.
The first two properties are defined as follows (see \cite[p.~11]{SchaeW1999}).
\begin{enumerate}
    \item
    A set $K \subseteq \mathbb{R}^d$ is called \emph{absorbing} (or \emph{radial}) if for every point $z \in \mathbb{R}^d$ there exists some $\lambda_0 \geq 0$ such that $z \in \lambda K$ holds whenever $|\lambda| \geq \lambda_0$.
    This ensures that $\mathcal{G}_K$ indeed assigns every $z \in \mathbb{R}^d$ a real number $\mathcal{G}_K(z) = \inf \{ \lambda \geq 0 \mid z \in \lambda K \}$.
    It further implies the inclusion $\mathbf{0} \in K$.

    \item
    A set $K \subseteq \mathbb{R}^d$ is called \emph{balanced} (or \emph{circled}) in case of $\lambda K \subseteq K$ for all $\lambda \in [-1,1]$.
    This yields absolute homogeneity of $\mathcal{G}_K$, i.e.\ that $\mathcal{G}_K(\lambda z) = |\lambda| \mathcal{G}_K(z)$ holds for all $z \in \mathbb{R}^d$ and all $\lambda \in \mathbb{R}$.
    If $K$ is a polyhedron, it further implies that $K$ must have an even number of vertices.
\end{enumerate}

The other three properties are defined as usual, and we do not require those definitions~here.
Every absorbing, convex and closed set in $\mathbb{R}^d$ contains the origin $\mathbf{0}$ in its interior.
When a~bounded, convex and closed set in $\mathbb{R}^d$ has a non-empty interior, then it is also called a \emph{convex body}.
Moreover, gauges allow us to describe any norm $\tnorm$ on $\mathbb{R}^d$ through its unit sublevel set:
We have that $\tnorm$ is the gauge $\mathcal{G}_M$ of the balanced convex body $M := \{ z \in \mathbb{R}^d \mid \| z \| \leq 1 \}$.
(See \cite[p.~40]{SchaeW1999}.)

The second ingredient is the approximation of convex bodies by (convex) polyhedra, for~which there is a vast amount of literature, see \cite{Brons2008} for a survey.
Note that a $d$-dimensional polyhedron $K$ generally has different numbers of vertices and facets when $d > 2$, where the facets correspond to the cones on which the gauge $\mathcal{G}_K$ is linear.
The specific result that we need holds independent of whether one desires a bound on the number of vertices or facets, though this may lead to~distinct polyhedra.
Because we do not consider the equipped norm to be an input for CDTW, we suppress all constants in the size and dimension of the given convex body by treating it as fixed.

\begin{lemma}[{\cite[Section~4.4]{Brons2008}}]
    \label{thm:convex-apx}
    Fix any balanced convex body $M \subseteq \mathbb{R}^d$.
    For every $\varepsilon > 0$ there is a convex polyhedron $K \subseteq \mathbb{R}^d$ with $O(\varepsilon^{(1-d)/2})$ facets (or vertices) such that $M \subseteq K \subseteq (1+\varepsilon) M$.
\end{lemma}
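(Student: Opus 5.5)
The plan is to prove the statement by a covering (``$\varepsilon$-net'') construction on the boundary of $M$. Since $M$ is a fixed balanced convex body, it contains a ball $rB$ and is contained in a ball $RB$ for constants $0<r\le R$. These allow us to convert between the relative error $\varepsilon$ and Euclidean distances at the cost of constants only. We aim for the facet version first. The vertex version then follows by polarity: $M\subseteq K\subseteq(1+\varepsilon)M$ is equivalent to $(1+\varepsilon)^{-1}M^\circ\subseteq K^\circ\subseteq M^\circ$, and facets of $K$ correspond to vertices of $K^\circ$. After rescaling, this is exactly the vertex statement for the fixed balanced convex body $M^\circ$.

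For the facet version, we pick a finite set of boundary points $z_1,\dots,z_N\in\partial M$, choose a supporting halfspace $H_i\supseteq M$ at each $z_i$, and set $K:=\bigcap_i H_i$. Then $M\subseteq K$ holds automatically. To keep $K$ balanced we take the net symmetric, $z_i$ together with $-z_i$ and $H_i$ together with $-H_i$. This matters later, because the gauge $\mathcal{G}_K$ must again be a norm.

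The key step is to show that $K\subseteq(1+\varepsilon)M$ whenever the net is fine enough, and that a net with $N\in O(\varepsilon^{(1-d)/2})$ points suffices. The naive approach takes a Euclidean $\delta$-net of $\partial M$ with $\delta\approx\varepsilon$. That yields $O(\varepsilon^{1-d})$ points, which is too many. The square-root improvement is the main obstacle. It comes from a second-order effect: near a smooth point, the supporting hyperplane at distance $\delta$ deviates from the body by only about $\delta^2$, so $\delta\approx\sqrt{\varepsilon}$ suffices.

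For general (possibly non-smooth) $M$ I would not try to prove this from scratch. Instead I would cite the classical result the lemma attributes to \cite[Section~4.4]{Brons2008} (Bronshteyn--Ivanov / Dudley). That argument works as follows. First take a Euclidean net $\{w_j\}$ of the sphere $\partial(M+B)$ with spacing $\sqrt{\varepsilon}$. Then let each $w_j$ select its nearest point $z_j\in\partial M$, and use the outer normal $w_j-z_j$ for the supporting halfspace. A direct estimate then shows that every point of $K$ lies within Hausdorff distance $O(\varepsilon)$ of $M$, with constants depending only on $R$. Finally, because $rB\subseteq M$, Hausdorff distance $O(\varepsilon)$ converts to $K\subseteq(1+O(\varepsilon))M$. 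Rescaling $\varepsilon$ by a constant yields the claim.

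In dimension $d=2$, which is the case used afterwards, an elementary alternative is to pick boundary points so that consecutive supporting lines meet at a point whose gauge exceeds $1$ by at most $\varepsilon$. A curvature/turning-angle argument then bounds the count by $O(\varepsilon^{-1/2})$.
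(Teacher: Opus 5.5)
Your proposal is correct and matches what the paper does. The paper gives no proof of its own; it cites Bronstein's survey for the classical Dudley/Bronshteyn--Ivanov approximation result, which is exactly the result you fall back on. Your surrounding steps are standard and sound: converting the $O(\varepsilon)$ Hausdorff error into $K \subseteq (1+O(\varepsilon))M$ via $rB \subseteq M$, and obtaining the vertex version by polarity using $(\lambda M)^\circ = \lambda^{-1} M^\circ$. Symmetrising the net is harmless but unnecessary here, since the paper restores balancedness afterwards via the disked core/hull in \autoref{thm:norm-apx}.
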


For e.g.\ the sublevel sets of $p$-norms, which are hyperellipsoids if $1 < p < \infty$, it is not difficult to determine these approximating polyhedra.
Constructions for the general case may involve very large constants, see \cite[Section~8]{Brons2008} for algorithmic considerations.
A simple approach, albeit with a higher worst-case asymptotic complexity, is to choose roughly $\varepsilon^{-d} \log(\varepsilon^{-d})$ points uniformly at random from the boundary of $(1+\varepsilon) M$.
Then their convex hull is a $(1+\varepsilon)$-approximation of $M$ with high probability \cite{Naszo2019}.
Also, note that the following proposition might be applicable to other results that work for polyhedral norms, such as the subquadratic DTW algorithm from \cite{GoldS2018}.

\begin{proposition}
    \label{thm:norm-apx}
    Fix any norm $\tnorm$ on $\mathbb{R}^d$.
    For every $\varepsilon > 0$ there is another norm $\mathcal{G}_K$ on $\mathbb{R}^d$, where $K \subseteq \mathbb{R}^d$ is a balanced convex polyhedron with $O(\varepsilon^{(1-d)/2})$ facets (or vertices), such that
    \[
        \| z \| \leq \mathcal{G}_K(z) \leq (1 + \varepsilon) \| z \|
        \quad \text{holds for all } z \in \mathbb{R}^d \text{.}
    \]
\end{proposition}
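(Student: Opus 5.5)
The plan is to apply \autoref{thm:convex-apx} to the unit ball $M := \{ z \in \mathbb{R}^d \mid \| z \| \leq 1 \}$ and then translate the set inclusions into inequalities between gauges. Since $\tnorm$ is a norm, $M$ is a balanced convex body and $\tnorm = \mathcal{G}_M$. \autoref{thm:convex-apx} gives a convex polyhedron $K'$ with $O(\varepsilon^{(1-d)/2})$ facets (or vertices) and $M \subseteq K' \subseteq (1+\varepsilon) M$. However, $K'$ need not be balanced, so $\mathcal{G}_{K'}$ need not be a norm.

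The first step is to repair balancedness. I would set $K := K' \cap (-K')$. Because $M$ is balanced, $M = -M \subseteq -K'$, so $M \subseteq K$. Also $K \subseteq K' \subseteq (1+\varepsilon) M$. The set $K$ is convex, closed, bounded and symmetric under $z \mapsto -z$; together with convexity this gives $\lambda K \subseteq K$ for all $\lambda \in [-1,1]$, since $\mathbf{0} \in K$. It is absorbing because it contains $M$, which contains a neighbourhood of $\mathbf{0}$. Hence $\mathcal{G}_K$ is a norm by the gauge characterisation recalled above.

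For complexity, in the facet version $K$ is cut out by the facet halfspaces of $K'$ and their reflections, so it has at most twice as many facets, which is still $O(\varepsilon^{(1-d)/2})$. In the vertex version the intersection could in principle have many more vertices. There I would instead take $K := \mathrm{conv}(K' \cup (-K'))$: it is balanced and convex, has at most twice the vertices, and still satisfies $M \subseteq K \subseteq (1+\varepsilon) M$ because $(1+\varepsilon) M$ is convex and balanced. This choice between intersection and convex hull, depending on which count is wanted, is the only delicate point I expect.

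The final step is the monotonicity of gauges: if $A \subseteq B$, then $\mathcal{G}_B \leq \mathcal{G}_A$, and $\mathcal{G}_{cA} = \mathcal{G}_A / c$ for $c > 0$, both directly from the definition $\inf \{ \lambda \geq 0 \mid z \in \lambda K \}$. From $M \subseteq K \subseteq (1+\varepsilon) M$ we get
\[
\| z \|/(1+\varepsilon) \leq \mathcal{G}_K(z) \leq \| z \| .
\]
Rescaling to $(1+\varepsilon) K$, that is, replacing $\mathcal{G}_K$ by $(1+\varepsilon)\mathcal{G}_K$, yields $\| z \| \leq \mathcal{G}_{K}(z) \leq (1+\varepsilon)\| z \|$ as claimed. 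The rescaling changes neither balancedness nor the facet/vertex count.
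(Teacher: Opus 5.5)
Your proof is correct and follows the paper's argument. The paper also applies \autoref{thm:convex-apx} to the unit ball, passes to the disked core (for the facet count) or the disked hull (for the vertex count) of the resulting polyhedron, which are exactly your $K' \cap (-K')$ and $\mathrm{conv}(K' \cup (-K'))$, and rescales by $(1+\varepsilon)^{-1}$; the only difference is that it rescales before symmetrising rather than after.

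There is one slip to fix. By your own rule $\mathcal{G}_{cA} = \mathcal{G}_A / c$, the gauge $(1+\varepsilon)\mathcal{G}_K$ is the gauge of $(1+\varepsilon)^{-1}K$, not of $(1+\varepsilon)K$.
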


\begin{proof}
    Since $\tnorm$ is a norm, the set $M := \{ z \in \mathbb{R}^d \mid \| z \| \leq 1 \}$ is a balanced convex body in $\mathbb{R}^d$.
    We use \autoref{thm:convex-apx} to get a convex polyhedron $K_0$ with a bounded number of facets (or vertices) and $M \subseteq K_0 \subseteq (1+\varepsilon) M$.
    Let $K$ be the \emph{disked core (or hull)} of $(1+\varepsilon)^{-1}K_0$, which is its largest balanced and convex subset (or smallest such superset).
    It follows $(1+\varepsilon)^{-1} M \subseteq K \subseteq M$,~so
    \[
        \| z \|
        = \mathcal{G}_{M}(z)
        \leq \mathcal{G}_{K}(z)
        \leq \mathcal{G}_{(1+\varepsilon)^{-1} M}(z)
        = \mathcal{G}_M((1 + \varepsilon)z)
        = (1 + \varepsilon) \mathcal{G}_M(z)
        = (1 + \varepsilon) \| z \|
    \]
    holds for all $z \in \mathbb{R}^d$ by definition of $M$.
    The gauge $\mathcal{G}_K$ is a norm, as $K$ is a balanced convex~body.
    Finally, the disked core (or hull) $K$ can have at most twice as many facets (or vertices) as $K_0$.
    \qed
\end{proof}

\begin{theorem}
    \label{thm:running-time-any-norm}
    Fix any norm $\tnorm$ on $\mathbb{R}^2$, and let $\varepsilon > 0$ be arbitrary.
    Given two polygonal curves $P = \langle p_0,\dotsc,p_n \rangle$ and $Q = \langle q_0,\dotsc,q_m \rangle$ in $(\mathbb{R}^2,\tnorm)$, where $n \geq m$, we can compute a factor-$(5+\varepsilon)$ approximation for CDTW of $P,Q$ under $\tnorm$ in running time $O(n^4m / \varepsilon^{1/2})$.
\end{theorem}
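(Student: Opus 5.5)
The plan is to combine \autoref{thm:norm-apx} with \autoref{thm:running-time-polygonal-norm} and \autoref{thm:apx-factor-algo}. First fix $\varepsilon' := \varepsilon/5$, or more carefully a constant multiple of $\varepsilon$ chosen at the end. Apply \autoref{thm:norm-apx} with $d=2$ to obtain a balanced convex polygon $K$ with $k \in O(\varepsilon'^{-1/2}) = O(\varepsilon^{-1/2})$ vertices such that $\|z\| \leq \mathcal{G}_K(z) \leq (1+\varepsilon')\|z\|$ for all $z \in \mathbb{R}^2$. Since $\tnorm$ is fixed, the constants hidden in this bound do not depend on the input. We then run our algorithm on $P,Q$ under $\mathcal{G}_K$. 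By \autoref{thm:running-time-polygonal-norm} this takes $O(n^4 m \cdot k) = O(n^4 m/\varepsilon^{1/2})$ time, and by \autoref{thm:apx-factor-algo} the returned value $h$ satisfies $\mathrm{cdtw}_{\mathcal{G}_K}(P,Q) \leq h \leq 5\,\mathrm{cdtw}_{\mathcal{G}_K}(P,Q)$.

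The remaining step is to relate $\mathrm{cdtw}_{\mathcal{G}_K}$ to $\mathrm{cdtw}_{\|\cdot\|}$. In \autoref{def:cdtw} the norm enters twice: in the distance term $\|f(t)-g(t)\|$ and in the speeds $\|f'(t)\|$, $\|g'(t)\|$. The set of admissible parametrisation pairs $(f,g)$ does not depend on the norm. So for each fixed pair $(f,g)$, pointwise comparison of the integrands gives
\[
\int_0^1 \|f-g\|\,(\|f'\|+\|g'\|)\,\mathrm{d}t \leq \int_0^1 \mathcal{G}_K(f-g)\,(\mathcal{G}_K(f')+\mathcal{G}_K(g'))\,\mathrm{d}t \leq (1+\varepsilon')^2 \int_0^1 \|f-g\|\,(\|f'\|+\|g'\|)\,\mathrm{d}t .
\]
Taking infima yields $\mathrm{cdtw}_{\|\cdot\|} \leq \mathrm{cdtw}_{\mathcal{G}_K} \leq (1+\varepsilon')^2\,\mathrm{cdtw}_{\|\cdot\|}$.

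Hence $\mathrm{cdtw}_{\|\cdot\|}(P,Q) \leq h \leq 5(1+\varepsilon')^2\,\mathrm{cdtw}_{\|\cdot\|}(P,Q)$. Choosing $\varepsilon' := \min\{1, \varepsilon/15\}$ gives $5(1+\varepsilon')^2 \leq 5(1+3\varepsilon') \leq 5+\varepsilon$, since $\varepsilon'^2 \leq \varepsilon'$. The complexity bound $k \in O(\varepsilon^{-1/2})$ is unaffected up to constants. For $\varepsilon$ large, $k$ is bounded by a constant, so the running time remains $O(n^4 m/\varepsilon^{1/2})$ in the intended regime of small $\varepsilon$, and is in any case $O(n^4m)$ otherwise.

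The main point requiring care is the step comparing the two CDTW values. We must ensure that the infimum ranges over the same parametrisation class under both norms, which holds because $\Pi_{[0,1]}$ is defined purely via piecewise $C^1$ monotone parametrisations. We must also account for the squared factor, since the speeds are measured in the equipped norm as well. A minor secondary issue is that computing $K$ itself is assumed to take constant time because $\tnorm$ is fixed; we state this explicitly, as in the discussion preceding \autoref{thm:norm-apx}.
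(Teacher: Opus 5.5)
Your proposal is correct and matches the paper's proof step for step. The paper also takes $\varepsilon_0 := \min\{\varepsilon/15, 1\}$, obtains $\mathcal{G}_K$ from the polygonal norm approximation, and derives $\mathrm{cdtw}_{\|\cdot\|}(P,Q) \leq \mathrm{cdtw}_{\mathcal{G}_K}(P,Q) \leq (1+\varepsilon_0)^2\,\mathrm{cdtw}_{\|\cdot\|}(P,Q) \leq (1+\varepsilon/5)\,\mathrm{cdtw}_{\|\cdot\|}(P,Q)$ directly from the CDTW definition, before invoking the $O(n^4 m \cdot k)$ bound for polygonal norms.

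One side remark is inaccurate: $K$ has $\Theta(\varepsilon^{-1/2})$ vertices, so it cannot be computed in constant time. Its construction is better described as preprocessing, which the paper also leaves implicit.
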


\begin{proof}
    We use \autoref{thm:norm-apx} for $\varepsilon_0 := \min \{ \frac{\varepsilon}{15}, 1 \}$ to get a norm $\mathcal{G}_K$.
    By \autoref{def:cdtw}, it is
    \[
        \mathrm{cdtw}_{\| \cdot \|}(P,Q)
        \leq \mathrm{cdtw}_{\mathcal{G}_{K}}(P,Q)
        \leq (1 + \varepsilon_0)^2 \cdot \mathrm{cdtw}_{\| \cdot \|}(P,Q)
        \leq (1 + \tfrac{\varepsilon}{5}) \cdot \mathrm{cdtw}_{\| \cdot \|}(P,Q)
        \text{.}
    \]
    Applying \autoref{thm:running-time-polygonal-norm} to $\mathcal{G}_K$ therefore returns a $(5+\varepsilon)$-approximation in $O(n^4 m / \varepsilon^{1/2})$ time.
    \qed
\end{proof}

\section{Conclusion}

We established the first constant-factor approximation algorithm for CDTW in 2D that has a polynomial running time.
It relies on a new integration-based building block and uses a tailored propagation scheme.
Another technical contribution of ours is a unified approach to the running time analysis, thanks to novel insights into core principles behind propagation complexity.

It is still open whether the additional propagation patterns occurring in exact 2D algorithms, which we circumvented, have polynomial complexity.
We also do not know if the approximation factor of $5$ is tight, see \autoref{rem:tightness} and \autoref{subsec:apx-factor}.
To evaluate efficiency and solution quality in practice, an experimental comparison of an exact algorithm and ours would be useful.

Our algorithm works for all polygonal norms.
We demonstrated that this allows to approximate CDTW in 2D under any desired norm, making our results applicable in a very broad 2D setting.
Recently, there has been an increased interest in solving not only geometric but also combinatorial optimisation problems under more general norm objective functions, see for example \cite{ChakrS2019,ChenLRZ2025}.

One advantage of measuring the similarity of curves under a norm of choice is the possibility to assign the curve dimensions, which may carry different types of data, suitable weightings~\cite{GutscS2022}.
Thus, higher dimensions are a natural direction for future work on CDTW.
Known fundamentals do not suffice there since \autoref{thm:optimal-paths} does not hold in 3D and beyond, see \autoappref{app:3d-example}.

    \bibliography{literature.bib}

\begin{thebibliography}{10}

\bibitem{BrakaPSW2005}
Sotiris Brakatsoulas, Dieter Pfoser, Randall Salas, and Carola Wenk.
\newblock On map-matching vehicle tracking data.
\newblock In {\em Proceedings of the 31st International Conference on Very
  Large Data Bases}, pages 853--864. VLDB Endowment, 2005.
\newblock URL: \url{https://dl.acm.org/doi/10.5555/1083592.1083691}.

\bibitem{Brank2022}
Milutin Brankovic.
\newblock {\em Graphs and Trajectories in Practical Geometric Problems}.
\newblock PhD thesis, University of Sydney, 2022.

\bibitem{BrankBKNPW2020}
Milutin Brankovic, Kevin Buchin, Koen Klaren, André Nusser, Aleksandr Popov,
  and Sampson Wong.
\newblock $(k,l)$-medians clustering of trajectories using {Continuous Dynamic
  Time Warping}.
\newblock In {\em Proceedings of the 28th International Conference on Advances
  in Geographic Information Systems}, pages 99--110. ACM, 2020.
\newblock \href {https://doi.org/10.1145/3397536.3422245}
  {\path{doi:10.1145/3397536.3422245}}.

\bibitem{Brons2008}
Efim~M. Bronstein.
\newblock Approximation of convex sets by polytopes.
\newblock {\em Journal of Mathematical Sciences}, 153(6):727--762, 2008.
\newblock \href {https://doi.org/10.1007/S10958-008-9144-X}
  {\path{doi:10.1007/S10958-008-9144-X}}.

\bibitem{BuchiBSW2026}
Kevin Buchin, Maike Buchin, Jan~Erik Swiadek, and Sampson Wong.
\newblock Fundamentals of computing {Continuous Dynamic Time Warping} in {2D}
  under different norms.
\newblock In {\em 20th International Conference and Workshops on Algorithms and
  Computation}, pages 467--482. Springer, 2026.
\newblock \href {https://arxiv.org/abs/2511.20420} {\path{arXiv:2511.20420}},
  \href {https://doi.org/10.1007/978-981-95-7127-7_31}
  {\path{doi:10.1007/978-981-95-7127-7_31}}.

\bibitem{BuchiBW2009}
Kevin Buchin, Maike Buchin, and Yusu Wang.
\newblock Exact algorithms for partial curve matching via the {Fréchet}
  distance.
\newblock In {\em Proceedings of the Twentieth Annual ACM-SIAM Symposium on
  Discrete Algorithms}, pages 645--654. SIAM, 2009.
\newblock \href {https://doi.org/10.1137/1.9781611973068.71}
  {\path{doi:10.1137/1.9781611973068.71}}.

\bibitem{BuchiNW2025}
Kevin Buchin, André Nusser, and Sampson Wong.
\newblock Computing {Continuous Dynamic Time Warping} of time series in
  polynomial time.
\newblock {\em Journal of Computational Geometry}, 16(1):765--799, 2025.
\newblock \href {https://doi.org/10.20382/JOCG.V16I1A21}
  {\path{doi:10.20382/JOCG.V16I1A21}}.

\bibitem{Buchi2007}
Maike Buchin.
\newblock {\em On the Computability of the {Fréchet} Distance between
  Triangulated Surfaces}.
\newblock PhD thesis, Freie Universität Berlin, 2007.
\newblock URL: \url{https://refubium.fu-berlin.de/handle/fub188/1909}.

\bibitem{ChakrS2019}
Deeparnab Chakrabarty and Chaitanya Swamy.
\newblock Approximation algorithms for minimum norm and ordered optimization
  problems.
\newblock In {\em Proceedings of the 51st Annual ACM SIGACT Symposium on Theory
  of Computing}, pages 126--137. ACM, 2019.
\newblock \href {https://doi.org/10.1145/3313276.3316322}
  {\path{doi:10.1145/3313276.3316322}}.

\bibitem{ChenLRZ2025}
Kuowen Chen, Jian Li, Yuval Rabani, and Yiran Zhang.
\newblock New results on a general class of minimum norm optimization problems.
\newblock In {\em 52nd International Colloquium on Automata, Languages, and
  Programming}, pages 50:1--50:20. Schloss Dagstuhl -- Leibniz-Zentrum für
  Informatik, 2025.
\newblock \href {https://doi.org/10.4230/LIPICS.ICALP.2025.50}
  {\path{doi:10.4230/LIPICS.ICALP.2025.50}}.

\bibitem{EfratFV2007}
Alon Efrat, Quanfu Fan, and Suresh Venkatasubramanian.
\newblock Curve matching, time warping, and light fields: New algorithms for
  computing similarity between curves.
\newblock {\em Journal of Mathematical Imaging and Vision}, 27(3):203--216,
  2007.
\newblock \href {https://doi.org/10.1007/S10851-006-0647-0}
  {\path{doi:10.1007/S10851-006-0647-0}}.

\bibitem{GoldS2018}
Omer Gold and Micha Sharir.
\newblock {Dynamic Time Warping} and {Geometric Edit Distance}: Breaking the
  quadratic barrier.
\newblock {\em ACM Transactions on Algorithms}, 14(4):50:1--50:17, 2018.
\newblock \href {https://doi.org/10.1145/3230734} {\path{doi:10.1145/3230734}}.

\bibitem{GutscS2022}
Theodor Gutschlag and Sabine Storandt.
\newblock On the generalized {Fréchet} distance and its applications.
\newblock In {\em Proceedings of the 30th International Conference on Advances
  in Geographic Information Systems}, pages 35:1--35:10. ACM, 2022.
\newblock \href {https://doi.org/10.1145/3557915.3560970}
  {\path{doi:10.1145/3557915.3560970}}.

\bibitem{Har-PRR2025}
Sariel Har-Peled, Benjamin Raichel, and Eliot~W. Robson.
\newblock The {Fréchet} distance unleashed: Approximating a dog with a frog.
\newblock In {\em 41st International Symposium on Computational Geometry},
  pages 54:1--54:13. Schloss Dagstuhl -- Leibniz-Zentrum für Informatik, 2025.
\newblock \href {https://arxiv.org/abs/2407.03101} {\path{arXiv:2407.03101}},
  \href {https://doi.org/10.4230/LIPICS.SOCG.2025.54}
  {\path{doi:10.4230/LIPICS.SOCG.2025.54}}.

\bibitem{Klare2020}
Koen Klaren.
\newblock {Continuous Dynamic Time Warping} for clustering curves.
\newblock Master's thesis, Eindhoven University of Technology, 2020.
\newblock URL:
  \url{https://research.tue.nl/en/studentTheses/continuous-dynamic-time-warping-for-clustering-curves}.

\bibitem{MahesSS2018}
Anil Maheshwari, Jörg-Rüdiger Sack, and Christian Scheffer.
\newblock Approximating the integral {Fréchet} distance.
\newblock {\em Computational Geometry}, 70--71:13--30, 2018.
\newblock \href {https://doi.org/10.1016/J.COMGEO.2018.01.001}
  {\path{doi:10.1016/J.COMGEO.2018.01.001}}.

\bibitem{MunicP1999}
Mario~E. Munich and Pietro Perona.
\newblock {Continuous Dynamic Time Warping} for translation-invariant curve
  alignment with applications to signature verification.
\newblock In {\em Proceedings of the Seventh IEEE International Conference on
  Computer Vision}, pages 108--115. IEEE, 1999.
\newblock \href {https://doi.org/10.1109/ICCV.1999.791205}
  {\path{doi:10.1109/ICCV.1999.791205}}.

\bibitem{Naszo2019}
Márton Naszódi.
\newblock Approximating a convex body by a polytope using the {Epsilon-Net
  Theorem}.
\newblock {\em Discrete \& Computational Geometry}, 61(3):686--693, 2019.
\newblock \href {https://doi.org/10.1007/S00454-018-9977-0}
  {\path{doi:10.1007/S00454-018-9977-0}}.

\bibitem{SchaeW1999}
Helmut~H. Schaefer and Manfred~P. Wolff.
\newblock {\em Topological Vector Spaces}.
\newblock Springer, 2nd edition, 1999.
\newblock \href {https://doi.org/10.1007/978-1-4612-1468-7}
  {\path{doi:10.1007/978-1-4612-1468-7}}.

\bibitem{SerraB1994}
Bruno Serra and Marc Berthod.
\newblock Subpixel contour matching using continuous dynamic programming.
\newblock In {\em 1994 Proceedings of IEEE Conference on Computer Vision and
  Pattern Recognition}, pages 202--207. IEEE, 1994.
\newblock \href {https://doi.org/10.1109/CVPR.1994.323830}
  {\path{doi:10.1109/CVPR.1994.323830}}.

\bibitem{SerraB1995}
Bruno Serra and Marc Berthod.
\newblock Optimal subpixel matching of contour chains and segments.
\newblock In {\em Proceedings of IEEE International Conference on Computer
  Vision}, pages 402--407. IEEE, 1995.
\newblock \href {https://doi.org/10.1109/ICCV.1995.466911}
  {\path{doi:10.1109/ICCV.1995.466911}}.

\bibitem{SuLZZZ2020}
Han Su, Shuncheng Liu, Bolong Zheng, Xiaofang Zhou, and Kai Zheng.
\newblock A survey of trajectory distance measures and performance evaluation.
\newblock {\em The VLDB Journal}, 29(1):3--32, 2020.
\newblock \href {https://doi.org/10.1007/S00778-019-00574-9}
  {\path{doi:10.1007/S00778-019-00574-9}}.

\bibitem{TaoBSBSPLPTD2021}
Yaguang Tao, Alan Both, Rodrigo~I. Silveira, Kevin Buchin, Stef Sijben, Ross~S.
  Purves, Patrick Laube, Dongliang Peng, Kevin Toohey, and Matt Duckham.
\newblock A comparative analysis of trajectory similarity measures.
\newblock {\em GIScience \& Remote Sensing}, 58(5):643--669, 2021.
\newblock \href {https://doi.org/10.1080/15481603.2021.1908927}
  {\path{doi:10.1080/15481603.2021.1908927}}.

\end{thebibliography}

    \appendix
    \section{Proof of \autoref{thm:apx-function-properties}}
\label{app:apx-function-properties-proof}

\settheoremcounter{thm:apx-function-properties}
\begin{proposition}
    For every border $\mathcal{B}$ the function $\mathcal{B}.\mathrm{apx}$ computed by our algorithm is continuous.
    Moreover, all $t,t' \in \mathrm{dom}(\mathcal{B})$ with $t < t'$ satisfy $\mathcal{B}.\mathrm{apx}(t) + \mathrm{opt}_{\| \cdot \|}(\mathcal{B}(t), \mathcal{B}(t')) \geq \mathcal{B}.\mathrm{apx}(t')$.
\end{proposition}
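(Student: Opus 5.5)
We argue by induction over the order in which \autoref{alg:cdtw-apx} visits the layers of cells, proving both properties together for every border. For the base case, the borders initialised in lines~1--2 satisfy $\mathcal{B}.\mathrm{apx} = \mathrm{opt}_{0,\mathcal{B}}$ along a straight axis-parallel segment. This is the integral of a continuous integrand, hence continuous. The second property holds there with equality by additivity of the integral, since the only $(\mathcal{B}(t),\mathcal{B}(t'))$-path is the straight one. Input borders of a cell simply copy the function of the corresponding output border of a neighbouring cell (line~5), so it suffices to treat an output border $\mathcal{B}$ of a cell $C$, assuming both properties for the input borders $\mathcal{S},\mathcal{W}$ of $C$.

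\textbf{Continuity.} By construction, $\mathcal{B}.\mathrm{apx}$ is the pointwise minimum of at most four functions: the corner function from lines~6--7, $\mathrm{apx}_{s^*}$ from $\propagatefromadj$, and possibly $\mathrm{apx}_{\leq s^*}$ from $\propagatefromopp$. A finite minimum of continuous functions is continuous, so we show each is continuous. The first two have the form $t \mapsto c + \mathrm{opt}_{\| \cdot \|}(x,\mathcal{B}(t))$ for a fixed point $x$. Continuity in $t$ follows from \autoref{thm:optimal-paths} together with \autoref{thm:converging-costs}: as $t_k \to t$, the explicitly described optimal paths (bending points depending continuously on the endpoints, with case switches between a and b occurring only where the descriptions agree) converge, so their costs converge.

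For $\mathrm{apx}_{\leq s^*}(t) = \min_{s \in [s_{\min},\min\{s^*,t\}]} F(s,t)$ with $F(s,t) := \mathcal{A}.\mathrm{apx}(s) + \mathrm{opt}_{\mathcal{A},\mathcal{B}}(s,t)$, we need joint continuity of $F$ on the compact set $\{(s,t) : s \leq t\}$. This follows from the induction hypothesis for $\mathcal{A}.\mathrm{apx}$ and from the same convergence argument for $\mathrm{opt}_{\mathcal{A},\mathcal{B}}$ in both arguments. Since the constraint set $[s_{\min},\min\{s^*,t\}]$ varies continuously with $t$ and is nonempty where $\mathrm{apx}_{\leq s^*}$ is defined, a standard Berge maximum-theorem argument gives continuity. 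On the part of $\mathrm{dom}(\mathcal{B})$ below $s_{\min}$, where the minimum is empty and the value is $+\infty$, the update never applies, so no discontinuity arises. I expect this to be the main obstacle: justifying continuity of $\mathrm{opt}_{\mathcal{A},\mathcal{B}}$ near degenerate configurations, e.g.\ $s=t$ or a switch between cases a and b of \autoref{thm:optimal-paths}. I would handle it by exhibiting, for any limit path, converging approximating paths. Continuity of the cost of optimal paths needs both an upper bound, via converging feasible paths, and a lower bound, via subsequences of optimal paths converging to a feasible path, both using \autoref{thm:converging-costs}.

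\textbf{Second property.} Fix $t<t'$. Each candidate function $g$ in the minimum defining $\mathcal{B}.\mathrm{apx}(t)$ has the form $g(t) = c + \mathrm{opt}_{\| \cdot \|}(x,\mathcal{B}(t))$ for some starting point $x$ with $x \preceq \mathcal{B}(t)$, where for $\mathrm{apx}_{\leq s^*}$ we use the minimising $s$. Concatenating an optimal $(x,\mathcal{B}(t))$-path with an optimal $(\mathcal{B}(t),\mathcal{B}(t'))$-path gives an $(x,\mathcal{B}(t'))$-path, so
\[
    g(t) + \mathrm{opt}_{\| \cdot \|}(\mathcal{B}(t),\mathcal{B}(t')) \geq c + \mathrm{opt}_{\| \cdot \|}(x,\mathcal{B}(t')) \geq \mathcal{B}.\mathrm{apx}(t')\text{.}
\]
The last inequality holds because the same $x$ is admissible for $t'$: for $\propagatefromopp$, $s \leq \min\{s^*,t\}$ implies $s \leq \min\{s^*,t'\}$. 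Taking the minimum over candidates yields the claim. Note that this step needs no induction hypothesis, since all candidate starting points lie on input borders or at cell corners of $C$.
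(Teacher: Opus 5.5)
Your proposal is correct. For the second property you argue as the paper does: every function entering the lower envelope has the form $c + \mathrm{opt}_{\| \cdot \|}(x,\mathcal{B}(t))$ with a starting point that stays admissible for $t'$, so concatenation along $\mathcal{B}$ followed by taking the minimum gives the claim. The corner and adjoining functions are also handled as in the paper.

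For the continuity of $\mathrm{apx}_{\leq s^*}$, however, you take a genuinely different route. The paper proves the four one-sided inequalities by explicit path surgery. Extending best paths along $\mathcal{B}$ gives lower semicontinuity from the left and upper semicontinuity from the right, essentially via the second property. The other two directions are obtained by truncating $\gamma^*_{t_0}$ and appending straight segments, or by shifting a straight path. This requires case distinctions on $s^*_{t_0} < t_0$ versus $s^*_{t_0} = t_0$, and on $s^*_{t'} < t_0$ versus $s^*_{t'} \geq t_0$.

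You instead reduce everything to joint continuity of $F(s,t) = \mathcal{A}.\mathrm{apx}(s) + \mathrm{opt}_{\mathcal{A},\mathcal{B}}(s,t)$ on $\{s \leq t\}$. You then apply the maximum theorem to the continuous, compact-valued constraint $t \mapsto [s_{\min},\min\{s^*,t\}]$. Compactness automatically controls the varying minimisers $s^*_{t'}$ as $t' \searrow t_0$, which is exactly the most delicate step of the paper's argument. What each route buys: yours is shorter, modular and case-free. The paper's route never needs continuity of $\mathrm{opt}_{\mathcal{A},\mathcal{B}}$ in both arguments simultaneously, only convergence of specifically constructed paths.

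The obstacle you flag is milder than you expect. \autoref{thm:optimal-paths} supplies a single line $\ell$ for the whole cell, so the explicit optimal paths depend continuously on both endpoints. The switch between cases a and b occurs only when $\ell$ meets the bounding box in the bending corner, where both descriptions coincide, and for $s \to t$ the paths collapse to the straight one. Since these paths are optimal for every endpoint pair, one application of \autoref{thm:converging-costs} yields $\mathrm{opt}(x_k,y_k) \to \mathrm{opt}(x,y)$ directly. No separate compactness argument over arbitrary optimal paths is needed for the lower bound.

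Minor points:
\begin{itemize}
\item The envelope consists of at most three functions, not four.
\item Opposing borders share their domain, so there is no part of $\mathrm{dom}(\mathcal{B})$ below $s_{\min}$.
\item Picking a minimising $s$ in the second property does use continuity of $\mathcal{A}.\mathrm{apx}$, although a near-minimiser would remove even that dependence.
\end{itemize}
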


\begin{proof}
    We start with the second property.
    The initialisations of $\mathcal{B}.\mathrm{apx}$ in lines~1--2 and lines~6--7 of \autoref{alg:cdtw-apx} yield $\mathcal{B}.\mathrm{apx}(t) + \mathrm{opt}_{\| \cdot \|}(\mathcal{B}(t), \mathcal{B}(t')) = \mathcal{B}.\mathrm{apx}(t')$, as they use straight paths on $\mathcal{B}$.
    In $\mathcal{B}.\propagatefromadj$ from \autoref{alg:cdtw-subroutines}, where $\mathcal{A} := \mathcal{B}.\mathrm{adj}$, we have the bound
    \begin{align*}
        \mathrm{apx}_{s^*}(t) + \mathrm{opt}_{\| \cdot \|}(\mathcal{B}(t), \mathcal{B}(t'))
        &= \mathcal{A}.\mathrm{apx}(s^*) + \mathrm{opt}_{\mathcal{A},\mathcal{B}}(s^*,t) + \mathrm{opt}_{\| \cdot \|}(\mathcal{B}(t), \mathcal{B}(t')) \\
        &\geq \mathcal{A}.\mathrm{apx}(s^*) + \mathrm{opt}_{\mathcal{A},\mathcal{B}}(s^*,t')
        = \mathrm{apx}_{s^*}(t')
        \text{.}
    \end{align*}
    This is because we can extend an optimal $(\mathcal{A}(s^*),\mathcal{B}(t))$-path on $\mathcal{B}$ to create an $(\mathcal{A}(s^*),\mathcal{B}(t'))$-path whose cost is lower bounded by the cost of an optimal $(\mathcal{A}(s^*),\mathcal{B}(t'))$-path.
    For the same reason, we moreover have $\mathrm{apx}_{\leq s^*}(t) + \mathrm{opt}_{\| \cdot \|}(\mathcal{B}(t), \mathcal{B}(t')) \geq \mathrm{apx}_{\leq s^*}(t')$ in $\mathcal{B}.\propagatefromopp$.
    The subroutines compute $\mathcal{B}.\mathrm{apx}$ as the lower envelope of its initialisation with $\mathrm{apx}_{s^*}$ and $\mathrm{apx}_{\leq s^*}$, and the lower envelope retains the desired property that is shared by these three functions.

    To show continuity, we proceed similarly within an inductive proof.
    The base case in lines~1--2 of \autoref{alg:cdtw-apx} gives continuous cost functions by \autoref{thm:converging-costs} since the straight paths on~$\mathcal{B}$ converge.
    Consequently, lines~6--7 also initialise $\mathcal{B}.\mathrm{apx}$ as a continuous function.
    We further have that $\mathrm{apx}_{s^*}$ is continuous due to \autoref{thm:converging-costs} and the converging paths from \autoref{thm:optimal-paths}.
    It now remains to show that $\mathrm{apx}_{\leq s^*}$ is continuous as well.
    Then it follows that $\mathcal{B}.\mathrm{apx}$ is continuous as the lower envelope of a finite number of continuous functions on the same domain.
    We show $\limsup_{t \to t_0} \mathrm{apx}_{\leq s^*}(t) \leq \allowbreak \mathrm{apx}_{\leq s^*}(t_0) \leq \liminf_{t \to t_0} \mathrm{apx}_{\leq s^*}(t)$ for all $t_0 \in \mathrm{dom}(\mathcal{B})$, which implies continuity as required.

    Consider $\mathcal{A} := \mathcal{B}.\mathrm{opp}$ from now on, and assume inductively that $\mathcal{A}.\mathrm{apx}$ is continuous.
    For~each $t \in \mathrm{dom}(\mathcal{B})$ pick an $s^*_t \leq s^*$ that satisfies $\mathrm{apx}_{\leq s^*}(t) = \mathcal{A}.\mathrm{apx}(s^*_t) + \mathrm{opt}_{\mathcal{A},\mathcal{B}}(s^*_t,t)$, and let $\gamma^*_t$ be a corresponding optimal $(\mathcal{A}(s^*_t),\mathcal{B}(t))$-path.%
    \footnote{By choice of $s^*_t$, we have that $\gamma^*_t$ is a best path from $\mathcal{A}$ to $\mathcal{B}(t)$ as introduced in \autoref{def:best-path}. The existence of these values and paths follows from the continuity of the function $\mathcal{A}.\mathrm{apx}$, see the proof of \autoref{thm:best-paths}.}
    As above, we extend the paths $\gamma^*_t$ for $t \nearrow t_0$ to create $(\mathcal{A}(s^*_t),\mathcal{B}(t_0))$-paths, and we extend $\gamma^*_{t_0}$ up to $\mathcal{B}(t')$ for $t' \searrow t_0$.
    Using \autoref{thm:converging-costs} then yields
    \[
        \limsup_{\smash{t' \searrow t_0}} \, \mathrm{apx}_{\leq s^*}(t')
        \leq \mathrm{apx}_{\leq s^*}(t_0)
        \leq \liminf_{t \nearrow t_0\xmathstrut[0]{0.13}} \, \mathrm{apx}_{\leq s^*}(t)
        \text{,}
    \]
    where on the left side we can use it for the extended paths, which converge to $\gamma^*_{t_0}$, while on the right side we use it only for the suffix paths from $\mathcal{B}(t)$ to $\mathcal{B}(t_0)$, which vanish in the limit.

    In case of $s^*_{t_0} < t_0$, we can apply a different transformation to $\gamma^*_{t_0}$ for $t \nearrow t_0$:
    By concatenating straight paths onto the prefix paths of $\gamma^*_{t_0}$ up to the final point sharing a coordinate with each $\mathcal{B}(t)$, we obtain $(\mathcal{A}(s^*_{t_0}),\mathcal{B}(t))$-paths converging to $\gamma^*_{t_0}$.
    We then have $\limsup_{t \nearrow t_0} \mathrm{apx}_{\leq s^*}(t) \leq \mathrm{apx}_{\leq s^*}(t_0)$ due to \autoref{thm:converging-costs}.
    Applying this transformation to the paths $\gamma^*_{t'}$ for all $t' \searrow t_0$ with $s^*_{t'} < t_0$ yields $(\mathcal{A}(s^*_{t'}),\mathcal{B}(t_0))$-paths and $\mathrm{apx}_{\leq s^*}(t_0) \leq \liminf_{t' \searrow t_0 \;\! : \;\! s^*_{\smash{t'}\vphantom{t}} < t_0} \mathrm{apx}_{\leq s^*}(t')$ via \autoref{thm:converging-costs}, as now w.l.o.g.\ all the suffix paths are replaced with and converge to the same straight path by \autoref{thm:optimal-paths}.

    For all $t' \searrow t_0$ with $s^*_{t'} \geq t_0$ it follows $s^*_{t'} \searrow t_0$, so that the related paths $\gamma^*_{t'}$ converge to the straight $(\mathcal{A}(t_0),\mathcal{B}(t_0))$-path.
    This implies $\mathrm{apx}_{\leq s^*}(t_0) \leq \liminf_{t' \searrow t_0 \;\! : \;\! s^*_{\smash{t'}\vphantom{t}} \geq t_0} \mathrm{apx}_{\leq s^*}(t')$ by \autoref{thm:converging-costs} and continuity of $\mathcal{A}.\mathrm{apx}$.
    Finally, it remains to consider $t \nearrow t_0$ in case of $s^*_{t_0} = t_0$.
    We then have that $\gamma^*_{t_0}$ is a straight path.
    By shifting it to the left, we get $(\mathcal{A}(t),\mathcal{B}(t))$-paths that converge to it.
    Therefore, \autoref{thm:converging-costs} and continuity of $\mathcal{A}.\mathrm{apx}$ say $\limsup_{t \nearrow t_0} \mathrm{apx}_{\leq s^*}(t) \leq \mathrm{apx}_{\leq s^*}(t_0)$ here~too.
    \qed
\end{proof}

\section{A 3D Counterexample to \autoref{thm:optimal-paths}}
\label{app:3d-example}

Any line $\ell$ that induces optimal paths as in \autoref{thm:optimal-paths} is called a \emph{valley} of the cell $C$ under $\tnorm$.
Formally, for every valley point $z \in \ell \cap C$ the function $t \mapsto \| P_{\| \cdot \|}(z_1 + t) - Q_{\| \cdot \|}(z_2 - t) \|$ needs to be non-increasing on $\mathbb{R}_{\leq 0}$ and non-decreasing on $\mathbb{R}_{\geq 0}$, see \cite[Definition~4]{BuchiBSW2026}.
Intuitively, when approaching $\ell$ along any line of slope $-1$, the height of the cell terrain becomes smaller.

In 2D there always exists a valley of positive slope \cite[Theorem~8]{BuchiBSW2026}, but this does not hold in 3D already under the $1$-norm $\tnorm[1]$.
For example, consider the parameter space cell of single-segment curves $P := \langle (7,18,11)^\mathsf{T}, (28,25,25)^\mathsf{T} \rangle$ and $Q := \langle (12,36,18)^\mathsf{T}, (12,0,24)^\mathsf{T} \rangle$.
As seen in \autoref{fig:3d-example}, minima on lines of slope $-1$ are then attained on three line segments instead of on a single line.
Two of these segments even have negative slope, so monotone paths cannot travel on them.

\begin{figure}[H]
    \centering
    \includegraphics[width=0.75\linewidth]{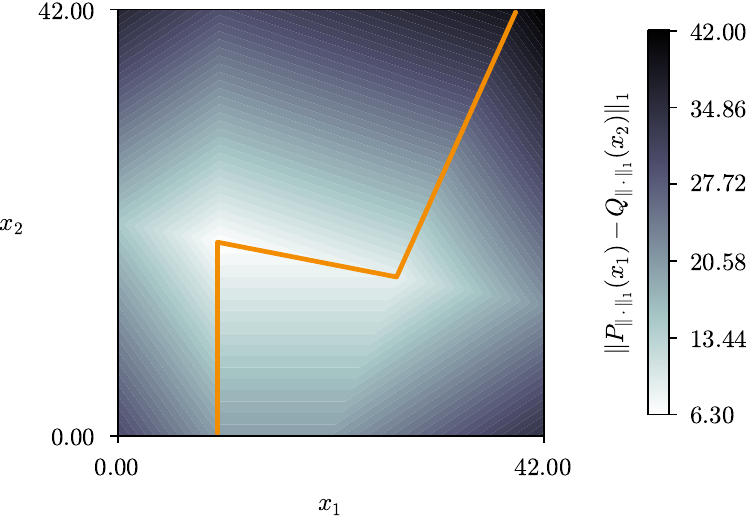}
    \caption{Parameter space cell of curve segments in $(\mathbb{R}^3, \tnorm[1])$ without a genuine valley}
    \label{fig:3d-example}
\end{figure}

It is not clear whether it is possible to extend our results to this kind of setting.
One approach might be to subdivide the cell into parts such that every part has a valley.
However, these~parts would not be rectangles anymore since we would need to use lines of slope $-1$ for the subdivision.
This could complicate propagations between borders.
Moreover, optimal paths that are induced by negatively sloped valleys are not as straightforward as those from \autoref{thm:optimal-paths}, see \cite[Lemma~27]{Brank2022}.
We leave such additional obstacles that appear in dimensions beyond 2D for future work.

\end{document}